\newcommand{\RE}{\mathbb{R}}
\newcommand{\eps}{\varepsilon}
\newcommand{\half}[1]{\frac{#1}{2}}
\newcommand{\inv}[1]{\frac{1}{#1}}
\newcommand{\radius}{\mathrm{radius}}
\newcommand{\diam}{\mathrm{diam}}
\newcommand{\conv}{\mathrm{conv}}
\newcommand{\Sh}{\mathrm{shadow}}
\newcommand{\normals}{\mathrm{normals}}
\newcommand{\width}{\mathrm{width}}
\newcommand{\ray}{\mathrm{ray}}
\newtheorem{theorem}{Theorem}[section]
\newtheorem{lemma}[theorem]{Lemma}
\begin{document}
\title{Near-Optimal $\eps$-Kernel Construction and\protect\\ Related Problems}

\author{%
	Sunil Arya\thanks{Research supported by the Research Grants Council of Hong Kong, China under project number 16200014.}\\
		Department of Computer Science and Engineering \\
		The Hong Kong University of 
        Science and Technology \\
 		Clear Water Bay, Kowloon, Hong Kong\\
		arya@cse.ust.hk \\
		\and
	Guilherme D. da Fonseca\\
		Universit\'{e} Clermont Auvergne and\\
		LIMOS\\
 		Clermont-Ferrand, France\\
		fonseca@isima.fr
		\and
	David M. Mount\thanks{Research supported by NSF grant CCF--1618866.}\\
		Department of Computer Science and \\
		Institute for Advanced Computer Studies \\
		University of Maryland \\
 		College Park, Maryland 20742 \\
		mount@cs.umd.edu \\
}

\date{}

\maketitle

\begin{abstract}
The computation of (i) $\eps$-kernels, (ii) approximate diameter, and (iii) approximate  bichromatic closest pair are fundamental problems in geometric approximation. In this paper, we describe new algorithms that offer significant improvements to their running times. In each case the input is a set of $n$ points in $\RE^d$ for a constant dimension $d \geq 3$ and an approximation parameter $\eps > 0$. We reduce the respective running times\\
(i) from $O((n + 1/\eps^{d-2})\log\inv\eps)$ to $O(n \log\inv\eps + 1/\eps^{(d-1)/2+\alpha})$,\\
(ii) from $O((n + 1/\eps^{d-2})\log\inv\eps)$ to $O(n \log\inv\eps +  1/\eps^{(d-1)/2+\alpha})$, and\\
(iii) from $O(n / \eps^{d/3})$ to $O(n / \eps^{d/4+\alpha}),$\\
for an arbitrarily small constant $\alpha > 0$.
Result (i) is nearly optimal since the size of the output $\eps$-kernel is $\Theta(1/\eps^{(d-1)/2})$ in the worst case.

These results are all based on an efficient decomposition of a convex body using a hierarchy of Macbeath regions, and contrast to previous solutions that decompose space using quadtrees and grids. By further application of these techniques, we also show that it is possible to obtain near-optimal preprocessing time for the most efficient data structures to approximately answer queries for (iv) nearest-neighbor searching, (v) directional width, and (vi) polytope membership.
\end{abstract}

\section{Introduction} \label{s:intro}

In this paper we present new faster algorithms to several fundamental geometric approximation problems involving point sets in $d$-dimensional space. In particular, we present approximation algorithms for $\eps$-kernels, diameter, bichromatic closest pair, and the minimum bottleneck spanning tree. Our results arise from a recently developed shape-sensitive approach to approximating convex bodies, which is based on the classical concept of Macbeath regions. This approach has been applied to computing area-sensitive bounds for polytope approximation~\cite{AFM12b}, polytope approximations with low combinatorial complexity~\cite{AFM16}, answering approximate polytope-membership queries~\cite{AFM17a}, and approximate nearest-neighbor searching~\cite{AFM17a}. The results of \cite{AFM17a} demonstrated the existence of data structures for these query problems but did not discuss preprocessing in detail. We complete the story by presenting efficient algorithms for building data structures for three related queries: approximate polytope membership,  approximate directional width, and approximate nearest-neighbors.

Throughout, we assume that the dimension $d$ is a constant. Our running times will often involve expressions of the form $1/\eps^\alpha$. In such cases, $\alpha > 0$ is constant that can be made arbitrarily small. The approximation parameter $\eps$ is treated as an asymptotic variable that approaches $0$. We assume throughout that $\eps < 1$, which guarantees that $\log\inv\eps > 0$.

In Section~\ref{ss:static}, we present our results for $\eps$-kernels, diameter, bichromatic closest pair, and minimum bottleneck tree. In Section~\ref{ss:ds}, we present our results for the data structure problems. In Section~\ref{ss:techniques}, we give an overview of the techniques used.

Concurrently and independently, Timothy Chan has reported complexity bounds that are very similar to our results~\cite{Cha17}. Remarkably, the computational techniques seem to be very different, based on Chebyshev polynomials.

\subsection{Static Results} \label{ss:static}

\subparagraph{Kernel.}
Given a set $S$ of $n$ points in $\RE^d$ and an approximation parameter $\eps>0$, an \emph{$\eps$-coreset} is an (ideally small) subset of $S$ that approximates some measure over $S$ (see~\cite{AHV05} for a survey). Given a nonzero vector $v \in \RE^d$, the \emph{directional width} of $S$ in direction $v$, $\width_v(S)$ is the minimum distance between two hyperplanes that enclose $S$ and are orthogonal to $v$. A \emph{coreset for the directional width} (also known as an \emph{$\eps$-kernel} and as a \emph{coreset for the extent measure}) is a subset $Q \subseteq S$ such that $\width_v(Q) \geq (1-\eps)\, \width_v(S)$, for all $v \in \RE^d$. Kernels are among the most fundamental constructions in geometric approximation, playing a role similar to that of convex hulls in exact computations. Kernels have been used to obtain approximation algorithms to several problems such as diameter, minimum width, convex hull volume, minimum enclosing cylinder, minimum enclosing annulus, and minimum-width cylindrical shell~\cite{AHV04,AHV05}.

The concept of $\eps$-kernels was introduced by Agarwal et al.~\cite{AHV04}. The existence of $\eps$-kernels with $O(1/\eps^{(d-1)/2})$ points is implied in the works of Dudley~\cite{Dud74} and Bronshteyn and Ivanov~\cite{BrI76}, and this is known to be optimal in the worst case. Agarwal et al.~\cite{AHV04} demonstrated how to compute such a kernel in $O(n + 1/\eps^{3(d-1)/2})$ time, which reduces to $O(n)$ when $n = \Omega(1/\eps^{3(d-1)/2})$. While less succinct $\eps$-kernels with $O(1/\eps^{d-1})$ points can be constructed in time $O(n)$ time for all $n$~\cite{AHV04,BFP82}, no linear-time algorithm is known to build an $\eps$-kernel of optimal size. Hereafter, we use the term $\eps$-kernel to refer exclusively to an $\eps$-kernel of size $O(1/\eps^{(d-1)/2})$.

Chan~\cite{Cha06} showed that an $\eps$-kernel can be constructed in $O((n + 1/\eps^{d-2})\log \inv{\eps})$ time, which is nearly linear when $n = \Omega(1/\eps^{d-2})$. He posed the open problem of obtaining a faster algorithm.
A decade later, Arya and Chan~\cite{ArC14} showed how to build an $\eps$-kernel in roughly $O(n + \sqrt{n}/\eps^{d/2})$ time using discrete Voronoi diagrams. In this paper, we attain the following near-optimal construction time.

\begin{theorem} \label{thm:kernel}
Given a set $S$ of $n$ points in $\RE^d$ and an approximation parameter $\eps > 0$, it is possible to construct an $\eps$-kernel of $S$ with $O(1/\eps^{(d-1)/2})$ points in $O(n \log\inv\eps + 1/\eps^{(d-1)/2+\alpha})$ time, where $\alpha$ is an arbitrarily small positive constant.
\end{theorem}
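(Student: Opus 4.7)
The plan has three phases: an affine normalization, a grid-based coarsening that reduces the input to a point set of size polynomial in $1/\eps$, and a Macbeath-hierarchy-based extraction that produces an optimal-size kernel in time essentially proportional to its output size.

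In the first phase, following Agarwal, Har-Peled, and Varadarajan~\cite{AHV04}, I would compute in $O(n)$ time an affine map $T$ such that $T(\conv(S))$ is sandwiched between two concentric balls whose radii differ by only a constant factor. Since $\eps$-kernels are preserved by non-singular affine maps, the remainder of the algorithm assumes $K = \conv(S)$ is fat. In the second phase, I would apply a grid-based filter along the lines of Chan~\cite{Cha06}: the sphere of directions is overlaid with a grid of resolution $\Theta(\sqrt{\eps})$ and, using an $O(\log\inv\eps)$-time cell-location routine for each input point, I retain one extremal point per nonempty cell. This produces in $O(n\log\inv\eps)$ time a subset $S^{\star}\subseteq S$ of size $O(1/\eps^{d-1})$ that is a $(c\eps)$-kernel for a suitable absolute constant $c$, so $\conv(S^{\star})$ approximates $K$ to within a constant multiple of $\eps$.

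The third and decisive phase extracts from $S^{\star}$ an $\eps$-kernel of the optimal size $O(1/\eps^{(d-1)/2})$ in $O(1/\eps^{(d-1)/2+\alpha})$ time. For this I would invoke the hierarchical Macbeath decomposition of~\cite{AFM17a}, which covers the $O(\eps)$-cap layer of $\conv(S^{\star})$ by $O(1/\eps^{(d-1)/2})$ constant-complexity regions with the property that any two points within a single region certify the same directional widths up to an additive $O(\eps)\cdot\width_v(K)$ error. Selecting one representative of $S^{\star}$ per region then yields the desired kernel. I would build the hierarchy level by level: coarse approximations $P_0\supseteq P_1\supseteq\cdots\supseteq P_{\log\inv\eps}$ of $\conv(S^{\star})$, where $P_i$ has $O(2^{i(d-1)/2})$ facets and is a $2^{-i}$-approximation. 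At each level one walks the Macbeath regions of $P_i$ that meet the newly exposed $2^{-i}$-cap layer, pulls in the relevant representatives via the approximate polytope-membership oracle of~\cite{AFM17a}, and splits each parent region into its $O(1)$ children. Summing over levels yields a geometric series totaling $O(1/\eps^{(d-1)/2+\alpha})$, where the $\alpha$ absorbs the constant-factor overhead of each membership query and auxiliary search.

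The main obstacle is converting the \emph{existence} result of~\cite{AFM17a}, which addresses query complexity and storage, into an \emph{explicit construction} that meets this sublinear (in $|S^{\star}|$) time bound. Specifically, I will have to argue that each level $P_i$ can be built from $P_{i-1}$ by examining only the $O(2^{i(d-1)/2})$ Macbeath regions in the newly exposed cap layer, rather than scanning the full vertex set of $S^{\star}$ or rebuilding the hierarchy from scratch; and that a point of $S^{\star}$ which falls deep inside $P_i$ at some level can be discarded permanently. The $\eps^{-\alpha}$ slack in the final bound absorbs the $(\log\inv\eps)^{O(1)}$ overheads from these recursive membership queries and from the localized rebuilding, mirroring the slack that appears in the space--time bounds of~\cite{AFM17a}.
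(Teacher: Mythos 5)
Your high-level outline (fatten, coarsen with a Chan-style grid filter, then exploit a Macbeath hierarchy) is in the right family, and the first two phases are essentially what the paper does. But your third phase has a genuine gap, and it is precisely at the point you flag as ``the main obstacle.''

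You propose to build the Macbeath-ellipsoid hierarchy of \cite{AFM17a} all the way down to resolution $\eps$ and then pick one representative of $S^\star$ per leaf region. The problem is the cost of building that hierarchy. Even after coarsening to $|S^\star| = O(1/\eps^{d-1})$ (or, better, to a halfspace representation with $O(1/\eps^{(d-1)/2})$ facets as in Lemma~\ref{lem:conversion}), constructing the hierarchy to depth $\delta$ costs $O(m/\delta^{d-1} + 1/\delta^{3(d-1)/2})$ where $m$ is the number of defining halfspaces (Lemma~\ref{lem:hierarchy}): one must shoot $O(1/\delta^{d-1})$ rays, compute a candidate Macbeath region for each in $O(m)$ time, and run a greedy disjointness test of $O(1/\delta^{d-1})\times O(1/\delta^{(d-1)/2})$ pairs. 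Setting $\delta = \eps$ and $m = O(1/\eps^{(d-1)/2})$ gives $O(1/\eps^{3(d-1)/2})$ --- not $O(1/\eps^{(d-1)/2+\alpha})$. Your suggestion that a ``level-by-level localized rebuilding'' could bring this down to the target, with $(\log\inv\eps)^{O(1)}$ overhead absorbed by $\eps^{-\alpha}$, is exactly the claim that needs a proof, and no such localized construction is given; the polylog slack does not absorb a polynomial-in-$1/\eps$ gap.

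The paper resolves this differently and it is the key idea you are missing: it never builds the hierarchy to resolution $\eps$. Instead it fixes an intermediate $\delta = \eps^{1/3}$, builds the (cheap) hierarchy only to depth $\delta$ in $O(1/\eps^{(d-1)/2})$ time, classifies the $n$ points into the $O(1/\delta^{(d-1)/2})$ leaf shadows in $O(n\log\inv\eps)$ time, and then \emph{recursively} computes a full $(\Theta(\eps/\delta))$-kernel inside each shadow (not a single representative --- a whole kernel, because the shadow has width $\Theta(\delta)$ in its normal directions, so an $(\eps/\delta)$-kernel there yields width error $O(\eps)$). This turns an algorithm with exponent $t$ in $1/\eps^{t(d-1)}$ into one with exponent $(4t+1)/6$; starting from Chan's $t=1$ and bootstrapping $O(\log\inv\alpha)$ times drives $t$ to $1/2+\alpha/(d-1)$. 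Without this recursive/bootstrapping mechanism, the time bound in Theorem~\ref{thm:kernel} does not follow from your sketch.
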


We note that when $n = o(1/\eps^{(d-1)/2})$, the input $S$ is already an $\eps$-kernel and therefore an $O(n)$ time algorithm is trivial. Because the worst-case output size is $O(1/\eps^{(d-1)/2})$, we may assume that $n$ is at least this large, for otherwise we can simply take $S$ itself to be the kernel. Since $1/\eps^{\alpha}$ dominates $\log\inv\eps$, the above running time can be expressed as $O(n/\eps^\alpha)$, which is nearly linear given that $\alpha$ can be made arbitrarily small. 

\subparagraph{Diameter.}
An important application of $\eps$-kernels is to approximate the diameter of a point set. Given $n$ data points, the \emph{diameter} is defined to be the maximum distance between any two data points. An \emph{$\eps$-approximation} of the diameter is a pair of points whose distance is at least $(1-\eps)$ times the exact diameter. There are multiple algorithms to approximate the diameter~\cite{AHV04,AMS92,ArC14,BaH01,Cha06}. The fastest running times are $O((n + 1/\eps^{d-2})\log \inv{\eps})$~\cite{Cha06} and roughly $O(n + \sqrt{n}/\eps^{d/2})$~\cite{ArC14}. The algorithm from~\cite{Cha06} essentially computes an $\eps$-kernel $Q$ and then determines the maximum value of $\width_v(Q)$ among a set of $k = O(1/\eps^{(d-1)/2})$ directions $v$ by brute force~\cite{AHV04}. Discrete Voronoi diagrams~\cite{ArC14} permit this computation in roughly $O(n + \sqrt{n}/\eps^{d/2})$ time. Therefore, combining the kernel construction of Theorem~\ref{thm:kernel} with discrete Voronoi diagrams~\cite{ArC14}, we reduce $n$ to $O(1/\eps^{(d-1)/2})$ and obtain an algorithm to $\eps$-approximate the diameter in roughly $O(n + 1/\eps^{3d/4})$ time. However, we show that it is possible to obtain a much faster algorithm, as presented in the following theorem.

\begin{theorem} \label{thm:diameter}
Given a set $S$ of $n$ points in $\RE^d$ and an approximation parameter $\eps > 0$, it is possible to compute an $\eps$-approximation to the diameter of $S$ in $O(n \log\inv\eps + 1/\eps^{(d-1)/2+\alpha})$ time.
\end{theorem}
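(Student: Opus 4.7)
The plan is to apply Theorem~\ref{thm:kernel} to reduce $S$ to a kernel of near-optimal size and then approximate the diameter of the kernel itself in sub-quadratic time using the approximate directional width data structure described later in Section~\ref{ss:ds}. Fix a sufficiently large constant $c$ and let $\eps' = \eps/c$.

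First, I invoke Theorem~\ref{thm:kernel} with parameter $\eps'$ to compute an $\eps'$-kernel $Q \subseteq S$ of size $m = O(1/\eps^{(d-1)/2})$ in $O(n \log \inv\eps + 1/\eps^{(d-1)/2+\alpha})$ time. The defining property $\width_v(Q) \geq (1-\eps')\,\width_v(S)$ implies that the diameter $D_Q$ of $Q$ satisfies $D_Q \geq (1-\eps')\,\diam(S)$, so it suffices to $(1-\eps')$-approximate $D_Q$ within the remaining budget. The naive $O(m^2) = O(1/\eps^{d-1})$ enumeration is too slow, so a more structured approach is needed.

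Second, I reduce the diameter of $Q$ to a small collection of directional width queries. Let $v^*$ be a unit vector achieving $\width_{v^*}(Q) = D_Q$, attained by a diametral pair $(p^*,q^*)$. For every unit $v$,
\[
\width_v(Q) \;\geq\; \langle p^*-q^*, v\rangle \;=\; D_Q\,\langle v^*, v\rangle,
\]
so every unit $v$ within angle $\Theta(\sqrt{\eps'})$ of $v^*$ satisfies $\width_v(Q) \geq (1-\eps')\,D_Q$. Hence a $\Theta(\sqrt{\eps'})$-net $N$ on the unit sphere, of size $|N| = O(1/\eps^{(d-1)/2})$, necessarily contains such a direction. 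I then process $N$ using the approximate directional width data structure of Section~\ref{ss:ds}, built on $Q$ with parameter $\eps'$: its preprocessing takes $O(m + 1/\eps^{(d-1)/2+\alpha}) = O(1/\eps^{(d-1)/2+\alpha})$ time, and each query returns, in time $O(1/\eps^\alpha)$, a pair of points of $Q$ whose projections onto $v$ differ by at least $(1-\eps')\,\width_v(Q)$. Querying all $|N|$ directions costs $O(1/\eps^{(d-1)/2+\alpha})$, and returning the best pair seen yields a $(1-\eps')^3$-approximation of $\diam(S)$; choosing $c \geq 3$ makes this a $(1-\eps)$-approximation, and the three steps together fit within the claimed running time.

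The main obstacle is calibrating the granularity of the direction net. A Lipschitz bound on the width function gives only $\width_v(Q) \geq D_Q - \theta\cdot\diam(Q)$, forcing $\theta = \Theta(\eps)$ and a prohibitive $\Theta(1/\eps^{d-1})$ net. It is the quadratic slack $\cos\theta \geq 1-\theta^2/2$ at the maximum of the width function that allows the coarser $\Theta(\sqrt{\eps})$-net, whose size matches the kernel size and so is essential for attaining the $O(1/\eps^{(d-1)/2+\alpha})$ bound. The remaining technical point is verifying that the preprocessing time of the directional width data structure on an input of size $m$ fits the claimed budget, which is handled in Section~\ref{ss:ds}.
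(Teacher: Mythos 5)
Your proposal is correct and follows essentially the same route as the paper: query the approximate directional width structure of Theorem~\ref{thm:widthqueries} over a $\Theta(\sqrt\eps)$-net of $O(1/\eps^{(d-1)/2})$ directions (this is exactly the reduction of Agarwal, Matou\v{s}ek, and Suri that the paper cites, which you re-derive from $\cos\theta\ge 1-\theta^2/2$). The preliminary $\eps'$-kernel step is harmless but redundant, since Theorem~\ref{thm:widthqueries} already achieves preprocessing time $O(n\log\inv\eps + 1/\eps^{(d-1)/2+\alpha})$ on the original $n$-point input.
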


\subparagraph{Bichromatic Closest Pair.}
In the \emph{bichromatic closest pair} (BCP) problem, we are given $n$ points from two sets, designated red and blue, and we want to find the closest red-blue pair. In the \emph{$\eps$-approximate} version, the goal is to find a red-blue pair of points whose distance is at most $(1+\eps)$ times the exact BCP distance. Approximations to the BCP problem were introduced in~\cite{KhM95}, and the most efficient randomized approximation algorithm runs in roughly $O(n/\eps^{d/3})$ expected time~\cite{ArC14}. We present the following result.

\begin{theorem} \label{thm:bcp}
Given $n$ red and blue points in $\RE^d$ and an approximation parameter $\eps > 0$, there is a randomized algorithm that computes an $\eps$-approximation to the bichromatic closest pair in $O(n / \eps^{d/4+\alpha})$ expected time.
\end{theorem}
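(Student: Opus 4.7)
The plan is to reduce bichromatic closest pair (BCP) to a batch of approximate nearest neighbor (ANN) queries, and to obtain the $d/4$ exponent by balancing the preprocessing cost of an ANN structure built on a random sample of the blue points against the cost of scanning the ``surviving'' candidates.

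First, I would compute a constant-factor estimate $\delta$ of the true BCP distance in $O(n \log n)$ time (for instance via a randomized grid or a well-separated pair decomposition), rescale so that $\delta = \Theta(1)$, and superimpose an axis-aligned grid of side length $\Theta(1)$. This localizes the problem: after a standard reduction I may assume that the red and blue points lie in a region of constant diameter and that the true BCP distance is $\Theta(1)$, and it suffices to achieve the claimed running time within each local subproblem.

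Next, I would draw a uniform random sample $B' \subseteq B$ of size $s$ and preprocess $B'$ for $\eps$-approximate nearest neighbor queries using the Macbeath-region-based AVD whose near-optimal preprocessing time $\widetilde{O}(s/\eps^{d/2+\alpha})$ and query time $O(\log\inv\eps)$ are established in Section~\ref{ss:ds}. For each red point $p$, look up its $\eps$-approximate nearest neighbor $b'(p)$ in $B'$, and then exhaustively scan the blue points of $B$ that could still be closer to $p$ than $b'(p)$. A standard random-sampling argument for nearest neighbors bounds the expected number of such surviving candidates per red point by $\widetilde{O}(n/s)$, so the expected total running time is
\[
  O\!\left(\frac{s}{\eps^{d/2+\alpha}} + n\, \log\inv\eps + \frac{n^2}{s}\right).
\]
Balancing $s/\eps^{d/2+\alpha}$ against $n^2/s$ by choosing $s = \Theta(n\,\eps^{d/4+\alpha/2})$ yields the claimed $O(n/\eps^{d/4+\alpha})$ expected bound, with brute force handling the case where $n$ is too small for this choice to be meaningful.

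The principal technical obstacle will be verifying the random-sampling guarantee in the $\eps$-approximate setting: I must show that replacing the exact nearest neighbor of $p$ in $B'$ by an $\eps$-approximate one, which inflates the candidate ball by a factor $1+O(\eps)$, still leaves only $\widetilde{O}(n/s)$ surviving candidates in expectation. A secondary obstacle is to thread the $\alpha$ factor consistently through both the preprocessing analysis of Section~\ref{ss:ds} and the sampling analysis, so that the $s \leftrightarrow n^2/s$ balance produces exactly $\eps^{d/4+\alpha}$ rather than a worse exponent.
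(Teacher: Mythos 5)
Your proposal takes a genuinely different route: you propose a random-sampling reduction (sample $s$ blue points, build one ANN structure on the sample, query each red point and scan survivors), whereas the paper builds a grid with cells of diameter $\Theta(a)$ ($a$ being a constant-factor estimate of the BCP distance), observes that the BCP pair must lie in non-adjacent cells, and then splits cells into \emph{heavy} (more than $1/\eps^{d/4}$ red points) and \emph{light}. For each heavy cell the paper builds a \emph{well-separated} ANN structure (Lemma~\ref{lem:wsann}), and light cells are handled by brute force. The threshold $1/\eps^{d/4}$ is what balances the $1/\eps^{d/2+\alpha}$-per-structure preprocessing (there are at most $n\eps^{d/4}$ heavy cells) against the $1/\eps^{d/4}$-per-query brute-force scan, with no random sampling at all.

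Two gaps in your version are real. First, the obstacle you flag yourself is genuine and not merely technical: if you scan all blue points within the returned ANN distance $\|p-b'(p)\|$, an adversary can place $\Omega(n)$ blue points in the annulus between $d(p,B')$ and $(1+\eps)\,d(p,B')$, so the expected number of survivors is not $\widetilde O(n/s)$. The standard repair is to scan only within radius $\|p-b'(p)\|/(1+\eps) \le d(p,B')$; the surviving set is then at most the set of blue points strictly closer than the sample's exact nearest neighbor, whose expected size is $O(|B|/s)$, and the final reported distance $\min(\text{scanned},\,\|p-b'(p)\|)$ remains a $(1+\eps)$-approximation. You also need to say how these survivors are enumerated without touching all of $B$ (e.g., a secondary grid), which the proposal leaves open.

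Second, and more structurally, your claimed ANN preprocessing/query costs quietly assume the well-separated variant. In your setup the sample $B'$ and the red queries live in the same constant-diameter region, so they are \emph{not} well-separated and you are forced to use the general ANN structure of Theorem~\ref{thm:ann}, whose query time carries an unavoidable $O(\log n)$ term (in addition to the $\eps$-dependent part). Over $n$ queries this contributes $\Theta(n\log n)$, which is not absorbed by $O(n/\eps^{d/4+\alpha})$ when $\log n$ exceeds $1/\eps^{d/4+\alpha}$. The paper sidesteps this precisely because the grid guarantees that, for each blue query point, the red cells it queries against are non-adjacent and hence separated by a constant factor of the cell size, allowing the $O(\log^2\inv\eps)$ query time of the well-separated structure. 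To make your sampling approach go through you would still need some grid-based localization that restores this separation between the sample and the queries.
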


\subparagraph{Euclidean Trees.}
Given a set $S$ of $n$ points in $\RE^d$, a \emph{Euclidean minimum spanning tree} is the spanning tree with vertex set $S$ that minimizes the sum of the edge lengths, while a \emph{Euclidean minimum bottleneck tree} minimizes the maximum edge length. In the approximate version we respectively approximate the sum and the maximum of the edge lengths. A minimum spanning tree is a minimum bottleneck tree (although the converse does not hold). However, an approximation to the minimum spanning tree is not necessarily an approximation to the minimum bottleneck tree. A recent approximation algorithm to the Euclidean minimum spanning tree takes roughly $O(n \log n + n / \eps^2)$ time, regardless of the (constant) dimension~\cite{ArM16}. On the other hand, the fastest algorithm to approximate the minimum bottleneck tree 
takes roughly $O((n \log n)/\eps^{d/3})$ expected time~\cite{ArC14}.The algorithm uses BCP to simultaneously attain an approximation to the minimum bottleneck and the minimum spanning trees.
We prove the following theorem.

\begin{theorem} \label{thm:mst}
Given $n$ points in $\RE^d$ and an approximation parameter $\eps > 0$, there is a randomized algorithm that computes a tree $T$ that is an $\eps$-approximation to both the Euclidean minimum bottleneck and the Euclidean minimum spanning trees in $O((n \log n)/ \eps^{d/4+\alpha})$ expected time.
\end{theorem}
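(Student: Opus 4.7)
The plan is to combine Theorem~\ref{thm:bcp} with the Arya--Chan framework~\cite{ArC14}, which reduces the simultaneous approximation of the Euclidean minimum bottleneck tree and the Euclidean minimum spanning tree to $O(\log n)$ calls to an approximate bichromatic closest pair subroutine. Their scheme is Bor\r{u}vka-style: one maintains a partition of $S$ into components, and in each of $O(\log n)$ phases uses an $\eps$-approximate BCP oracle to find, for each component, a near-shortest edge to another component, then contracts along those edges. The tree formed by the union of all selected merging edges is shown in~\cite{ArC14} to be an $\eps$-approximation to both the MBT and the MST simultaneously.

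The key observation is that this reduction treats the BCP subroutine as a black box: correctness depends only on the guarantee that each chosen edge is within a factor of $(1+\eps)$ of the true closest bichromatic pair between the relevant components, and the per-phase running-time bound depends only on the cost of one BCP call on an input of total size $O(n)$. Consequently, I would substitute the $O(n/\eps^{d/4+\alpha})$ expected-time BCP of Theorem~\ref{thm:bcp} in place of their $O(n/\eps^{d/3})$ subroutine, and sum over the $O(\log n)$ phases to obtain the claimed expected running time of $O((n \log n)/\eps^{d/4+\alpha})$. The approximation guarantees for both bottleneck and spanning weight are inherited directly from~\cite{ArC14}, as they are stated in terms of the approximation ratio of the BCP oracle, which remains $(1+\eps)$.

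The main obstacle is verifying that the reduction is truly modular, and in particular that each phase of the Arya--Chan procedure invokes BCP on a bichromatic instance conforming to the interface of Theorem~\ref{thm:bcp}. The standard way to realize a Bor\r{u}vka phase via BCP is to two-color the current components (either deterministically in a few rounds, or randomly with constant success probability per component), run BCP on the resulting red/blue instance to harvest many merging edges at once, and repeat $O(1)$ times per phase to cover all components; this reduces a phase to $O(1)$ invocations of BCP on inputs of total size $O(n)$. Checking that our BCP algorithm accepts arbitrary bichromatic inputs --- which is its stated setting --- confirms this plugs in cleanly. Given this, the proof reduces to a bookkeeping exercise, with any lower-order polylogarithmic overhead absorbed into the arbitrarily small constant $\alpha$.
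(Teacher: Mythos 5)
Your proposal is correct and follows essentially the same route as the paper: both invoke the reduction of Arya and Chan~\cite[Theorem~4.1]{ArC14}, which reduces simultaneous $\eps$-approximation of the Euclidean MST and minimum bottleneck tree to a collection of BCP instances with total size $O(n \log n)$, and then substitute the improved BCP algorithm of Theorem~\ref{thm:bcp}. The paper simply cites that reduction as a black box, whereas you re-derive its Bor\r{u}vka-style structure and explicitly check modularity, but the substance is identical.
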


\subsection{Data Structure Results} \label{ss:ds}

\subparagraph{Polytope membership.}
Let $P$ denote a convex polytope in $\RE^d$, represented as the bounded intersection of $n$ halfspaces. The \emph{polytope membership problem} consists of preprocessing $P$ so that it is possible to determine efficiently whether a given query point $q \in \RE^d$ lies within $P$. In the $\eps$-approximate version, we consider an expanded convex body $K \supset P$. A natural way to define this expansion would be to consider the set of points that lie within distance $\eps \cdot \diam(P)$ of $P$, thus defining a body whose Hausdorff distance from $P$ is $\eps \cdot \diam(P)$. However, this definition has the shortcoming that it is not sensitive to the directional width of $P$. Instead, we define $K$ as follows. For any nonzero vector $v \in \RE^d$, consider the two supporting hyperplanes for $P$ that are normal to $v$. Translate each of these hyperplanes outward by a distance of $\eps \cdot \width_v(P)$, and consider the closed slab-like region lying between them. Define $K$ to be the intersection of this (infinite) set of slabs. This is clearly a stronger approximation than the Hausdorff-based definition. An \emph{$\eps$-approximate polytope membership query} ($\eps$-APM query) returns a positive result if the query point $q$ is inside $P$, a negative result if $q$ is outside $K$, and may return either result otherwise.\footnote{Our earlier works on $\eps$-APM queries~\cite{AFM11,AFM17a} use the weaker Hausdorff form to define the problem, but the solutions presented there actually achieve the stronger direction-sensitive form.}

We recently proposed an optimal data structure to answer approximate polytope membership queries, but efficient preprocessing remained an open problem~\cite{AFM17a}. In this paper, we present a similar data structure that not only attains optimal storage and query time, but can also be preprocessed in near-optimal time.

\begin{theorem} \label{thm:apm}
Given a convex polytope $P$ in $\RE^d$ represented as the intersection of $n$ halfspaces and an approximation parameter $\eps > 0$, there is a data structure that can answer $\eps$-approximate polytope membership queries with
\[
\hbox{Query time:~} O\kern-2pt \left( \log \inv{\eps} \right)\,
\textrm{Space:~} O\kern-2pt \left(1/\eps^{\half{d-1}}\right)\,
\textrm{Preprocessing:~} O\kern-2pt \left( n \log \inv \eps + \left(\inv\eps\right)^{\kern-2pt \half{d-1}+\alpha} \right).
\] 
\end{theorem}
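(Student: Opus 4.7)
The plan is to decouple the dependence on $n$ from the dependence on $1/\eps$. First, reduce $P$ to an approximating polytope $P'$ with only $O(1/\eps^{(d-1)/2})$ facets. Second, build the Macbeath-region data structure of~\cite{AFM17a} on $P'$. The first step uses Theorem~\ref{thm:kernel} through polar duality; the second is the main technical contribution.

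For the reduction step, fix a point deep in the interior of $P$ (computable in $O(n)$ time by, e.g., a few rounds of linear programming) and translate it to the origin. Polar duality sends the $n$ bounding halfspaces of $P$ to $n$ points whose convex hull $P^*$ is the polar dual polytope. Apply Theorem~\ref{thm:kernel} to these dual points to obtain an $\eps$-kernel $Q^* \subseteq P^*$ with $|Q^*| = O(1/\eps^{(d-1)/2})$ in time $O(n \log(1/\eps) + 1/\eps^{(d-1)/2+\alpha})$. Dualizing back, the polar of $\conv(Q^*)$ is a polytope $P'$ represented by $O(1/\eps^{(d-1)/2})$ halfspaces. By the standard correspondence between directional width in the dual and directional-width approximation in the primal, after absorbing a constant factor into $\eps$ we have $P \subseteq P' \subseteq K$, so an $\eps$-APM data structure built on $P'$ correctly answers $\eps$-APM queries on $P$.

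With $P$ replaced by $P'$, the task reduces to building the data structure of~\cite{AFM17a} on an $O(1/\eps^{(d-1)/2})$-facet polytope within $O(1/\eps^{(d-1)/2+\alpha})$ time. That data structure rests on a hierarchical covering of a neighborhood of $\partial P'$ by $O(1/\eps^{(d-1)/2})$ approximate Macbeath regions organized into $O(\log(1/\eps))$ levels; from the size bounds on this hierarchy the target $O(\log(1/\eps))$ query time and $O(1/\eps^{(d-1)/2})$ storage follow as in~\cite{AFM17a}. What~\cite{AFM17a} leaves open is the efficient construction of the hierarchy.

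This is the main obstacle. A naive top-down build that scans all $O(1/\eps^{(d-1)/2})$ facets of $P'$ at every cell costs $\Omega(1/\eps^{d-1})$, well beyond the budget. The idea is to maintain, for each cell of the current frontier, a short certificate list of facets that determine its local geometry up to a constant factor. When a cell is subdivided, each child inherits a filtered sublist of its parent's certificate, augmented by a bounded number of new facets supplied by a point-location structure grown incrementally over the already-processed portion of $\partial P'$. Because approximate Macbeath regions have bounded depth in the facet arrangement (via the standard Macbeath packing lemma), the total work over all cells amortizes to $O(1/\eps^{(d-1)/2+\alpha})$ by a charging argument analogous to the one underlying the kernel algorithm. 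Combining this construction with the reduction step yields the preprocessing bound stated in Theorem~\ref{thm:apm}.
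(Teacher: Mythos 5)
Your first step—reducing $P$ to a polytope $P'$ with $O(1/\eps^{(d-1)/2})$ bounding halfspaces via polar duality and Theorem~\ref{thm:kernel}—matches the paper (Lemma~\ref{lem:conversion2} plus the remark following it). The gap is in the second step: you correctly identify that a naive build of the Macbeath hierarchy would scan all $O(1/\eps^{(d-1)/2})$ facets at every one of $O(1/\eps^{(d-1)/2})$ cells, blowing the budget, but the mechanism you propose to avoid this does not hold up.

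The ``certificate list'' idea assumes that each cell's Macbeath region is governed by a short, locally inherited set of facets. That is not the case: $M(x) = x + ((K-x)\cap(x-K))$ is bounded by \emph{all} the facets of $K$ (reflected and unreflected), and there is no a priori constant-size subset that determines its John ellipsoid even approximately. The appeal to ``bounded depth in the facet arrangement'' via the packing lemma controls the \emph{number} of disjoint Macbeath regions, not the number of halfspaces needed to compute any one of them, so it does not yield the per-cell work bound you want. You also invoke an incrementally grown point-location structure without saying what it is, what it answers, or why its per-query cost stays within budget; the ``charging argument analogous to the kernel algorithm'' is gestured at but the kernel algorithm charges to points of $S$, and there is no analogous object here to charge to.

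The paper's actual solution is a different mechanism. It does not try to build the full $\eps$-level hierarchy directly. Instead it runs a bootstrap: build only a \emph{coarse} hierarchy at resolution $\delta = \eps^{\beta/(1+\beta)} \gg \eps$ (cheap enough to do naively on a polytope with $O(1/\eps^{(d-1)/2})$ facets), and for each leaf ellipsoid $E$, clip $P$ to the bounding box $R$ of a constant-factor expansion of $E$ and recursively build a finer APM structure on $P\cap R$ with the coarser relative parameter $c_3\eps/\delta$. Because $\width_v(P\cap R) = O(\delta)$ for all relevant normals $v$ (Lemma~\ref{lem:shadow}), these local structures compose correctly, and the exponent governing the preprocessing time drops from $(\tfrac12+\beta)(d-1)$ to $(\tfrac12+\tfrac{\beta}{1+\beta})(d-1)$ per round, reaching $(\tfrac12)(d-1)+\alpha$ after $O(1/\alpha)$ rounds. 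This recursive-localization step, not an amortized certificate scheme, is the missing idea in your proposal.
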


\subparagraph{Directional width.}
Applying the previous data structure in the dual space, we obtain a data structure for the following \emph{$\eps$-approximate directional width} problem, which is closely related to $\eps$-kernels. Given a set $S$ of $n$ points in a constant dimension $d$ and an approximation parameter $\eps>0$, the goal is to preprocess $S$ to efficiently $\eps$-approximate $\width_v(S)$, for a nonzero query vector $v$. We present the following result. 

\begin{theorem} \label{thm:widthqueries}
Given a set $S$ of $n$ points in $\RE^d$ and an approximation parameter $\eps > 0$, there is a data structure that can answer $\eps$-approximate directional width queries with
\[
\hbox{Query time:~} O\kern-2pt \left( \log^2 \inv{\eps} \right)\,
\textrm{Space:~} O\kern-2pt \left(1/\eps^{\half{d-1}}\right)\,
\textrm{Preprocessing:~} O\kern-2pt \left( n \log \inv \eps + \left(\inv\eps\right)^{\kern-2pt \half{d-1}+\alpha} \right).
\] 
\end{theorem}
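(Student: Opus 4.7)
My plan is to reduce directional width queries on $S$ to approximate polytope membership queries on the polar dual of $\conv(S)$, and then invoke Theorem~\ref{thm:apm}.

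As an $O(n)$-time preliminary step, I would locate a point $o$ that lies sufficiently deep inside $\conv(S)$ (for instance, the centroid of $d+1$ affinely independent input points; if no such points exist, $S$ lies in a lower-dimensional flat and the problem reduces to a lower dimension). After translating so that $o$ is at the origin, form the polar dual
\[
   P^* \;=\; \bigcap_{p \in S} \{x \in \RE^d : \langle p, x\rangle \le 1\},
\]
a convex polytope presented as the intersection of $n$ halfspaces. Applying Theorem~\ref{thm:apm} to $P^*$ with approximation parameter $\eps' = \Theta(\eps)$ yields an APM data structure with $O(1/\eps^{(d-1)/2})$ space and $O(n\log\inv\eps + (\inv\eps)^{(d-1)/2+\alpha})$ preprocessing time, which already matches the bounds claimed in the theorem.

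To answer a width query for a nonzero direction $v$, I would use $\width_v(S) = h(v) + h(-v)$, where $h(u) = \max_{p \in \conv(S)} \langle u,p\rangle$ is the support function of $\conv(S)$. Polar duality gives $h(u) = 1/t^*(u)$, where $t^*(u) = \max\{t \ge 0 : tu \in P^*\}$. For each of $u = v$ and $u = -v$, perform a binary search on $t$, issuing one APM query per step to decide whether $tu \in P^*$ and halving the bracketing interval accordingly. After $O(\log\inv\eps)$ steps the interval shrinks multiplicatively to a factor of $1 + O(\eps)$, which $\eps$-approximates $t^*(u)$ and hence $h(u)$ and $\width_v(S)$. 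Each APM query costs $O(\log\inv\eps)$, so a single width query takes $O(\log^2\inv\eps)$ time, as required.

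The main subtlety I expect to have to pin down is that the approximating body $K \supseteq P^*$ produced by Theorem~\ref{thm:apm} is slab-based: along any direction $u$, $K$ extends beyond $P^*$ by at most $\eps'\cdot\width_u(P^*)$. Consequently, the binary search really only locates the boundary of $K$, not of $P^*$, and I must verify that this direction-sensitive dual error pulls back through polar duality to a relative error of $O(\eps)$ on $h(\pm v)$, and hence on $\width_v(S)$. This is exactly where the choice of $o$ matters: picking $o$ deep inside $\conv(S)$ ensures that for every direction $u$ the ratio $\width_u(P^*)/t^*(u)$ is bounded by a constant, so that taking $\eps'$ as a small constant multiple of $\eps$ recovers the desired approximation. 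Once this correspondence is checked, the space and preprocessing bounds are inherited directly from Theorem~\ref{thm:apm}.
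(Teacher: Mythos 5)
Your approach---reducing the width query to ray-shooting on the polar dual and then binary-searching with $\eps$-APM queries from Theorem~\ref{thm:apm}---is essentially the same as the paper's. The paper likewise uses the identity $\width_v(P^*) = 1/q(P,v) + 1/q(P,-v)$ for the ray-shooting length $q$, and refines a constant-factor estimate by $O(\log\inv\eps)$ membership queries of cost $O(\log\inv\eps)$ each.

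There is, however, a concrete flaw in your choice of center $o$. Taking the centroid of $d+1$ affinely independent points of $S$ does not make $o$ deep inside $\conv(S)$, which you yourself identify as the crux. Already for $d=1$ and $S=\{0,1,100\}$, the midpoint of the first two points lies at distance $0.5$ from one end of $\conv(S)$ but $99.5$ from the other; after dualizing, the ratio $\width_u(P^*)/t^*(u)$ that you need bounded is about $200$, and it can be made arbitrarily large. The same failure occurs in higher dimensions whenever the chosen $d+1$ points cluster near one facet of $\conv(S)$. The repair is the standard one the paper uses throughout: translate so that the origin is at the center of the John ellipsoid of $\conv(S)$ (equivalently, first map $\conv(S)$ to $\gamma$-canonical form via the $O(n)$-time affine normalization). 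Then every supporting slab of the dual polytope has the origin in a centered position, so translating a slab outward by $\eps'$ times its width perturbs the ray distance only by a $1+O(\eps')$ factor, giving the relative error you need. This choice also supplies something your sketch leaves implicit, namely the initial constant-factor bracket for the binary search, which the paper gets by ray-shooting against the John ellipsoid itself. With these two repairs, your argument coincides with the paper's.
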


\subparagraph{Nearest Neighbor.}
Let $S$ be a set of $n$ points in $\RE^d$. Given any $q \in \RE^d$, an \emph{$\eps$-approximate nearest neighbor} (ANN) of $q$ is any point of $S$ whose distance from $q$ is at most $(1+\eps)$ times the distance to $q$'s closest point in $S$. The objective is to preprocess $S$ in order to answer such queries efficiently. Data structures for approximate nearest neighbor searching (in fixed dimensions) have been proposed by several authors, offering space-time tradeoffs (see~\cite{AFM17a} for an overview of the tradeoffs). Applying the reduction from approximate nearest neighbor to approximate polytope membership established in~\cite{AFM11} together with Theorem~\ref{thm:apm}, we obtain the following result, which matches the best bound~\cite{AFM17a} up to an $O(\log \inv \eps)$ factor in the query time, but offers faster preprocessing time. 

\begin{theorem} \label{thm:ann}
Given a set $S$ of $n$ points in $\RE^d$, an approximation parameter $\eps > 0$, and $m$ such that $\log \inv{\eps} \leq m \leq 1/(\eps^{d/2}\log\inv\eps)$, there is a data structure that can answer Euclidean $\eps$-approximate nearest neighbor queries with
\[
\hbox{Query time:~} O\kern-2pt \left(\log n + \frac{\log \inv\eps}{m \cdot \eps^{\half d} }\right) \vspace{0.8em}
\textrm{Space:~} O\big(n \kern+1pt m \big) \vspace{0.8em}
\textrm{Preprocessing:~} O\kern-2pt \left( n \log n \kern+1pt \log \inv\eps + \frac{n \kern+1pt m}{\eps^{\alpha}} \right).
\] 
\end{theorem}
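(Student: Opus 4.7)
The plan is to invoke the black-box reduction from Euclidean approximate nearest-neighbor searching to approximate polytope membership established in \cite{AFM11}, and to substitute the new APM data structure of Theorem~\ref{thm:apm} in place of the APM component. The reduction realizes ANN via an AVD-style spatial decomposition: each cell $c$ is associated with a lifted sub-polytope $P_c$ on which $\eps$-APM queries resolve the ANN answer for every query point falling in $c$. Instantiated with parameter $m$ in the admissible range $\log \inv\eps \leq m \leq 1/(\eps^{d/2} \log \inv\eps)$, the reduction builds a point-location structure of size $O(nm)$ supporting $O(\log n)$ location queries; a query at $q$ then performs $O(1/(m\eps^{d/2}))$ APM probes against the polytope attached to its cell.

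For the query time, each APM probe costs $O(\log \inv\eps)$ by Theorem~\ref{thm:apm}, which combined with the $O(\log n)$ point location gives the stated bound $O(\log n + \log \inv\eps/(m \eps^{d/2}))$. For storage, each APM structure occupies $O(1/\eps^{(d-1)/2})$ space; the reduction guarantees that the number of distinct APM instances is $O(nm \eps^{(d-1)/2})$, so the aggregate APM storage is $O(nm)$ and the total space matches the claim. For preprocessing, Theorem~\ref{thm:apm} builds each APM structure in $O(|P_c| \log \inv\eps + 1/\eps^{(d-1)/2+\alpha})$ time; summing over the $O(nm \eps^{(d-1)/2})$ distinct instances and using $\sum_c |P_c| = O(nm)$ yields
\[
O\!\left( nm \log \inv\eps \;+\; nm \eps^{(d-1)/2} \cdot \inv{\eps^{(d-1)/2+\alpha}} \right) \;=\; O(nm/\eps^{\alpha}),
\]
after absorbing $\log\inv\eps$ into the $1/\eps^\alpha$ slack. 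Adding the $O(n \log n \log \inv\eps)$ cost of building the spatial decomposition itself (as in \cite{AFM11}) delivers the claimed total.

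The main obstacle I expect is confirming that the reduction of \cite{AFM11} truly exposes an APM black-box interface whose input format matches Theorem~\ref{thm:apm}, and that the accounting identities $\sum_c |P_c| = O(nm)$ and $\#\{\text{APM instances}\} = O(nm \eps^{(d-1)/2})$ hold under our notation; any slack in those identities is easily absorbed into the $1/\eps^\alpha$ term by an infinitesimal adjustment of $\alpha$. Once these structural facts are in place, the argument is routine bookkeeping, and no further geometric insight beyond Theorem~\ref{thm:apm} is required.
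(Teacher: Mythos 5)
Your proposal takes essentially the same route as the paper: apply the ANN-to-APM reduction of \cite{AFM11} (which the paper invokes as Lemma~\ref{lem:ann-reduction}) with the APM data structure of Theorem~\ref{thm:apm}, choosing the reduction's tradeoff parameter to be $t_{d+1}(\eps) = 1/(m\,\eps^{d/2})$. The paper's proof is literally one sentence doing exactly this, so your reconstruction of the internal accounting is along the right lines; note only that the reduction lifts to dimension $d+1$, so the APM exponents are $d/2$ rather than $(d-1)/2$ (e.g.\ the number of APM instances should be $O(n m \eps^{d/2})$, each of size $O(1/\eps^{d/2})$), a bookkeeping slip that does not change the final bounds.
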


\subsection{Techniques} \label{ss:techniques}

In contrast to previous kernel constructions, which are based on grids and the execution of Bronshteyn and Ivanov's algorithm, our construction employs a classical structure from the theory of convexity, called \emph{Macbeath regions}~\cite{Mac52}. Macbeath regions have found numerous uses in the theory of convex sets and the geometry of numbers (see B\'{a}r\'{a}ny~\cite{Bar00} for an excellent survey). They have also been applied to several problems in the field of computational geometry. However, most previous results were either in the form of lower bounds~\cite{AMM09b, AMX12, BCP93} or focused on existential results~\cite{AFM12b, AFM16, DGJ17, MuR14}.

In \cite{AFM17a} the authors introduced a data structure based on a hierarchy of ellipsoids based on Macbeath regions to answer approximate polytope membership queries, but the efficient computation of the hierarchy was not considered. In this paper, we show how to efficiently construct the Macbeath regions that form the basis of this hierarchy.

Let $P$ denote a convex polytope in $\RE^d$. Each level $i$ in the hierarchy corresponds to a $\delta_i$-approximation of the boundary of $P$ by a set of $O(1/\delta_i^{(d-1)/2})$ ellipsoids, where $\delta_i = \Theta(1/2^i)$. Each ellipsoid is sandwiched between two Macbeath regions and has $O(1)$ children, which correspond to the ellipsoids of the following level that approximate the same portion of the boundary (see Figure~\ref{f:hierarchy}). The hierarchy starts with $\delta_0 = \Theta(1)$ and stops after $O(\log\inv\delta)$ levels when $\delta_i = \delta$, for a desired approximation $\delta$. We present a simple algorithm to construct the hierarchy in $O(n + 1/\delta^{3(d-1)/2})$ time. The polytope $P$ can be presented as either the intersection of $n$ halfspaces or the convex hull of $n$ points. We present the relevant background in Section~\ref{s:hierarchy}.

\begin{figure}[htbp]
\centerline{\includegraphics[scale=.75,clip=true,trim=0 0 0 18]{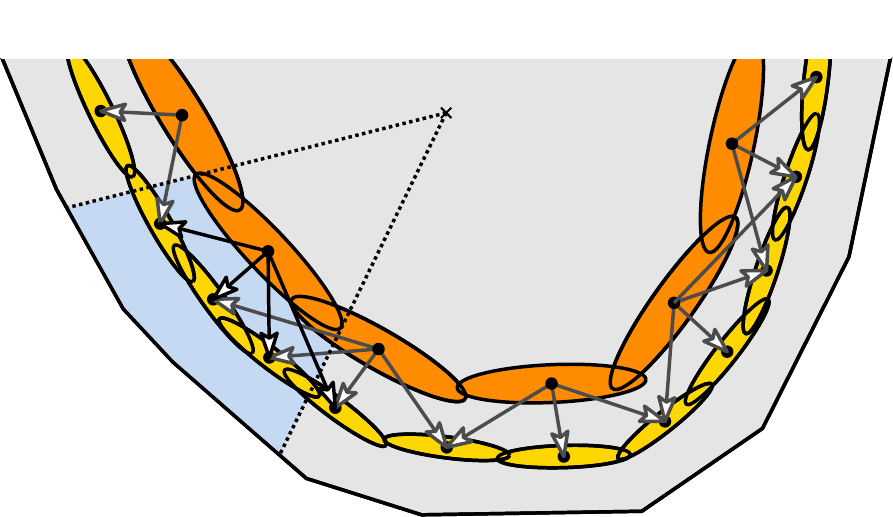}}
\caption{\label{f:hierarchy}Two levels of the hierarchy of ellipsoids based on Macbeath regions.}
\end{figure}

Our algorithm to compute an $\eps$-kernel in time $O(n \log \inv \eps + 1/\eps^{(d-1)/2+\alpha})$  (Theorem~\ref{thm:kernel}) is conceptually quite simple. Since the time to build the $\eps$-approximation hierarchy for the convex hull is prohibitively high, we use an approximation parameter $\delta = \eps^{1/3}$ to build a $\delta$-approximation hierarchy in $O(n + 1/\eps^{(d-1)/2})$ time. By navigating through this hierarchy, we partition the $n$ points among the leaf Macbeath ellipsoids in $O(n \log \inv\eps)$ time, discarding points that are too far from the boundary. We then compute an $(\eps/\delta)$-kernel for the set of points in each leaf ellipsoid and return the union of the kernels computed.

Given an algorithm to compute an $\eps$-kernel in $O(n \log \inv \eps + 1/\eps^{t(d-1)})$ time, the previous procedure produces an $\eps$-kernel in $O(n \log \inv \eps + 1/\eps^{t'(d-1)})$ time where $t' = (4t+1)/6$. Bootstrapping the construction a constant number of times, the value of $t$ goes down from $1$ to a value that is arbitrarily close to $1/2$. This discrepancy accounts for the $O(1/\eps^{\alpha})$ factors in our running times. In Section~\ref{s:kernel}, we present the complete algorithm and its analysis, proving Theorem~\ref{thm:kernel}.

In Section~\ref{s:apm}, we use our kernel construction in the dual space to efficiently build a polytope membership data structure, proving Theorem~\ref{thm:apm}. The key idea is to compute multiple kernels in order to avoid examining the whole polytope when building each Macbeath region. Again, we use bootstrapping to obtain a near-optimal preprocessing time. The remaining theorems follow from Theorems~\ref{thm:kernel} and~\ref{thm:apm}, together with several known
reductions (Section~\ref{s:reductions}). 

\section{Geometric Preliminaries} \label{s:prelim}

Consider a convex body $K$ in $d$-dimensional space $\RE^d$. Let $\partial K$ denote the boundary of $K$. Let $O$ denote the origin of $\RE^d$. Given a parameter $0 < \gamma \le 1$, we say that $K$ is \emph{$\gamma$-fat} if there exist concentric Euclidean balls $B$ and $B'$, such that $B \subseteq K \subseteq B'$, and $\radius(B) / \radius(B') \ge \gamma$. We say that $K$ is \emph{fat} if it is $\gamma$-fat for a constant $\gamma$ (possibly depending on $d$, but not on $\eps$). 

Unless otherwise specified, the notion of \emph{$\eps$-approximation} between convex bodies will be based on the direction-sensitive definition given in Section~\ref{ss:ds}. We say that a convex body $K'$ is an \emph{absolute $\eps$-approximation} to another convex body $K$ if they are within Hausdorff error $\eps$ of each other. Further, we say that $K'$ is an \emph{inner} (resp., \emph{outer}) approximation if $K' \subseteq K$ (resp., $K' \supseteq K$).

Let $B_0$ denote a ball of radius $r_0 = 1/2$ centered at the origin. For $0 < \gamma \le 1$, let $\gamma B_0$ denote the concentric ball of radius $\gamma r_0 = \gamma / 2$. We say that a convex body $K$ is in \emph{$\gamma$-canonical form} if it is nested between $\gamma B_0$ and $B_0$. A body in $\gamma$-canonical form is $\gamma$-fat and has diameter $\Theta(1)$. We will refer to point $O$ as the \emph{center} of $P$.

For any point $x \in K$, define $\delta(x)$ to be minimum distance from $x$ to any point on $\partial K$. For the sake of ray-shooting queries, it is useful to define a ray-based notion of distance as well. Given $x \in K$, define the \emph{ray-distance} of $x$ to the boundary, denoted $\ray(x)$, as follows. Consider the intersection point $p$ of $\partial K$ and the ray emanating from $O$ and passing through $x$. We define $\ray(x) = \|xp\|$. The following utility lemma will be helpful in relating distances to the boundary.

\begin{lemma} \label{lem:raydist}
Given a convex body $K$ in $\gamma$-canonical form:
\begin{enumerate}
\item[$(a)$\hspace{-5pt}] For any point $x \in P$, $\delta(x) \le \ray(x) \le \delta(x) / \gamma$.

\item[$(b)$\hspace{-4pt}] Let $h$ be a supporting hyperplane of $K$. Let $p$ be any point inside $K$ at distance at most $w$ from $h$, where $w \le  \gamma / 4$. Let $p'$ denote the intersection of the ray $Op$ and $h$. Then $\|pp'\| \le 2w /  \gamma$.

\item[$(c)$\hspace{-4pt}] Let $p$ be any point on the boundary of $K$, and let $h$ be a supporting hyperplane at $p$. Let $h'$ denote the hyperplane obtained by translating $h$ in the direction of the outward normal by $w$. Let $p'$ denote the intersection of the ray $Op$ with $h'$. Then $\|pp'\| \le w / \gamma$.
\end{enumerate}
\end{lemma}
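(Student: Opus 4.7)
All three statements are concrete consequences of two facts built into the $\gamma$-canonical form: $\gamma B_0 \subseteq K$ forces every supporting hyperplane of $K$ to lie at distance at least $\gamma/2$ from $O$, while $K \subseteq B_0$ forces $\|Op\| \le 1/2$ for every $p \in K$. My plan is to set up a common coordinate frame for each part, express the relevant ray intersection algebraically, and then bound the quantity using only these two inequalities.

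For part (a), the bound $\delta(x) \le \ray(x)$ is immediate since $\ray(x)$ is the distance from $x$ to the specific boundary point $p = \partial K \cap \ray(Ox)$. For the reverse bound I would place $p$ on the positive $x_1$-axis at distance $c \in [\gamma/2,\,1/2]$ from $O$, so $x = ((c-r), 0, \ldots, 0)$ with $r = \ray(x)$. The goal becomes $B(x, r\gamma) \subseteq K$, which I plan to establish by exhibiting every $y = x + u$ with $\|u\| \le r\gamma$ as a convex combination $\lambda p + (1-\lambda) z$ with $z \in \gamma B_0$; taking $\lambda = (c-r)/c$ forces $z = c\,u/r$, whose norm is at most $c\gamma \le \gamma/2$, so $z \in \gamma B_0 \subseteq K$ and convexity finishes the argument.

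Parts (b) and (c) I would unify by writing the supporting hyperplane as $h = \{y : \nu \cdot y = c_h\}$ with outward normal $\nu$, and parametrizing the ray as $t \mapsto tp$. Containment of $\gamma B_0$ in $K$ gives $c_h/\|\nu\| \ge \gamma/2$, while $p \in K \subseteq B_0$ gives $\|Op\| \le 1/2$. For part (c), since $p \in h$ we have $\nu\cdot p = c_h$, the ray meets $h'$ at $t = 1 + w\|\nu\|/c_h$, and $\|pp'\| = (t-1)\|Op\|$ simplifies directly to at most $w/\gamma$. For part (b), the distance hypothesis gives $\nu \cdot p \ge c_h - w\|\nu\|$, and this is where the restriction $w \le \gamma/4$ enters: combined with $c_h \ge (\gamma/2)\|\nu\|$ it yields $\nu \cdot p \ge (\gamma/4)\|\nu\|$, so the ray $Op$ is not nearly parallel to $h$ and hits it at $t = c_h/(\nu\cdot p)$, from which $\|pp'\| \le 2w/\gamma$ follows.

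The main obstacle is part (a), because it is the only place where I have to produce a genuine inscribed ball rather than just read off an intersection parameter; the convex-combination identity with $\lambda = (c-r)/c$ is the crux, and the rest is routine linear algebra. Parts (b) and (c) are then essentially bookkeeping once the canonical-form inequalities $c_h/\|\nu\| \ge \gamma/2$ and $\|Op\| \le 1/2$ are in hand, with the only subtlety being to check that the hypothesis $w \le \gamma/4$ in (b) exactly ensures that the ray from $O$ through $p$ still reaches the outward side of $h$ at a bounded rate.
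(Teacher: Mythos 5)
Your proof is correct, and since the paper declares the proof ``straightforward'' and merely cites adaptations of Lemma~4.2 of~\cite{AFM16} and Lemma~2.11 of~\cite{AFM17a} rather than spelling out an argument, there is no explicit paper proof to compare against. Your argument is the natural elementary one one would expect those lemmas to encapsulate: for part~(a) you inscribe the ball $B(x, \gamma\,\ray(x))$ in $K$ by writing each of its points as a convex combination of the boundary point $p$ and a point of $\gamma B_0$ (the identity $\lambda = (c-r)/c$, $z = cu/r$, $\|z\| \le c\gamma \le \gamma/2$ is exactly right, and the degenerate cases $r=0$ and $x=O$ are harmless); for parts~(b) and~(c) you reduce everything to the two canonical-form inequalities $c_h/\|\nu\| \ge \gamma/2$ and $\|p\| \le 1/2$, and the arithmetic checks out, including the role of $w \le \gamma/4$ in forcing $\nu\cdot p \ge (\gamma/4)\|\nu\|$ so that $t-1 \le 4w/\gamma$ and then $\|pp'\| = (t-1)\|p\| \le 2w/\gamma$. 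No gaps.
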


We omit the straightforward proof. The lower bound on $\ray(x)$ for part~(a) is trivial, and the upper bound follows by a straightforward adaption of Lemma 4.2 of \cite{AFM16}. Part~(b) is an adaptation of Lemma~{2.11} of \cite{AFM17a}, and part~(c) is similar.

For any centrally symmetric convex body $A$, define $A^{\lambda}$ to be the body obtained by scaling $A$ by a factor of $\lambda$ about its center. The following lemma appears in Barany~\cite{Bar89}.

\begin{lemma} \label{lem:sym}
Let $\lambda \ge 1$. Let $A$ and $B$ be centrally symmetric convex bodies such that $A \subseteq B$. Then
$A^{\lambda} \subseteq B^{\lambda}$.
\end{lemma}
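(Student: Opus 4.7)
The plan is to work coordinate-freely with the centers of $A$ and $B$. Let $a$ and $b$ denote the centers of symmetry of $A$ and $B$, so that $A^{\lambda} = a + \lambda(A - a)$ and $B^{\lambda} = b + \lambda(B - b)$. Pick an arbitrary point $z \in A^{\lambda}$ and write $z = a + \lambda(x - a)$ for some $x \in A$. To show $z \in B^{\lambda}$, it suffices to verify that the point $y = b + \tfrac{1}{\lambda}(z - b)$ lies in $B$. A direct substitution reduces $y$ to the form $y = x + \left(1 - \tfrac{1}{\lambda}\right)(b - a)$, so the task becomes showing that this translated point lies in $B$.

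The main idea is then to exhibit $y$ as a convex combination of two points already known to lie in $B$. First, since $x \in A \subseteq B$, we have $x \in B$. Second, central symmetry of $A$ about $a$ gives $2a - x \in A \subseteq B$, and central symmetry of $B$ about $b$ then yields $2b - (2a - x) = x + 2(b - a) \in B$. The candidate $y = x + (1 - 1/\lambda)(b - a)$ sits on the segment joining $x$ and $x + 2(b-a)$ at parameter $s = (1 - 1/\lambda)/2$, which lies in $[0, 1/2]$ because $\lambda \ge 1$. Convexity of $B$ then delivers $y \in B$, hence $z \in B^{\lambda}$.

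There is no real obstacle here beyond keeping the bookkeeping of the two (possibly distinct) centers straight; the only place the hypothesis $\lambda \ge 1$ is used is to guarantee $s \in [0,1]$, and the only place central symmetry is used is to produce the second ``anchor'' point $x + 2(b-a) \in B$. I would present the argument in just these two short paragraphs, citing Bárány~\cite{Bar89} for the original statement and noting that the proof is a one-line convexity argument once the correct two anchor points in $B$ are identified.
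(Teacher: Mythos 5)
Your proof is correct. The paper itself gives no proof of Lemma~\ref{lem:sym}; it simply attributes the statement to B\'{a}r\'{a}ny~\cite{Bar89}, so there is no paper proof to compare against. Your self-contained argument is sound: the algebra reducing membership of $z = a + \lambda(x-a)$ in $B^\lambda$ to membership of $y = x + (1 - \tfrac{1}{\lambda})(b-a)$ in $B$ is correct, the two anchor points $x$ and $x + 2(b-a)$ are indeed in $B$ (the first by $A \subseteq B$, the second by reflecting $2a - x \in A \subseteq B$ through $b$), and the interpolation parameter $s = (1 - 1/\lambda)/2$ lands in $[0,1]$ precisely because $\lambda \ge 1$. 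You have also correctly isolated where each hypothesis is used. This is a perfectly good elementary proof to include in place of the bare citation.
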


\subsection{Caps and Macbeath Regions}

Much of the material in this section has been presented in~\cite{AFM16,AFM17a}. We include it here for the sake of completeness. Given a convex body $K$, a \emph{cap} $C$ is defined to be the nonempty intersection of $K$ with a halfspace (see Figure~\ref{f:cap}(a)).
Let $h$ denote the hyperplane bounding this halfspace. We define the \emph{base} of $C$ to be $h \cap K$. The \emph{apex} of $C$ is any point in the cap such that the supporting hyperplane of $K$ at this point is parallel to $h$. The \emph{width} of $C$, denoted $\width(C)$, is the distance between $h$ and this supporting hyperplane. Given any cap $C$ of width $w$ and a real $\lambda \ge 0$, we define its \emph{$\lambda$-expansion}, denoted $C^{\lambda}$, to be the cap of $K$ cut by a hyperplane parallel to and at distance $\lambda w$ from this supporting hyperplane. (Note that $C^{\lambda} = K$, if $\lambda w$ exceeds the width of $K$ along the defining direction.) 

\begin{figure}[htbp]
\centerline{\includegraphics[scale=.65]{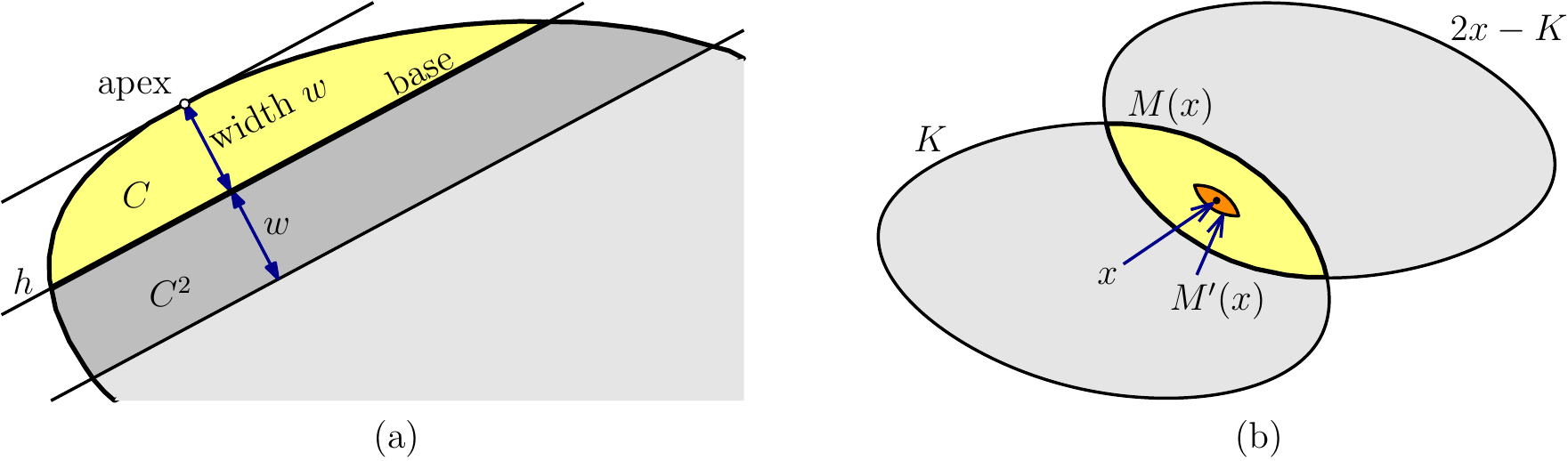}}
\caption{\label{f:cap}(a) Cap concepts and (b) Macbeath regions.}
\end{figure}

Given a point $x \in K$ and real parameter $\lambda \ge 0$, the \emph{Macbeath region} $M^{\lambda}(x)$ (also called an \emph{M-region}) is defined as:
\[
M^{\lambda}(x) ~ = ~ x + \lambda ((K-x) \cap (x-K)).
\]
It is easy to see that $M^{1}(x)$ is the intersection of $K$ and the reflection of $K$ around $x$ (see Figure~\ref{f:cap}(b)). Clearly, $M^{1}(x)$ is centrally symmetric about $x$, and $M^{\lambda}(x)$ is a scaled copy of $M^{1}(x)$ by the factor $\lambda$ about $x$. We refer to $x$ as the \emph{center} of $M^{\lambda}(x)$ and to $\lambda$ as its \emph{scaling factor}. As a convenience, we define $M(x) = M^1(x)$ and $M'(x) = M^{1/5}(x)$. We refer to the latter as the \emph{shrunken} Macbeath region.

We now present a few lemmas that encapsulate key properties of Macbeath regions. The first lemma shows that if two shrunken Macbeath regions have a nonempty intersection, then a constant factor expansion of one contains the other~\cite{AFM17a,BCP93,ELR70}.

\begin{lemma} \label{lem:mac-mac}
Let $K$ be a convex body, and let $\lambda \le 1/5$ be any real. If $x,y \in K$ such that $M^{\lambda}(x) \cap M^{\lambda}(y) \neq \emptyset$, then $M^{\lambda}(y) \subseteq M^{4\lambda}(x)$.
\end{lemma}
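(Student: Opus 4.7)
My plan is to recast the containment as an inequality on a Minkowski gauge. For each $x \in K$, define
\[
\rho_x(v) \;=\; \inf\{t > 0 : x + v/t \in K \text{ and } x - v/t \in K\};
\]
this is the gauge of the centrally symmetric convex body $(K - x) \cap (x - K)$, hence a seminorm on displacement vectors. By the very definition of a Macbeath region, $p \in M^{\mu}(x)$ is equivalent to $\rho_x(p - x) \le \mu$, so the lemma reduces to showing that every $w \in M^{\lambda}(y)$ satisfies $\rho_x(w - x) \le 4\lambda$.

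The only nontrivial ingredient I need is a \emph{change-of-center} inequality: if $\rho_x(y - x) \le r$, then $\rho_x(v) \le (1+r)\,\rho_y(v)$ for every vector $v$. I will prove this by an explicit convex combination. Setting $t = \rho_y(v)$ and $u = (y - x)/r$ (so that $x \pm u \in K$ and $y = x + ru$), one directly checks that
\[
x \;\pm\; \frac{v}{(1+r)\,t} \;=\; \frac{1}{1+r}\Bigl(y \pm \tfrac{v}{t}\Bigr) \;+\; \frac{r}{1+r}(x - u)
\]
is a convex combination of two points of $K$ (namely $y \pm v/t$ and $x - u$), and therefore lies in $K$. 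Hence $\rho_x(v) \le (1+r)\,t = (1+r)\,\rho_y(v)$.

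Now fix any $z \in M^\lambda(x) \cap M^\lambda(y)$, so $\rho_x(z - x) \le \lambda$ and $\rho_y(z - y) \le \lambda$. Setting $r := \rho_x(y - x)$, the seminorm triangle inequality together with the change-of-center inequality applied to $v = z - y$ gives
\[
r \;\le\; \rho_x(z - x) + \rho_x(z - y) \;\le\; \lambda + (1 + r)\lambda,
\]
which rearranges to $r \le 2\lambda/(1 - \lambda)$. For an arbitrary $w \in M^\lambda(y)$, an analogous estimate yields
\[
\rho_x(w - x) \;\le\; \rho_x(w - y) + \rho_x(y - x) \;\le\; (1 + r)\lambda + r \;=\; \frac{\lambda(3 + \lambda)}{1 - \lambda}.
\]
A short check shows that this expression is at most $4\lambda$ if and only if $\lambda \le 1/5$, which is precisely the hypothesis. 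Hence $\rho_x(w - x) \le 4\lambda$, i.e.\ $w \in M^{4\lambda}(x)$.

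The sole real obstacle is the change-of-center inequality; everything else is routine algebra with the seminorm $\rho_x$. I note that the constants $4$ and $1/5$ are tight, as already witnessed in dimension one by $K = [0,1]$, $x = \eps$, $y = \tfrac{3}{2}\eps$ and $\lambda = 1/5$, so the hypothesis cannot be weakened within the stated form.
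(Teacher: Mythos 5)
Your proof is correct. The paper itself does not include a proof of Lemma~\ref{lem:mac-mac}, deferring to \cite{AFM17a,BCP93,ELR70}; the classical arguments in that lineage typically exhibit one ad hoc convex combination of several points of $K$ that directly witnesses $w \in M^{4\lambda}(x)$. Your route is a cleaner packaging of the same convexity: encoding $(K-x)\cap(x-K)$ via the gauge $\rho_x$ reduces the entire lemma to a single change-of-center inequality $\rho_x(\cdot) \le (1+r)\,\rho_y(\cdot)$ (valid when $\rho_x(y-x)\le r$), which you establish with one convex combination, after which everything is the triangle inequality for $\rho_x$ and elementary algebra. I checked the identity $x \pm \tfrac{v}{(1+r)t} = \tfrac{1}{1+r}(y\pm v/t) + \tfrac{r}{1+r}(x-u)$ using $y = x+ru$, the derivation $r \le 2\lambda/(1-\lambda)$ from the intersection hypothesis, the final estimate $\rho_x(w-x) \le \lambda(3+\lambda)/(1-\lambda)$, and that $\lambda \le 1/5$ is exactly the condition making this $\le 4\lambda$; the one-dimensional tightness example also checks out (with $M^{\lambda}(y)$'s right endpoint meeting that of $M^{4\lambda}(x)$ at $9\eps/5$). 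One small caveat worth a parenthetical: the gauge $\rho_x$ degenerates when $x \in \partial K$, but under the overlap hypothesis with $\lambda < 1$ one sees that either both $x,y$ lie in the interior of $K$ or $x=y$, so the argument applies in every nontrivial case.
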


The next lemma is useful in situations when we know that a shrunken Macbeath region partially overlaps a cap of $K$. It allows us to conclude that a constant factor expansion of the cap will fully contain the Macbeath region. The proof appears in \cite{AFM16}.

\begin{lemma} \label{lem:mac-cap2}
Let $K$ be a convex body. Let $C$ be a cap of $K$ and $x$ be a point in $K$ such that $C \cap M'(x) \neq \emptyset$.
Then $M'(x) \subseteq C^2$.
\end{lemma}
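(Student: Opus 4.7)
The plan is to reduce the containment to a one-dimensional comparison along the normal to the cap's base. Let $h$ be the hyperplane bounding $C$ and let $f$ denote signed distance from $h$, oriented so that $f > 0$ on the side containing the apex of $C$. With $w = \width(C)$, the apex supporting hyperplane of $K$ parallel to $h$ is $\{f = w\}$, so $K \subseteq \{f \leq w\}$, $C = K \cap \{f \geq 0\}$, and by the definition of cap expansion $C^2 = K \cap \{f \geq -w\}$. Since $M'(x) \subseteq K$ is immediate from the definition of Macbeath regions, it suffices to show that every $z \in M'(x)$ satisfies $f(z) \geq -w$.

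Let $d$ be the half-extent of $M'(x)$ along $f$, i.e., $d = \max_{z \in M'(x)} f(z) - f(x)$. By central symmetry of $M'(x)$ about $x$, the minimum of $f$ on $M'(x)$ equals $f(x) - d$, so every $z \in M'(x)$ satisfies $f(x) - d \leq f(z) \leq f(x) + d$, and the goal reduces to proving the single inequality $f(x) \geq d - w$.

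Two bounds on $f(x)$ then finish the argument. A lower bound comes from the hypothesis: pick $y \in C \cap M'(x)$; then $f(y) \geq 0$ while $f(y) \leq f(x) + d$, giving $f(x) \geq -d$. An upper bound comes from the inclusion $M(x) = M^1(x) \subseteq K$: since $M(x)$ is the $5$-fold scaling of $M'(x) = M^{1/5}(x)$ about $x$, its half-extent along $f$ is $5d$, and this cannot exceed the distance $w - f(x)$ from $x$ to the apex supporting hyperplane, so $f(x) \leq w - 5d$. Combining the two gives $-d \leq f(x) \leq w - 5d$, which forces $d \leq w/4$; hence $f(x) - d \geq -2d \geq -w/2 > -w$, as required.

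I expect no serious obstacle: this is essentially a tight one-dimensional packing argument. The only substantive idea is to exploit both ways in which $M(x)$ interacts with the apex halfspace — the hypothesized overlap with $C$ pushes $f(x)$ up, while the containment of the full (unshrunken) copy $M(x)$ inside $K$ keeps $f(x)$ bounded above. The shrinking factor $1/5$ built into $M'(x)$ is comfortably smaller than the $1/4$ threshold the argument demands, which is precisely why the lemma is stated for $M'(x)$ rather than for a larger Macbeath region.
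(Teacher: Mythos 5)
Your proof is correct. The paper itself does not include a proof of this lemma (it defers to the cited reference \cite{AFM16}), but your one-dimensional projection argument along the normal to the cap's base is exactly the standard approach: combine the lower bound on $f(x)$ coming from the overlap with $C$ with the upper bound coming from $M^1(x) \subseteq K$, and conclude that the half-extent $d$ is small relative to $w$. Your bookkeeping is right and in fact establishes the slightly stronger containment $M'(x) \subseteq C^{3/2}$.

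One small quibble with the closing remark, which does not affect the proof: the threshold on the shrinking factor $\lambda$ that this argument demands is $\lambda \le 1/3$, not $1/4$. With general $\lambda$ one gets $d \le \lambda w/(1-\lambda)$ and needs $2d \le w$, which forces $\lambda \le 1/3$; the quantity $1/4$ is the bound on $d/w$ obtained at $\lambda = 1/5$. The choice $1/5$ in the definition of $M'(x)$ is actually dictated by Lemma~\ref{lem:mac-mac}, which requires $\lambda \le 1/5$.
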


The following lemma shows that all points in a shrunken Macbeath region have similar distances from the boundary of $K$. The proof appears in \cite{AFM17a}.

\begin{lemma} \label{lem:core-delta}
Let $K$ be a convex body. If $x \in K$ and $x' \in M'(x)$, then $4 \delta(x) / 5 \le \delta(x') \le 4 \delta(x) / 3$.
\end{lemma}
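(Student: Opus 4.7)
The plan is to derive both bounds from a single general observation and apply it in the two directions.

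General claim: if $y \in M^\lambda(z)$ for some $\lambda \in (0,1]$, then $\delta(y) \geq (1-\lambda)\delta(z)$. By definition of $M^\lambda(z)$ the point $p = z + (y-z)/\lambda$ lies in $K$, and $y = (1-\lambda) z + \lambda p$ sits at fraction $\lambda$ of the way from $z$ to $p$. Since $B(z, \delta(z)) \subseteq K$, the cone $\conv(B(z,\delta(z)) \cup \{p\})$ is contained in $K$; writing this cone as $\bigcup_{t \in [0,1]} B((1-t)z + tp,\, (1-t)\delta(z))$, the slice at $t=\lambda$ is precisely a ball of radius $(1-\lambda)\delta(z)$ centered at $y$.

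The lower bound follows at once by applying the claim with $z=x$, $y=x'$, $\lambda = 1/5$, giving $\delta(x') \geq (4/5)\delta(x)$.

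For the upper bound, I would first establish the reverse containment $x \in M^{1/4}(x')$ and then apply the general claim with the roles of the two centers swapped and $\lambda = 1/4$, concluding $\delta(x) \geq (3/4)\delta(x')$, i.e., $\delta(x') \leq (4/3)\delta(x)$. To verify $x \in M^{1/4}(x')$ one needs the two points $x' \pm 4(x-x')$, namely $4x - 3x'$ and $5x' - 4x$, to lie in $K$. From $x' \in M^{1/5}(x)$ the points $x \pm 5(x'-x)$ already belong to $K$, which directly gives $5x'-4x \in K$ and $6x-5x' \in K$. The remaining point is the convex combination $4x - 3x' = \tfrac{3}{5}(6x - 5x') + \tfrac{2}{5} x$ of two points of $K$, so it also lies in $K$.

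The only subtle step is the asymmetry: $x' \in M^{1/5}(x)$ does not imply the naive symmetric statement $x \in M^{1/5}(x')$, and the sharp analogue needs the larger factor $1/4$. That factor is exactly what produces the constant $4/3$ in the upper bound; a symmetric argument using $\lambda = 1/5$ on both sides would fall short.
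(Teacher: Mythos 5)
Your proof is correct and complete. The paper itself does not supply a proof of this lemma (it defers to reference~\cite{AFM17a}), so there is no in-paper argument to compare against; I evaluate your proposal on its own terms.

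Both steps are sound. The cone observation is the standard way to transfer distance-to-boundary through a Macbeath containment: from $y \in M^\lambda(z)$ one gets $p = z + (y-z)/\lambda \in K$, and since $B(z,\delta(z)) \subseteq K$ the convex hull $\conv\bigl(B(z,\delta(z)) \cup \{p\}\bigr) \subseteq K$ contains, at parameter $t=\lambda$, the ball $B\bigl(y,(1-\lambda)\delta(z)\bigr)$, giving $\delta(y) \geq (1-\lambda)\delta(z)$. With $\lambda = 1/5$ this is the lower bound. For the upper bound, your reverse containment $x \in M^{1/4}(x')$ is exactly right, and you verify it correctly: $x' \in M^{1/5}(x)$ puts $5x'-4x$ and $6x-5x'$ in $K$, the former is $x' - 4(x-x')$ directly, and the latter combined with $x \in K$ via $\tfrac{3}{5}(6x-5x') + \tfrac{2}{5}x = 4x-3x' = x' + 4(x-x')$ gives the other needed point. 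Applying the cone claim with $z = x'$ and $\lambda = 1/4$ yields $\delta(x) \geq \tfrac{3}{4}\delta(x')$, i.e., $\delta(x') \leq \tfrac{4}{3}\delta(x)$. Your closing remark about the asymmetry is also on the mark: in general $x' \in M^\lambda(x)$ gives only $x \in M^{\lambda/(1-\lambda)}(x')$, and the factor $\tfrac{1/5}{1-1/5} = \tfrac{1}{4}$ is precisely what produces $4/3$ rather than the naive $5/4$.
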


For any $\delta > 0$, define the $\delta$-erosion of a convex body $K$, denoted $K(\delta)$, to be the closed convex body formed by removing from $K$ all points lying within distance $\delta$ of $\partial K$. The next lemma bounds the number of disjoint Macbeath regions that can be centered on the boundary of $K(\delta)$. The proof appears in \cite{AFM17a}.

\begin{lemma} \label{lem:bound-mac}
Consider a convex body $K \subset \RE^d$ in $\gamma$-canonical form for some constant $\gamma$. Define $\Delta_0 = \frac{1}{2} (\gamma^2 / (4d))^d$. For any fixed constant $0 < \lambda \le 1/5$ and real parameter $\delta \le \Delta_0$, let $\mathcal{M}$ be a set of disjoint $\lambda$-scaled Macbeath regions whose centers lie on the boundary of $K(\delta)$. Then $|\mathcal{M}| = O(1/\delta^{(d-1)/2})$.
\end{lemma}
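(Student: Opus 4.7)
My plan is to prove this bound via a Macbeath-region packing argument, using the cap-containment properties codified in Lemmas \ref{lem:mac-mac} and \ref{lem:mac-cap2} together with a volume estimate on the ``wet region'' $K \setminus K(4\delta)$ of $K$.

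The first step is to associate a cap of $K$ with each center. For each $x \in \partial K(\delta)$ whose Macbeath region belongs to $\mathcal{M}$, let $p$ be the point of $\partial K$ at distance exactly $\delta$ from $x$ (reached by walking along the outward normal of $K(\delta)$ at $x$), let $h$ be a supporting hyperplane of $K$ at $p$, and define $C_x$ as the cap of $K$ obtained by translating $h$ inward by $2\delta$. Since $x$ lies in the interior of $C_x$ and since $\lambda \leq 1/5$ gives $M^\lambda(x) \subseteq M'(x)$, we have $x \in M^\lambda(x) \cap C_x \subseteq M'(x) \cap C_x$. Lemma \ref{lem:mac-cap2} then yields $M^\lambda(x) \subseteq M'(x) \subseteq C_x^2$, a cap of width $4\delta$. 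A direct check shows $C_x^2 \subseteq K \setminus K(4\delta)$, and since $K$ has bounded diameter and surface area (as $K \subseteq B_0$), the wet region has volume $O(\delta)$. Because the Macbeath regions in $\mathcal{M}$ are disjoint and all lie in this wet region, $\sum_i \mathrm{vol}(M^\lambda(x_i)) \leq O(\delta)$.

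The count is then completed by pairing this global upper bound with a uniform lower bound of the form $\mathrm{vol}(M^\lambda(x)) \geq c\, \delta^{(d+1)/2}$ on an individual Macbeath region centered at depth $\delta$. Assuming this lower bound, dividing yields $|\mathcal{M}| \cdot c\, \delta^{(d+1)/2} \leq O(\delta)$, and hence $|\mathcal{M}| = O(1/\delta^{(d-1)/2})$, as claimed. To produce the lower bound, I would exploit the canonical-form hypothesis: the inscribed ball $\gamma B_0 \subseteq K$ allows one to inscribe in $M^\lambda(x)$ a reference shape whose extent is $\Theta(\delta)$ in the outward-normal direction at the nearest boundary point and $\Theta(\sqrt{\gamma\delta})$ in the perpendicular directions, giving the desired volume.

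The main obstacle is making this volume lower bound hold \emph{uniformly} over all centers at depth $\delta$, since naive estimates can degrade badly near sharp features of $\partial K$ (corners or flat facets), where an individual Macbeath region can be thinner than the ideal ellipsoid. This is precisely the role of the explicit threshold $\Delta_0 = \frac{1}{2}(\gamma^2/(4d))^d$: the factor $\gamma^2/(4d)$ controls the ratio between the fatness $\gamma$ and the dimension $d$ that is required for the inscribed ellipsoidal fit to remain inside $K$ (via an estimate like Lemma \ref{lem:raydist}), and raising it to the $d$-th power compensates for the $d$-dimensional ellipsoid volume. The restriction $\lambda \leq 1/5$ is likewise essential, as Lemmas \ref{lem:mac-mac} and \ref{lem:mac-cap2}---which underwrite the cap-Macbeath reduction above---both require this constraint.
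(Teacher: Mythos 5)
The paper does not actually prove this lemma --- it states ``The proof appears in \cite{AFM17a}'' --- so there is no in-paper argument to compare yours against. Evaluated on its own terms, the first half of your argument is sound: with $C_x$ the cap of width $2\delta$ through $x$, Lemma~\ref{lem:mac-cap2} indeed gives $M^{\lambda}(x) \subseteq M'(x) \subseteq C_x^2$, the cap $C_x^2$ has width $4\delta$ and so lies in the slab $\{y \in K : \delta(y) \le 4\delta\}$, and the volume of that slab is $O(\delta)$ because $K \subseteq B_0$ has bounded surface area. So $\sum_i \mathrm{vol}(M^{\lambda}(x_i)) = O(\delta)$ is correct.

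The gap is the uniform lower bound $\mathrm{vol}(M^{\lambda}(x)) = \Omega(\delta^{(d+1)/2})$, which is \emph{false} for bodies in $\gamma$-canonical form. The canonical-form hypothesis gives fatness of $K$, but it does \emph{not} give the local smoothness that your ``$\Theta(\delta)$ normal, $\Theta(\sqrt{\gamma\delta})$ tangential'' ellipsoid requires; fat bodies can have arbitrarily sharp vertices. Concretely, take $K$ a regular simplex placed in $\gamma$-canonical form (so $\gamma = \Theta(1/d)$, a constant) and let $x \in \partial K(\delta)$ lie near a vertex $v$. In a linear chart that sends the cone at $v$ to the positive orthant with bounded distortion, $K - x$ and $x - K$ near the origin are orthants translated by roughly $\pm \delta$ in each coordinate, so $M(x)$ is a box of side $\Theta(\delta)$ in every direction, giving $\mathrm{vol}(M^{\lambda}(x)) = \Theta(\delta^d)$. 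Since $\delta^d \ll \delta^{(d+1)/2}$ for all $d \ge 2$ as $\delta \to 0$, your lower bound fails, and the constant threshold $\Delta_0$ cannot rescue it (it merely requires $\delta$ small, which only sharpens the counterexample). Plugging the true worst-case lower bound $\mathrm{vol}(M^{\lambda}(x)) = \Omega(\delta^d)$ (from the inscribed ball of radius $\lambda\delta$) into your packing argument only gives $|\mathcal{M}| = O(1/\delta^{d-1})$, a much weaker bound --- and in fact the very bound the paper is trying to beat. So a more refined argument is genuinely needed; a single uniform volume bound on the Macbeath regions at depth $\delta$ cannot work, and you would have to instead exploit how the near-vertex Macbeath regions, though small, also admit only polylogarithmically many disjoint placements.
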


\subsection{Shadows of Macbeath regions}

Shrunken Macbeath regions reside within the interior of the convex body, but it is useful to identify the portion of the body's boundary that this Macbeath region will be responsible for approximating. For this purpose, we introduce the shadow of a Macbeath region. Given a convex body $K$ that contains the origin $O$ and a region $R \subseteq K$, we define the \emph{shadow} of $R$ (with respect to $K$), denoted $\Sh(R)$, to be the set of points $x \in K$ such that the line segment $Ox$ intersects $R$. 

We also define a set of \emph{normal directions} for $R$, denoted $\normals(R)$. Consider the set of all hyperplanes that support $K$ at some point in the shadow of $R$. Define $\normals(R)$ to be the set of outward unit normals to these supporting hyperplanes. Typically, the region $R$ in our constructions will be a (scaled) Macbeath region or an associated John ellipsoid (as defined in Section~\ref{s:hierarchy}), close to the boundary of $K$. The following lemma captures a salient feature of these shadows, namely, that the shadow of a Macbeath region $M'(x)$ can be enclosed in an ellipsoid whose width in all normal directions is $O(\delta(x))$.

\begin{lemma} \label{lem:shadow}
Let $K \subset \RE^d$ be a convex body in $\gamma$-canonical form for some constant $\gamma$. Let $x$ be a point at distance $\delta$ from the boundary of $K$, where $\delta \le \Delta_0$. Let $M = M'(x), S = \Sh(M)$, $N = \normals(M)$, and $\widehat{M} = M^{4/\gamma}(x)$. Then:
\begin{enumerate}
\item[$(a)$\hspace{-5pt}] $S \subseteq \widehat{M}$.
\item[$(b)$\hspace{-4pt}] $\width_v(S) \le  c_1 \delta$ for all $v \in N$. Here $c_1$ is the constant $8 / (3 \gamma)$.  
\item[$(c)$\hspace{-4pt}] $\width_v(\widehat{M}) \le c_2 \delta$ for all $v \in N$. Here $c_2$ is the constant $160 / (3 \gamma^2)$.
\end{enumerate}
\end{lemma}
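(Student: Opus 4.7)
The plan is to treat parts (a)$\to$(c)$\to$(b) in turn, unified by the following setup: let $p^* \in \Sh(M) \cap \partial K$ be the support point with outward normal $v$, let $h_{p^*}$ be its tangent hyperplane, and for $q \in K$ define $\omega(q) = \langle v, p^* - q \rangle \ge 0$, the signed distance from $q$ to $h_{p^*}$. By definition of $\Sh(M)$, segment $O p^*$ meets $M$ at some point $z^*$; Lemma~\ref{lem:raydist}(a) together with Lemma~\ref{lem:core-delta} give $\|p^* - z^*\| = \ray(z^*) \le \delta(z^*)/\gamma \le 4\delta/(3\gamma)$ and hence $\omega(z^*) \le 4\delta/(3\gamma)$. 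This single quantity drives all three bounds.

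For Part~(a), given $y \in S$ write $z = ty \in M$ for some $t \in (0,1]$. The containment $y \in \widehat{M} = M^{4/\gamma}(x)$ is equivalent to $x \pm (\gamma/4)(y-x) \in K$; the ``plus'' case is a convex combination of $x$ and $y$, and for the ``minus'' case I would write $(1+\gamma/4)x - (\gamma/4)y$ as a convex combination of $O$, $y$, and $6x - 5ty$, all of which lie in $K$ (the last because $z \in M = M^{1/5}(x)$ implies $6x - 5z \in K$). Solving the linear system reduces the nonnegativity check to the single inequality $t \ge 6\gamma/(5(4+\gamma))$, which is implied by $t \ge \gamma/2$. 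This lower bound holds because $\|y\| \le 1/2$, while $\|z\| \ge \gamma/2 - \ray(z) \ge \gamma/4$ using $K \supseteq \gamma B_0$ and the hypothesis $\delta \le \Delta_0$.

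For Part~(c), the central symmetry of $\widehat{M}$ about $x$ reduces $\width_v(\widehat{M})$ to $2 \max_{y \in \widehat{M}} \langle v, y - x \rangle$. Writing $y - x = (4/\gamma)\,u$ with $x \pm u \in K$, the halfspace inequality $\langle v, \cdot \rangle \le \langle v, p^* \rangle$ gives $|\langle v, u \rangle| \le \omega(x)$, so $\width_v(\widehat{M}) = (8/\gamma)\,\omega(x)$. To bound $\omega(x)$, the inclusion $5z^* - 4x \in K$ (from $z^* \in M$) combined with the same halfspace inequality yields $\omega(x) \le (5/4)\,\omega(z^*) \le 5\delta/(3\gamma)$. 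Multiplying gives the stated $O(\delta/\gamma^2)$ bound, within an absolute constant of $c_2 = 160/(3\gamma^2)$.

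For Part~(b), every $s \in S$ lies on the chord from $z_s \in M$ (the $M$-point on ray $Os$) to the boundary point $p_{z_s}$. Parameterizing $s = z_s + \lambda\,(z_s/\|z_s\|)$ with $\lambda \in [0, \ray(z_s)]$ yields $\omega(s) = \omega(z_s) - \lambda\,\langle v, z_s \rangle/\|z_s\|$. The crucial observation, and the main technical hurdle, is that $\langle v, z_s \rangle > 0$ for every $z_s \in M$ when $\delta \le \Delta_0$: the symmetric Macbeath bound gives $|\langle v, z_s - x \rangle| \le \omega(x)/5$, while Lemma~\ref{lem:raydist}(c) applied at $p^*$ together with $\|p^*\| \ge \gamma/2$ (from $K \supseteq \gamma B_0$) yields $\langle v, p^* \rangle \ge \gamma^2/2$, so $\langle v, z_s \rangle \ge \gamma^2/2 - 6\omega(x)/5 \ge \gamma^2/4$ for $\delta$ sufficiently small. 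Consequently $\omega$ is non-increasing along the chord, so $\omega(s) \le \omega(z_s)$, and the Macbeath inclusion $6x - 5z_s \in K$ gives $\omega(z_s) \le (6/5)\,\omega(x) \le 2\delta/\gamma$, establishing $\width_v(S) \le 2\delta/\gamma \le c_1\delta$. Without this positivity the chord could contribute an extra $\ray(z_s)$ term, losing a factor of $1/\gamma$ and yielding only an $O(\delta/\gamma^2)$ bound; the assumption $\delta \le \Delta_0$ is precisely what averts this loss.
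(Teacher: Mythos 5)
Your proof is correct, but it takes a genuinely different route from the paper. The paper leans on the helper lemmas developed earlier: for part (a), it shows $p \in M^{1/\gamma}(y)$ for $y$ the entry point of $Op$ into $M$, then chains Lemma~\ref{lem:mac-mac} ($M'(y) \subseteq M^{4/5}(x)$) and Lemma~\ref{lem:sym} (scaling preserves containment for centrally symmetric bodies) to conclude $p \in \widehat{M}$; for part (b), it builds a cap $C$ through $y$ parallel to the supporting hyperplane, bounds $\width(C)$ by $\ray(y)$, and invokes Lemma~\ref{lem:mac-cap2} to get $S \subseteq C^2$; part (c) then falls out of (b) using $M'(x) \subseteq S$ and the fact that $\widehat{M}$ is the $(20/\gamma)$-scaling of $M'(x)$. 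You instead work directly from the definition $M^\lambda(x) = x + \lambda\bigl((K-x)\cap(x-K)\bigr)$: part (a) is an explicit convex-combination witness using the point $6x - 5z \in K$; parts (b) and (c) both flow from elementary inner-product bounds on $\omega(q) = \langle v, p^*-q\rangle$, with (c) proved first from $\omega(x) \le \tfrac{5}{4}\omega(z^*)$ and central symmetry, and (b) from a monotonicity argument showing $\omega$ decreases along rays from $O$, which is where the hypothesis $\delta \le \Delta_0$ does real work (guaranteeing $\langle v, z_s\rangle > 0$). Your approach avoids Lemmas~\ref{lem:mac-mac}, \ref{lem:mac-cap2}, and \ref{lem:sym} entirely, at the cost of more explicit computation, and incidentally yields sharper constants ($2/\gamma$ and $40/(3\gamma^2)$ rather than $8/(3\gamma)$ and $160/(3\gamma^2)$); the paper's version is shorter because it amortizes the work into the reusable lemmas. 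One small presentational gap: when invoking Lemma~\ref{lem:raydist}(c) to get $\langle v, p^*\rangle \ge \gamma^2/2$, you should spell out that the lemma gives $\cos\theta \ge \gamma$ for the angle $\theta$ between the ray $Op^*$ and the normal $v$ (in fact the simpler observation that $\langle v,p^*\rangle$ equals the distance from $O$ to the supporting hyperplane, which is at least $\gamma/2$ since $\gamma B_0 \subseteq K$, already suffices and is tighter). All steps check out.
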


\begin{proof}
We first prove (a). Consider any point $p \in \partial K \cap S$. Let $y$ denote the first point of intersection of the ray $Op$ with the Macbeath region $M$. To prove (a), it suffices to show that the segment $yp$ is contained in $\widehat{M}$ which, by convexity, is equivalent to showing that both points $y$ and $p$ are contained in $\widehat{M}$. Since $y \in M$, we have $y \in \widehat{M}$. To prove that $p \in \widehat{M}$, observe that a straightforward consequence of the definition of Macbeath regions is that $M(y)$ must contain a ball of radius $\delta(y)$ centered at $y$. Further, by Lemma~\ref{lem:raydist}(a), $\|yp\| = \ray(y) \le \delta(y) / \gamma$. It follows that $p \in M^{1/\gamma}(y)$. Also, since $y \in M'(x)$, it follows trivially that $M'(y)$ overlaps with $M'(x)$. Thus, by Lemma~\ref{lem:mac-mac}, $M'(y) \subseteq M^{4/5}(x)$. Applying Lemma~\ref{lem:sym} to $M'(y)$ and $M^{4/5}(x)$ with $\lambda = 5/\gamma$, we obtain $M^{1/\gamma}(y) \subseteq M^{4/\gamma}(x) = \widehat{M}$. Thus $p \in \widehat{M}$, which proves (a).

To prove (b), let $p$ be any point of $\partial K \cap S$ and let $y$ denote any point in the intersection of the ray $Op$ with $M'(x)$. Let $h$ denote any hyperplane supporting $K$ at $p$. Let $v$ denote the outward normal to $h$. We translate hyperplane $h$ to pass through $y$ and let $C$ denote the cap of $K$ cut by the resulting hyperplane. Clearly $\width(C) \le \|py\|$. By Lemma~\ref{lem:raydist}(a), $\|py\| = \ray(y) \le \delta(y) / \gamma$. Since $y \in M'(x)$, by Lemma~\ref{lem:core-delta}, $\delta(y) \le 4 \delta(x) / 3 = 4 \delta / 3$. Thus $\width(C) \le \delta(y) / \gamma \le 4 \delta / (3 \gamma)$. Also, by Lemma~\ref{lem:mac-cap2}, since $M'(x)$ overlaps $C$, $M'(x) \subseteq C^2$. It follows from convexity that $S \subseteq C^2$ and thus 
\[
	\width_v(S) 
		~ \le ~ \width(C^2) 
		~ \le ~ 2 \, \width(C) 
		~ \le ~ 8 \delta / (3 \gamma),
\]
which proves (b).

To prove (c), recall that $M'(x) \subseteq S$, which implies that $\width_v(M'(x)) \le 8 \delta / (3 \gamma)$. Thus $\width_v(\widehat{M}) = (20/\gamma) \width_v(M'(x)) \le 160 \kern+1pt \delta / (3 \gamma^2)$.
\end{proof}

\subsection{Representation Conversions}

Convex sets are naturally described in two ways, as the convex hull of a discrete set of points and as the intersection of a discrete set of halfspaces. Some computational tasks are more easily performed using one operation or the other. For this reason, it will be useful to be able to convert between one representation and the other. Also, when approximate representations suffice, it will be useful to prune a large set down to a smaller size. In this section we will present a few technical utilities to perform these conversions. 

Given an $n$-element point set in $\RE^d$, Chan showed that it is possible to construct an $\eps$-kernel of size $O(1/\eps^{(d-1)/2})$ in time $O(n + 1/\eps^{d-1})$~\cite{Cha06}. The following lemma shows that, by applying Chan's construction, is is possible to efficiently approximate the convex hull of $n$ points as the intersection of halfspaces.

\begin{lemma} \label{lem:conversion}
Let $\gamma < 1$ be a positive constant, and $\eps > 0$ be a real parameter. Let $P$ be a polytope in $\gamma$-canonical form represented as the convex hull of $n$ points. In $O(n + 1/\eps^{d-1})$ time it is possible to compute a polytope $P'$ represented as the intersection of $O(1/\eps^{(d-1)/2})$ halfspaces such that $P'$ is an inner absolute $\eps$-approximation of $P$.
\end{lemma}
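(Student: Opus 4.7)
The plan is to combine Chan's $\eps$-kernel construction with the classical Dudley halfspace approximation, and then shrink the result slightly to force an inner approximation. First, I would apply Chan's algorithm to the $n$ input vertices with approximation parameter $\eps' = c\eps$ for a suitably small constant $c$ (depending on $\gamma$ and $d$). This produces, in $O(n + 1/\eps^{d-1})$ time, a kernel $Q \subseteq S$ of size $O(1/\eps^{(d-1)/2})$ satisfying $\width_v(Q) \ge (1-\eps')\width_v(P)$ for every $v$. Since $Q$ is a subset of the generators of $P$, we have $\conv(Q) \subseteq P$; and since $\diam(P) \le 1$ in $\gamma$-canonical form, the width guarantee translates (via the usual equivalence between support functions and Hausdorff distance) into the Hausdorff bound $d_H(P,\conv(Q)) \le \eps'$.

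Next, I would construct a $\sqrt{\eps'}$-net $V$ of directions on the unit sphere, of size $|V| = O(1/\eps^{(d-1)/2})$, and, for each $v \in V$, scan $Q$ to compute $s_v = \max_{q \in Q} v \cdot q$. These brute-force scans cost $O(|V|\cdot|Q|) = O(1/\eps^{d-1})$ time in total, comfortably within budget. I then form the candidate polytope
\[
P'' ~=~ \bigcap_{v \in V} \{x \in \RE^d : v \cdot x \le s_v\}.
\]
Each bounding hyperplane is tangent to $\conv(Q)$ in direction $v$, so $P''$ is precisely the Dudley outer approximation of $\conv(Q)$ at net resolution $\sqrt{\eps'}$. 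Since $\diam(\conv(Q))\le 1$, Dudley's theorem yields $\conv(Q) \subseteq P''$ and $d_H(P'',\conv(Q)) = O(\eps')$, so by the triangle inequality $d_H(P'', P) = O(\eps')$. But $P''$ is an \emph{outer} approximation and may protrude from $P$, so it is not yet the desired inner body.

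To restore inner containment, I would shrink $P''$ about the origin by a factor $1-\mu$, with $\mu = \Theta(\eps'/\gamma)$ chosen so that $\mu$ exceeds $2/\gamma$ times the Hausdorff slack $d_H(P'', P)$. Let $P' = (1-\mu)P''$; this is still the intersection of $|V| = O(1/\eps^{(d-1)/2})$ halfspaces (each offset $s_v$ simply becomes $(1-\mu)s_v$). The containment $P' \subseteq P$ follows from $\gamma$-canonical form: any $x = (1-\mu)(p+b) \in P'$, with $p \in P$ and $\|b\| \le d_H(P'',P)$, can be rewritten as the convex combination $(1-\mu)p + \mu(b/\mu)$, whose second point lies in $\gamma B_0 \subseteq P$ by the choice of $\mu$, so $x \in P$ by convexity. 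The bound $d_H(P', P'') = O(\mu) = O(\eps')$ combined with $d_H(P'', P) = O(\eps')$ gives $d_H(P', P) = O(\eps')$, and fixing $c$ small enough makes this at most $\eps$.

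The only real obstacle is calibrating the shrinkage in the last step: $\mu$ must be large enough to absorb both the kernel slack and the Dudley slack so that $P' \subseteq P$, yet small enough to keep the Hausdorff error $O(\eps)$. These two requirements become simultaneously satisfiable precisely because $P \supseteq \gamma B_0$, which is why the lemma explicitly assumes $\gamma$-canonical form. The total running time is $O(n + 1/\eps^{d-1})$ for Chan's kernel plus $O(1/\eps^{d-1})$ for the net evaluations and the final scaling, matching the stated bound.
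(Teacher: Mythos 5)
The core gap is that the body you call the ``Dudley outer approximation'' is not in fact Dudley's construction, and the Hausdorff bound $d_H(P'',\conv(Q)) = O(\eps')$ does not hold. Dudley's construction places a $\sqrt{\eps}$-net of \emph{points} on a sphere surrounding the body, projects each net point to its nearest point on the boundary, and takes the supporting hyperplanes at those \emph{projected contact points}. Your $P''$ instead takes supporting halfspaces in a $\sqrt{\eps}$-net of \emph{outward normal directions}. The two constructions coincide for smooth bodies, but differ crucially at flat facets, and that is exactly where yours fails.

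Concretely, take $K$ to be a hypercube, which is in $\gamma$-canonical form for a constant $\gamma$. Suppose the outward normal $e_1$ of one facet is not in the direction net $V$, and the nearest $v \in V$ is at angle $\approx \theta$ from $e_1$. Then $h_K(v) = \sum_i |v_i|/2 = 1/2 + \Theta(\theta)$, so the halfspace $\{x : v\cdot x \le h_K(v)\}$ admits the point $(1/2 + \Theta(\theta), 0, \ldots, 0)$, which lies at distance $\Theta(\theta)$ outside $K$; taking the nearest net directions on both sides of $e_1$ does not improve this (a short calculation shows the overshoot remains $\Theta(\theta)$ at the facet center). Hence $d_H(P'',K) = \Theta(\theta) = \Theta(\sqrt{\eps'})$, not $O(\eps')$. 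Forcing the error down to $O(\eps)$ with your construction would require $\theta = \Theta(\eps)$, hence $|V| = \Theta(1/\eps^{d-1})$ halfspaces, destroying the size bound. Genuine Dudley avoids this because, for a flat facet, an entire cap of sphere net points projects to that facet, so the facet's exact supporting hyperplane is reproduced in the intersection.

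The paper sidesteps the difficulty by dualizing. After Chan's kernel it polars the kernel to an H-polytope with $O(1/\eps^{(d-1)/2})$ facets, runs genuine Dudley there (each nearest-neighbor query from a surrounding grid point to the boundary is a fixed-dimensional quadratic program, linear in the number of halfspaces, so $O(1/\eps^{d-1})$ in total), and polars back before shrinking. Your first step (Chan's kernel) and last step (shrinking by $1-\mu$ to restore inner containment) are sound; the middle step of converting $\conv(Q)$ to an intersection of $O(1/\eps^{(d-1)/2})$ halfspaces needs Dudley's sphere-projection, not a net of normal directions.
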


\begin{proof}
Throughout the proof, to avoid tracking the numerous constant factors, we use the notation $O(\eps)$ to denote a quantity that is a suitable scaling of $\eps$ by a constant factor. Let $S$ be the input set of $n$ points such that $P = \conv(S)$. Since $P$ is in $\gamma$-canonical form, we have $S \subseteq B_0$, where $B_0$ denotes the ball of radius $r_0 = 1/2$ centered at the origin. By applying Chan's kernel construction~\cite{Cha06}, in time $O(n + 1/\eps^{d-1})$ we can compute a point set $S''$ of size $O(1/\eps^{(d-1)/2})$ such that $\conv(S'')$ is an absolute $O(\eps)$-approximation of $\conv(S)$.

We then apply the polar transformation to the points of $S''$ yielding a set of $O(1/\eps^{(d-1)/2})$ halfspaces in the dual space. It follows from standard properties of the polar transformation that the polytope $\widehat{P}$ defined by the intersection of these halfspaces is fat and has constant diameter (see, e.g., Lemma~{7.2} of the journal version of \cite{AFM11}). This can be performed in time $O(1/\eps^{(d-1)/2})$.

Next, take a sufficiently large hypercube centered at the origin that contains $\widehat{P}$ and there is constant separation between the boundary of this hypercube and $\widehat{P}$. (Side length $O(1/\gamma)$ suffices.) We superimpose a grid of side length $\Theta(\sqrt{\eps})$ on each of the $2 d$ facets of this hypercube. Letting $G$ denote the resulting set of grid points on the boundary of the hypercube, we have $|G| = O(1/\eps^{(d-1)/2})$. Through the use of quadratic programming we compute the nearest neighbor of each point of $G$ on the boundary of $\widehat{P}$. For each grid point this can be done in time linear in the number of halfspaces that define $\widehat{P}$ \cite{Meg83,Meg84}. Thus, the total time for computing all the nearest neighbors is $O(|S''| \cdot |G|) = O(1/\eps^{d-1})$. Letting $\widehat{S}$ denote the set of nearest neighbors so obtained, we have $|\widehat{S}| = O(1/\eps^{(d-1)/2})$. By standard results, $\conv(\widehat{S})$ is an absolute $O(\eps)$-approximation of $\widehat{P}$ \cite{Dud74,BrI76}.

We again apply the polar transformation, mapping the set $\widehat{S}$ back to the primal space to obtain a set $H$ of $O(1/\eps^{(d-1)/2})$ halfspaces. Let $P''$ be the polytope formed by intersecting these halfspaces. Recalling that $\conv(\widehat{S})$ is an absolute $O(\eps)$-approximation of $\widehat{P}$, it follows from standard results that $P''$ is an absolute $O(\eps)$-approximation of $\conv(S'')$. This step takes time $O(1/\eps^{(d-1)/2})$. 

Since $P''$ is an absolute $O(\eps)$-approximation of $\conv(S'')$, and $\conv(S'')$ is an absolute $O(\eps)$-approximation of $P$, it follows that (subject to a suitable choice of constant factors) $P''$ is an absolute $O(\eps)$-approximation of $P$. Define $P'$ to be the polytope obtained by first translating the bounding halfspaces of $P''$ (i.e., the halfspaces of $H$) towards the origin by an amount $\Theta(\eps)$ and then intersecting the resulting halfspaces. Clearly, $P'$ is then an absolute inner $O(\eps)$-approximation of $P$, as desired. 

The overall running time is dominated by the time needed to compute the kernel, and the time needed to compute the nearest neighbors for the points of $G$.
\end{proof}

The following lemma is useful when representing polytopes by the intersection of halfspaces.  

\begin{lemma} \label{lem:conversion2}
Let $\gamma < 1$ be a positive constant, and $\eps > 0$ be a real parameter. Let $P$ be a polytope in $\gamma$-canonical form represented as the intersection of $n$ halfspaces. In $O(n + 1/\eps^{d-1})$ time it is possible to compute a polytope $P'$ represented as the intersection of $O(1/\eps^{(d-1)/2})$ halfspaces such that $P'$ is an outer absolute $\eps$-approximation of $P$.
\end{lemma}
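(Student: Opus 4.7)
The plan is to mirror the argument of Lemma~\ref{lem:conversion} in the dual space, exploiting the symmetry between halfspace and vertex representations under the polar transform. Since $P$ is in $\gamma$-canonical form, the origin lies in the interior of $P$, so I may assume that the $n$ given halfspaces are written in the form $\{x : a_i \cdot x \le 1\}$. This identifies each input halfspace with a point $a_i$, and the set $\{a_1,\dots,a_n\}$ satisfies $\conv\{a_i\} = P^*$, the polar dual of $P$. A standard property of the polar transform (the same one invoked in Lemma~\ref{lem:conversion} via Lemma~{7.2} of the journal version of \cite{AFM11}) guarantees that $P^*$ is itself fat with constant diameter, that is, in $\gamma'$-canonical form for some constant $\gamma'$ depending on $\gamma$.

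Next, I would apply Chan's kernel construction \cite{Cha06} directly to the point set $\{a_1,\dots,a_n\}$, producing in time $O(n + 1/\eps^{d-1})$ a subset $S'' \subseteq \{a_i\}$ of size $O(1/\eps^{(d-1)/2})$ such that $\conv(S'')$ is an inner absolute $O(\eps)$-approximation of $P^*$. Define $P'$ as the intersection of the $|S''|$ halfspaces $\{x : a \cdot x \le 1\}$ for $a \in S''$. By construction, $P'$ is already expressed as the intersection of $O(1/\eps^{(d-1)/2})$ halfspaces, and it is exactly the polar dual of $\conv(S'')$. Because polar duality reverses containment and $\conv(S'') \subseteq P^*$, we obtain $P = P^{**} \subseteq P'$, so $P'$ is an outer approximation of $P$.

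To convert the Hausdorff bound from the dual back to the primal, I would invoke the Lipschitz behavior of the polar transform on fat bodies of bounded diameter: since both $P$ and $P^*$ are in canonical form (up to constants depending on $\gamma$), an absolute $O(\eps)$-approximation of $P^*$ by $\conv(S'')$ translates into an absolute $O(\eps)$-approximation of $P$ by $P'$, with the constant rescaled suitably. Choosing the hidden constant appropriately in Chan's construction, this yields the desired absolute outer $\eps$-approximation. The total running time is dominated by Chan's kernel step, namely $O(n + 1/\eps^{d-1})$.

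The one technical point that needs care, and which I view as the main (though very mild) obstacle, is justifying that the polar dual of a body in $\gamma$-canonical form is in $\gamma'$-canonical form with $\gamma'$ depending only on $\gamma$, and that Hausdorff distances are preserved up to a constant under polar duality in this regime. This is exactly the property the authors have already used in Lemma~\ref{lem:conversion} (citing Lemma~{7.2} of the journal version of \cite{AFM11}), so I would simply invoke that same black box rather than re-derive it. Note that in contrast to Lemma~\ref{lem:conversion}, no grid-and-quadratic-programming step is needed here: a kernel in the dual point set directly yields a small \emph{halfspace} representation in the primal, which is precisely the output form we want.
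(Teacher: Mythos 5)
Your proposal is correct and takes essentially the same route as the paper: map the $n$ halfspaces to points via polar duality, run Chan's $O(n + 1/\eps^{d-1})$ kernel construction in the dual to get an inner approximation of $P^*$ by $O(1/\eps^{(d-1)/2})$ points, and dualize back to obtain the outer halfspace approximation $P'$, relying on the stability of canonical form and Hausdorff distance under polarity. The only (cosmetic) difference is that you make the containment-reversal step $\conv(S'') \subseteq P^* \Rightarrow P \subseteq P'$ explicit, whereas the paper leaves it implicit.
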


\begin{proof}
Let $H$ denote the set of $n$ halfspaces defining $P$. We apply the polar transformation to the halfspaces of $H$ obtaining a set $S$ of $n$ points in the dual space. It follows from standard properties of the polar transformation that $\conv(S)$ is fat and has constant diameter (see, e.g., the journal version of \cite{AFM11}). This step can be performed in $O(n)$ time. By applying Chan's kernel construction~\cite{Cha06}, in time $O(n + 1/\eps^{d-1})$, we can compute a point set $S'$ of size $O(1/\eps^{(d-1)/2})$ such that $\conv(S')$ is an inner absolute $O(\eps)$-approximation of $\conv(S)$. We again apply the polar transformation, mapping the set $S'$ back to the primal space to obtain a set $H'$ of $O(1/\eps^{(d-1)/2})$ halfspaces. Let $P'$ be the polytope formed by intersecting these halfspaces. Since $\conv(S')$ is an inner absolute $O(\eps)$-approximation of $\conv(S)$, it follows that (subject to a suitable choice of constant factors), $P'$ is an absolute outer $O(\eps)$-approximation of $P$, as desired. The total time is dominated by the time to compute the kernel.
\end{proof}

\noindent \emph{Remark:} Theorem~\ref{thm:kernel} shows that an $\eps$-kernel of size $O(1/\eps^{(d-1)/2})$ can be computed in time $O(n \log \inv \eps + 1/\eps^{(d-1)/2 + \alpha})$.
The construction time in Lemma~\ref{lem:conversion2} is asymptotically the same as  the time needed to construct an $\eps$-kernel. Therefore, the construction time can be reduced to this quantity.

\section{Hierarchy of Macbeath Ellipsoids} \label{s:hierarchy}

The data structure presented in~\cite{AFM17a} for the approximate polytope membership problem is based on constructing a hierarchy of ellipsoids. In this section, we present a variant of this structure, which will play an important role in our constructions. 

For a Macbeath region $M^{\lambda}(x)$, we denote its circumscribing John ellipsoid by $E^{\lambda}(x)$, which we call a \emph{Macbeath ellipsoid}. Since Macbeath regions are centrally symmetric and the constant in John's Theorem~\cite{John} is $\sqrt{d}$ for centrally symmetric bodies, we have $E^{\lambda}(x) \subseteq M^{\lambda \sqrt{d}}(x)$. Recall the constant $\Delta_0 = \frac{1}{2}(\gamma^2 / 4d)^d$ defined in the statement of Lemma~\ref{lem:bound-mac}, and define $\lambda_0 =  1/(20d)$.
Each level of our structure is based on the following lemma. 
(We caution the reader that in the lemmas of this section, the value of $n$ used in the application of the lemma may differ from the original input size.) 

\begin{lemma} \label{lem:level}
Let $\gamma < 1$ be a positive constant, and let $0 < \delta \le \Delta_0$ be a real parameter. Let $P$ be a polytope in $\gamma$-canonical form, represented as the intersection of $n$ halfspaces.  There exists a set $X \subseteq \partial P(\delta)$ consisting of $O(1/\delta^{(d-1)/2})$ points such that the following properties hold:
\begin{enumerate}
\item[$(a)$\hspace{-5pt}] The set of Macbeath regions $\{M^{\lambda_0}(x) : x \in X\}$ are pairwise disjoint.
\item[$(b)$\hspace{-4pt}] The set of Macbeath ellipsoids $\{E^{4 \lambda_0 \sqrt{d}}(x) : x \in X\}$ together cover $\partial P(\delta)$.
\end{enumerate}
Furthermore, in  $O(n/\delta^{d-1} + 1/\delta^{3(d-1)/2})$ time, we can construct the set of Macbeath ellipsoids 
$\{E^{4 \lambda_0 \sqrt{d}}(x) : x \in X\}$.
\end{lemma}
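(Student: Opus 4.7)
The plan is to establish parts (a) and (b) by a greedy maximal packing argument and then to realize this packing algorithmically by first compressing $P$ to a coarse outer approximation with only $O(1/\delta^{(d-1)/2})$ halfspaces, and drawing candidate centers from a fine grid on the boundary.

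For existence, I would construct $X$ greedily: starting from $X = \emptyset$, repeatedly add any $x \in \partial P(\delta)$ whose region $M^{\lambda_0}(x)$ is disjoint from the Macbeath regions of all previously chosen points, until no such $x$ remains. Property (a) holds by construction. Since $\lambda_0 = 1/(20d) \le 1/5$ and $\delta \le \Delta_0$, Lemma~\ref{lem:bound-mac} gives $|X| = O(1/\delta^{(d-1)/2})$. For (b), fix any $y \in \partial P(\delta)$; by maximality $M^{\lambda_0}(y) \cap M^{\lambda_0}(x) \ne \emptyset$ for some $x \in X$, and Lemma~\ref{lem:mac-mac} then yields $y \in M^{\lambda_0}(y) \subseteq M^{4\lambda_0}(x) \subseteq M^{4\lambda_0\sqrt{d}}(x) \subseteq E^{4\lambda_0\sqrt{d}}(x)$, the final inclusion being by the definition of $E^{4\lambda_0\sqrt{d}}(x)$ as the circumscribing John ellipsoid of $M^{4\lambda_0\sqrt{d}}(x)$.

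For the algorithmic construction I would proceed in three stages. First, apply Lemma~\ref{lem:conversion2} with error parameter $\Theta(\delta)$ to obtain an outer $\Theta(\delta)$-approximation $P^*$ of $P$ described by $m = O(1/\delta^{(d-1)/2})$ halfspaces, in $O(n + 1/\delta^{d-1})$ time. Second, generate a candidate set $Y \subseteq \partial P(\delta)$ of size $O(1/\delta^{d-1})$ that is $\Theta(\lambda_0 \delta)$-dense on $\partial P(\delta)$: place a grid of spacing $\Theta(\delta)$ on the boundary of a bounding hypercube of $P$, and for each grid point $g$ solve one linear program on the original $n$ halfspaces (which is $O(n)$ time in fixed dimension) to compute where the ray from the origin through $g$ meets $\partial P$, then retract inward by $\delta$ along the ray to obtain a point of $\partial P(\delta)$. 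This stage costs $O(n/\delta^{d-1})$. Third, run the greedy packing on $Y$, but perform all Macbeath-region and ellipsoid computations against $P^*$ rather than $P$. Each Macbeath region $M^{\lambda_0}(y)$ is $P^* \cap (2y - P^*)$, an intersection of at most $2m$ halfspaces; in fixed dimension a constant-factor approximation of its circumscribing John ellipsoid (which is sufficient after adjusting constants) can be computed in $O(m)$ time via standard LP-type algorithms. A spatial grid that buckets the currently selected Macbeath regions reduces each disjointness test to $O(1)$ amortized time. The greedy loop therefore costs $O(|Y| \cdot m) = O(1/\delta^{3(d-1)/2})$; the final John-ellipsoid computations for the $O(1/\delta^{(d-1)/2})$ selected centers cost an additional $O(1/\delta^{d-1})$. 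Adding the three stages yields the claimed $O(n/\delta^{d-1} + 1/\delta^{3(d-1)/2})$.

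The main technical obstacle is justifying that Macbeath regions taken with respect to $P^*$ stand in, up to constants, for those of $P$. Because $P \subseteq P^*$ with Hausdorff distance $\Theta(\delta)$ and because $M^{\lambda_0}(x)$ for $x$ at boundary-distance $\Theta(\delta)$ has diameter $\Theta(\delta)$ (via Lemma~\ref{lem:core-delta} together with the fact that $M^1(x)$ contains a ball of radius $\delta(x)$ around $x$), the perturbation distorts each relevant Macbeath region by only a constant factor. By taking the approximation parameter of Lemma~\ref{lem:conversion2} and the grid spacing small enough relative to $\lambda_0$, disjointness with respect to $P^*$ implies disjointness with respect to $P$ at the scale $\lambda_0$, and coverings translate between the two polytopes at the cost of a further constant factor that is absorbed by the $\sqrt{d}$ slack built into $E^{4\lambda_0\sqrt{d}}(x)$. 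Carefully carrying out this constant-factor perturbation bookkeeping, and in particular verifying that the $Y$-density is fine enough that every $y' \in \partial P(\delta)$ lies in some $M^{\lambda_0}(y)$ with $y \in Y$ so that maximality transfers from $Y$ to all of $\partial P(\delta)$, is the most delicate part of the argument.
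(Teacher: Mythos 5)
Your existence/packing argument for (a) and (b) is in the same spirit as the paper's, but your algorithmic construction takes an unnecessary detour that creates a real gap. You compress $P$ to an outer approximation $P^*$ with $m = O(1/\delta^{(d-1)/2})$ facets and then compute all Macbeath regions and John ellipsoids \emph{against $P^*$}, deferring to a ``perturbation bookkeeping'' argument that you explicitly do not carry out. This is problematic for two reasons. First, the lemma's ellipsoids $E^{4\lambda_0\sqrt{d}}(x)$ are defined with respect to $P$, not $P^*$; downstream uses (e.g.\ $E^{4\lambda_0\sqrt d}(x)\subseteq M^{4\lambda_0 d}(x) = M'(x)$ in Lemma~\ref{lem:APM-fat-corr}) depend on that. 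Second, the factor $\sqrt{d}$ that you hope will ``absorb'' the $P$-to-$P^*$ distortion is already fully spent in the coverage argument: the $4$ comes from Lemma~\ref{lem:mac-mac} and the $\sqrt{d}$ from $E^{\lambda_0}(x) \subseteq M^{\lambda_0\sqrt d}(x)$, leaving no slack. A careful perturbation analysis can be made to work (for $x$ at boundary distance $\delta$, a Hausdorff perturbation of size $\eta$ inflates the Macbeath region by a factor $\le 1+\eta/\delta$ in each direction, so choosing the Hausdorff error in Lemma~\ref{lem:conversion2} to be $c''\delta$ with $c''<\sqrt d - 1$ would suffice), but you would then also need to recompute the John ellipsoids against $P$ for the final centers and re-verify the coverage chain with the modified constants. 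None of this is done.

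The deeper observation is that the $P^*$ compression is simply not needed. The budget allows $O(n/\delta^{d-1})$, so one can afford to compute each of the $O(1/\delta^{d-1})$ candidate Macbeath regions and their John ellipsoids directly against the original $P$ (each is an intersection of $n$ halfspaces, so $O(n)$ per region via the LP-type John ellipsoid algorithm), for $O(n/\delta^{d-1})$ total. The greedy maximal-packing loop is then a brute-force pairwise intersection test among ellipsoids, and since at most $O(1/\delta^{(d-1)/2})$ survive (Lemma~\ref{lem:bound-mac}) while there are $O(1/\delta^{d-1})$ candidates, this costs $O(1/\delta^{3(d-1)/2})$ with no bucketing data structure. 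This is exactly the paper's construction, and it sidesteps your entire perturbation obstacle.

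Two smaller points. (i) Retracting inward by $\delta$ \emph{along the ray} does not land on $\partial P(\delta)$; the perpendicular distance to $\partial P$ of the resulting point lies in $[\gamma\delta,\delta]$ by Lemma~\ref{lem:raydist}(a), not exactly $\delta$. It is cleaner (and what the paper does) to translate the $n$ bounding halfspaces inward by $\delta$ to obtain $P(\delta)$ directly, and intersect the rays with $\partial P(\delta)$. (ii) Your transfer of maximality from the finite grid $Y$ to all of $\partial P(\delta)$ requires the ball-containment argument the paper uses (that $M^{\lambda_0\sqrt d}(x')$ contains a ball of radius $\lambda_0\sqrt{d}\,\delta$ around $x'$, and the grid is $\lambda_0\sqrt{d}\,\delta$-dense), not merely a density hypothesis; you flag this but do not supply it.
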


\begin{proof}
We first show how to construct the required set of Macbeath ellipsoids. Translate each bounding halfspace of $P$ towards the origin by amount $\delta$. It is easy to see that the polytope $P(\delta)$ is the intersection of the translated halfspaces. This can be done in $O(n)$ time.

Recalling that  $P \subseteq B_0$, where $B_0$ is the ball of radius $r_0 = 1/2$, consider the hypercube just enclosing $B_0$. Superimpose a $\Theta(\delta)$-grid on each of the $2d$ facets of this hypercube. Intersect the segment joining the origin to each grid point with $\partial P(\delta)$, and let $X' \subset \partial P(\delta)$ denote the resulting set of intersection points. Note that $|X'| = O(1/\delta^{d-1})$. Using the fact that $P(\delta)$ is fat, a straightforward geometric calculation shows that for any point on $\partial P(\delta)$, there is a point of $X'$ within distance $c \delta$ of it, where $c$ is a suitable constant. (Adjusting the constant factor in the grid spacing, we can ensure that $c \le \lambda_0 \sqrt{d}$, which is a fact that we will use later in the proof.) As each point of $X'$ can be determined in $O(n)$ time, $X'$ can be computed in $O(n/\delta^{d-1})$ time.

For each $x' \in X'$, construct $M^{\lambda_0}(x')$. Let $\mathcal{M}'$ denote the resulting set of Macbeath regions. As a straightforward consequence of the definition of Macbeath regions, we can compute each Macbeath region in $O(n)$ time (i.e., in time proportional to the number of halfspaces that define $P$). Note that we represent each Macbeath region as the intersection of $n$ halfspaces. For each Macbeath region $M^{\lambda_0}(x') \in \mathcal{M}'$, determine the circumscribing John ellipsoid $E^{\lambda_0}(x')$. By standard results, we can construct the John ellipsoid of a convex polytope in time that is linear in the number of its defining halfspaces \cite{ChM96}. Thus, this step also takes time $O(n |\mathcal{M}'|) = O(n/\delta^{d-1})$.

Next, we will determine a maximal subset $\mathcal{M} \subseteq \mathcal{M}'$ such that the John ellipsoids associated with the Macbeath regions of $\mathcal{M}$ are disjoint. Initialize $\mathcal{M} = \emptyset$. Examine the Macbeath regions of $\mathcal{M}'$ one by one. Insert the Macbeath region into $\mathcal{M}$ if its associated John ellipsoid does not intersect the John ellipsoid of any Macbeath region of $\mathcal{M}$.  Clearly, this method yields a maximal subset $\mathcal{M} \subseteq \mathcal{M}'$ such that the associated John ellipsoids are disjoint. To bound the time required for this step, observe that the Macbeath regions of $\mathcal{M}$ are disjoint, and so by Lemma~\ref{lem:bound-mac}, $|\mathcal{M}| = O(1/\delta^{(d-1)/2})$. Since it takes constant time to check whether two ellipsoids intersect, it follows that the time required is $O(|\mathcal{M}'| \cdot |\mathcal{M}|) = O(1/\delta^{3(d-1)/2})$.

Finally, to obtain the desired ellipsoids let $X$ denote the set of centers of the Macbeath regions of $\mathcal{M}$. For each $x \in X$, we scale the associated ellipsoid $E^{\lambda_0}(x)$ constructed above about its center by a factor of $4 \sqrt{d}$ to obtain the ellipsoid $E^{4\lambda_0 \sqrt{d}}(x)$. This step can be done in time $O(|\mathcal{M}|) = O(1/\delta^{(d-1)/2})$. 

By combining the time of the above steps we obtain the desired overall construction time of $O(n/\delta^{d-1} + 1/\delta^{3(d-1)/2})$. The bound on $|X|$ follows from the bound on $\mathcal{M}$. By our earlier remarks, the set of Macbeath regions $\{M^{\lambda_0}(x) : x \in X\}$ are pairwise disjoint, which proves Property~(a). 

It remains to establish Property~(b), that is, to show that the union of the ellipsoids $\{E^{4\lambda_0 \sqrt{d}}(x) : x \in X\}$ covers $\partial P(\delta)$. Towards these end, consider any point $p \in \partial P(\delta)$. We will show that there is an $x \in X$ such that the ellipsoid $E^{4\lambda_0 \sqrt{d}}(x)$ contains $p$. Recall that there is a point $x'  \in X'$ that is within distance $c \delta$ of $p$, where $c$ is a constant no more than $\lambda_0 \sqrt{d}$. A straightforward consequence of the definition of Macbeath regions is that, for any point $y \in P$, the Macbeath region $M(y)$ contains a ball of radius $\delta(y)$ centered at $y$. It follows that $M^{\lambda_0\sqrt{d}}(x')$ contains a ball of radius $\lambda_0 \sqrt{d} \, \delta$ centered at $x'$. Hence, $p \in M^{\lambda_0\sqrt{d}}(x') \subseteq M^{4\lambda_0\sqrt{d}}(x') \subseteq E^{4\lambda_0\sqrt{d}}(x')$. Thus, if $x' \in X$, we are done. 

We next consider the case when $x' \notin X$. In this case, it follows from our construction that there is an $x \in X$ such that $E^{\lambda_0}(x)$ intersects $E^{\lambda_0}(x')$. Recall that $E^{\lambda_0}(x) \subseteq M^{\lambda_0 \sqrt{d}}(x)$ and $E^{\lambda_0}(x') \subseteq M^{\lambda_0 \sqrt{d}}(x')$. Thus $M^{\lambda_0 \sqrt{d}}(x)$ intersects $M^{\lambda_0 \sqrt{d}}(x')$. Applying Lemma~\ref{lem:mac-mac}, it follows that $M^{\lambda_0 \sqrt{d}}(x') \subseteq M^{4\lambda_0\sqrt{d}}(x)$. Thus $p \in M^{4\lambda_0\sqrt{d}}(x) \subseteq E^{4\lambda_0\sqrt{d}}(x)$. It follows that the ellipsoids $E^{4\lambda_0 \sqrt{d}}(x)$, for $x \in X$, together cover $\partial P(\delta)$, which completes the proof of the lemma.
\end{proof}

Based on the above lemma, we are now ready to describe our hierarchical data structure. Let $P$ be a polytope in $\gamma$-canonical form, where $\gamma$ is a constant. Recall the constant $\Delta_0 = \frac{1}{2} (\gamma^2 / (4d))^d$, and for $i \ge 0$ define $\Delta_i = \Delta_0 / 2^i$. The data structure consists of levels $0, 1, \ldots, \ell$, where $\ell$ is the smallest integer such that $\Delta_{\ell} \le \delta$. Since $\gamma$ is a constant, $\ell = O(\log \inv{\delta})$. Since $P$ is in $\gamma$-canonical form, the origin $O$ is at distance at least $\gamma/2$ from $\partial P$. Since $\Delta_0 < \gamma/2$, it follows that $P(\Delta_0)$ contains $O$ and $P(\Delta_i) \subset P(\Delta_{i+1})$ for $0 \le i \le \ell-1$. For each level $i$, we apply Lemma~\ref{lem:level}, setting $\delta$ in the lemma to $\Delta_i$. In $O(n/\Delta_i^{d-1} + 1/\Delta_i^{3(d-1)/2})$ time, we obtain a set of $O(1/\Delta_i^{(d-1)/2})$ ellipsoids that cover $\partial P(\Delta_i)$. Summing over all levels, the number of ellipsoids is $O(1/\delta^{(d-1)/2})$ and the time taken is $O(n/\delta^{d-1} + 1/\delta^{3(d-1)/2})$.

Our data structure is a directed acyclic graph (DAG), where the nodes at level $i$ correspond to the ellipsoids computed for level $i$. The children of an ellipsoid $E$ at level $i$ are the ellipsoids $E'$ at level $i+1$ such that there exists a ray from the origin that simultaneously intersects $E$ and $E'$. Since this is a constant time operation for any two given ellipsoids, it takes $O(1/\delta^{d-1})$ time to find the children of all the nodes in the DAG. It is convenient to root the DAG by creating a special node whose children are all the nodes of level zero. An important property of the hierarchy is that each node only has $O(1)$ children. The proof of this property is similar to that given in~\cite{AFM17a}. We include the proof for the sake of completeness. For the root, this follows from the fact that the number of nodes of level zero is  $O(1/\Delta_0^{(d-1)/2})$ and $\Delta_0$ is a constant. For non-root nodes, the proof is based on the following lemma, which is proved in~\cite{AFM17a}. 

\begin{lemma} \label{lem:degree}
Let $K \subset \RE^d$ be a convex body in $\gamma$-canonical form for some constant $\gamma$, and let $\lambda \le 1/5$ be any constant. Let $x \in K$ such that $\delta(x) \le \Delta_0$. Consider the generalized cone formed by rays emanating from the center $O$ of $K$ and intersecting $M'(x)$. Let $Y$ denote any set of points $y$ such that $\delta(y) = \delta(x)/2$ and the set of Macbeath regions $M^{\lambda}(y)$ are disjoint. Let $Y' \subseteq Y$ denote the set of points $y$ such that $M'(y)$ overlaps the aforementioned cone. Then $|Y'| = O(1)$.
\end{lemma}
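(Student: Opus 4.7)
My plan is to bound $|Y'|$ by localizing each $y \in Y'$ inside an enlarged Macbeath region centered at $x$, and then applying a volume/packing argument that exploits the pairwise disjointness of the $M^\lambda(y)$'s.

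First, for each $y \in Y'$, I would fix a witness point $p_y \in M'(y) \cap C$, where $C$ denotes the union of rays from $O$ that meet $M'(x)$. To show $p_y \in \Sh(M'(x))$, observe that by Lemma~\ref{lem:core-delta}, $\delta(p_y) \le 4\delta(y)/3 = 2\delta(x)/3$, while each $q \in M'(x)$ satisfies $\delta(q) \ge 4\delta(x)/5$. Since $\delta$ is concave along any line in $K$ and equals $\gamma/2 \gg \Delta_0 \ge \delta(x)$ at $O$, the values $\delta(p_y)$ and $\delta(q)$ (both well below $\gamma/2$) can be attained only on the descending branch of $\delta$ along the ray, where $\delta$ is strictly decreasing outward. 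Hence $p_y$ must lie strictly beyond $M'(x)$ along the ray, placing $p_y \in \Sh(M'(x))$. Lemma~\ref{lem:shadow}(a) then gives $p_y \in \widehat{M} := M^{4/\gamma}(x)$.

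Next, I would show $y \in M^c(x)$ for a constant $c$ depending only on $\gamma$ and $d$. Since $p_y \in M'(y)$ and $M'(y)$ is centrally symmetric about $y$, the reflection $2y - p_y$ also lies in $M'(y) \subseteq K$. The containment $p_y \in M^{4/\gamma}(x)$, together with $p_y \in M'(y)$, forces $M'(y)$ to overlap an expanded Macbeath region centered at $x$. By exhibiting an intermediate Macbeath region $M^{1/5}(x')$ with $x'$ close to $p_y$, to which Lemma~\ref{lem:mac-mac} does apply, and then invoking Lemma~\ref{lem:sym} to re-center the resulting containment at $x$, one obtains $M'(y) \subseteq M^c(x)$, and in particular $y \in M^c(x)$.

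Finally, I would close the argument with a packing bound. Every $M^\lambda(y)$ lies inside a slightly enlarged region $M^{c'}(x)$ (absorbing the extent of $M^\lambda(y)$ into $c$), and by a standard local comparability property of Macbeath regions, $\mathrm{vol}(M^\lambda(y)) = \Theta(\mathrm{vol}(M^\lambda(x)))$ with constants depending only on $\gamma$, $\lambda$, and $d$. Combined with the disjointness of the $M^\lambda(y)$'s and the bound $\mathrm{vol}(M^{c'}(x)) = O(\mathrm{vol}(M^\lambda(x)))$, this yields $|Y'| = O(1)$. The main obstacle is the localization step: Lemma~\ref{lem:mac-mac} applies only for scaling parameter $\le 1/5$, whereas $\widehat{M}$ has scaling $4/\gamma \gg 1/5$, so the containment $y \in M^c(x)$ must be built carefully through intermediate Macbeath regions and Lemma~\ref{lem:sym}, rather than by a direct single application of Lemma~\ref{lem:mac-mac}.
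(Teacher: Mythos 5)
The paper does not actually prove Lemma~\ref{lem:degree} in this document---it defers to~\cite{AFM17a}---so I assess your proposal on its own terms. Your localization of the witness $p_y$ is correct: since $\delta(p_y) \le 4\delta(y)/3 = 2\delta(x)/3$ while every $q \in M'(x)$ has $\delta(q) \ge 4\delta(x)/5$ (Lemma~\ref{lem:core-delta}), and $\delta$ is concave along the ray with $\delta(O) \ge \gamma/2 \gg \Delta_0$, the point $p_y$ lies strictly beyond $M'(x)$ on its ray, giving $p_y \in \Sh(M'(x)) \subseteq M^{4/\gamma}(x)$ by Lemma~\ref{lem:shadow}(a).

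The two remaining steps have real gaps. First, the localization $M'(y) \subseteq M^c(x)$, which you yourself flag as the ``main obstacle,'' is never actually carried out: $p_y \in M^{4/\gamma}(x)$ involves a scaling factor $4/\gamma > 1$, far outside the $\lambda \le 1/5$ regime of Lemma~\ref{lem:mac-mac}, and the proposed fix---an ``intermediate Macbeath region'' followed by Lemma~\ref{lem:sym} to ``re-center''---is not coherent, because Lemma~\ref{lem:sym} scales each body about its own center and cannot transfer a containment centered at $p_y$ into one centered at $x$. Second, and more seriously, even granting a containment $M^\lambda(y) \subseteq M^{c'}(x)$, the asserted volume comparison $\mathrm{vol}(M^\lambda(y)) = \Theta(\mathrm{vol}(M^\lambda(x)))$ does not follow. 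One-sided containment gives only the upper bound $\mathrm{vol}(M^\lambda(y)) = O(\mathrm{vol}(M^\lambda(x)))$; the packing argument requires the lower bound $\mathrm{vol}(M^\lambda(y)) = \Omega(\mathrm{vol}(M^\lambda(x)))$, and this is not automatic: $M'(x)$ may be highly elongated (near a flat patch of $\partial K$) while $M'(y)$ is much smaller, with $M'(y) \subseteq M^c(x)$ still holding. The ``standard local comparability'' you invoke is the two-sided consequence of Lemma~\ref{lem:mac-mac} when two \emph{shrunken} Macbeath regions overlap, and no such overlap with $M'(x)$ has been established. Obtaining the needed two-sided volume comparison---for instance by enclosing everything in a cap $C$ with $\Sh(M'(x)) \subseteq C$ and $\width(C) = O(\delta(x))$ via Lemma~\ref{lem:mac-cap2}, and then invoking an economical-cap-cover style volume estimate---is precisely where the substance of this lemma lies, and your sketch leaves that step open.
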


For any node $w$, we let $x_w$ denote the center of the associated ellipsoid. Consider any node $u$ at level $i \ge 0$. Also, consider the generalized cone formed by rays emanating from the origin that intersect the ellipsoid $E^{4\lambda_0\sqrt{d}}(x_u)$ associated with $u$. The children of $u$ are those nodes $v$ at level $i+1$ whose ellipsoid $E^{4\lambda_0\sqrt{d}}(x_v)$ intersects this generalized cone. 

Since $x_u \in \partial P(\Delta_i)$, we have $\delta(x_u) = \Delta_i \le \Delta_0$. Recall that $E^{4\lambda_0\sqrt{d}}(x_u) \subseteq M^{4\lambda_0 d}(x_u) = M'(x_u)$. Thus, the generalized cone of rays that intersect $M'(x_u)$ includes all the rays used to define the children of $u$. The points $x_v$ that form level $i+1$ of the structure lie on $\partial P(\Delta_{i+1})$ and thus satisfy $\delta(x_v) = \delta(x_u) / 2$. By Property~(a) of Lemma~\ref{lem:level}, the Macbeath regions $M^{\lambda_0}(x_v)$ are pairwise disjoint, thus they constitute a set $Y$ as described in the preconditions of Lemma~\ref{lem:degree}. Each child $v$ of $u$ corresponds to a point $x_v$ such that the ellipsoid $E^{4\lambda_0\sqrt{d}}(x_v)$ intersects the generalized cone. Reasoning as we did above for $x_u$, we have $E^{4\lambda_0\sqrt{d}}(x_v) \subseteq M'(x_v)$. Therefore, the points $x_v$ associated with the children of $u$ constitute a subset of the set $Y'$ given in the lemma. Therefore, the number of children of $u$ is $O(1)$, as desired.

Given a query ray $Oq$, our data structure allows us to quickly find a leaf node such that the associated ellipsoid intersects this ray. The query algorithm descends the DAG by starting at the root and visiting any node at level zero that intersects the ray. Letting $u$ denote the current node, we next visit any child of $u$ whose associated ellipsoid intersects the ray. We repeat this procedure until a leaf node is reached. As the number of levels is $O(\log\inv\delta)$, this quantity bounds the time taken by this procedure. We summarize the main result of this section.

\begin{lemma} \label{lem:hierarchy}
Let $\gamma < 1$ be a positive constant, and let $0 < \delta \le \Delta_0$ be a real parameter. Let $P$ be a polytope in $\gamma$-canonical form, represented as the intersection of $n$ halfspaces. In  $O(n/\delta^{d-1} + 1/\delta^{3(d-1)/2})$ time, we can construct
the DAG structure described above. In particular, the DAG satisfies the following properties: 
\begin{enumerate}
\item[$(a)$\hspace{-5pt}] The total number of nodes (including leaves), and the total space used by the DAG are both $O(1/\delta^{(d-1)/2})$. 
\item[$(b)$\hspace{-4pt}] Each leaf is associated with an ellipsoid $E^{4\lambda_0\sqrt{d}}(x)$, where $x \in \partial P(\delta)$. The union of the
ellipsoids associated with all the leaves covers $\partial P(\delta)$.
\item[$(c)$\hspace{-4pt}] Given a query ray $Oq$, in $O(\log\inv\delta)$ time, we can find a leaf node such that the associated ellipsoid
intersects this ray.
\end{enumerate}
\end{lemma}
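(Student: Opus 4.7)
The plan is to assemble the DAG by invoking Lemma~\ref{lem:level} once per level $i = 0, 1, \ldots, \ell$ (where $\ell$ is the smallest index with $\Delta_\ell \le \delta$), and then to verify the three properties by summing geometric series and appealing to Lemma~\ref{lem:degree}. Let $X_i \subseteq \partial P(\Delta_i)$ denote the set of centers returned at level $i$, so the nodes at level $i$ are the ellipsoids $\{E^{4\lambda_0\sqrt{d}}(x) : x \in X_i\}$. I would compute the edges of the DAG by the obvious rule: for each pair $(u,v)$ with $u$ at level $i$ and $v$ at level $i+1$, check in constant time whether some ray through $O$ meets both associated ellipsoids; the root is a dummy node whose children are all level-$0$ nodes.

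For the running time, Lemma~\ref{lem:level} contributes $O(n/\Delta_i^{d-1} + 1/\Delta_i^{3(d-1)/2})$ at level $i$, and since $\Delta_i = \Delta_0 / 2^i$ decays geometrically the total is dominated by $i=\ell$, giving $O(n/\delta^{d-1} + 1/\delta^{3(d-1)/2})$. The same geometric-sum argument bounds the total node count by $O(1/\delta^{(d-1)/2})$, proving property~(a); the edge-computation pass costs only $O(1/\delta^{d-1})$ (pairwise over all nodes, if needed) and is absorbed into the overall bound. Property~(b) is immediate from Lemma~\ref{lem:level}(b) at level $\ell$, since $\Delta_\ell \le \delta$, with at most a minor constant-factor tweak of the final-level scaling to fold the gap between $\Delta_\ell$ and $\delta$ into the ellipsoids.

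The heart of the argument is property~(c), which needs two invariants for the top-down descent. First, whenever the current node $u$ has an ellipsoid meeting the query ray $Oq$, some child of $u$ does too: extending $Oq$ to $\partial P(\Delta_{i+1})$, which is covered by the level-$(i+1)$ ellipsoids via Lemma~\ref{lem:level}(b), produces an ellipsoid $E^{4\lambda_0\sqrt{d}}(x_v)$ meeting $Oq$, and the same ray witnesses $v$ as a child of $u$. Second, each non-root node has $O(1)$ children. Here I would invoke the identity $4\lambda_0\sqrt{d}\cdot\sqrt{d} = 4\lambda_0 d = 1/5$, valid because $\lambda_0 = 1/(20d)$, which gives the chain of inclusions $E^{4\lambda_0\sqrt{d}}(x) \subseteq M^{4\lambda_0 d}(x) = M'(x)$; hence the cone of rays defining the children of $u$ sits inside the cone of rays meeting $M'(x_u)$. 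Since the centers $x_v$ at level $i+1$ lie on $\partial P(\Delta_{i+1}) = \partial P(\delta(x_u)/2)$ with pairwise disjoint $M^{\lambda_0}(x_v)$ by Lemma~\ref{lem:level}(a), Lemma~\ref{lem:degree} yields $O(1)$ children. Combined with the $O(\log\inv\delta)$ depth, this proves the query time.

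The step I expect to require the most care is the coordination of scaling constants between $\lambda_0$, $4\lambda_0\sqrt{d}$, and $1/5$ so that the inclusion $E^{4\lambda_0\sqrt{d}}(x) \subseteq M'(x)$ lines up with the hypotheses of Lemma~\ref{lem:degree}; once that is pinned down, the remainder of the argument reduces to bookkeeping of the geometric sums and the two invariants of the descent.
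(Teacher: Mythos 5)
Your proposal is correct and follows essentially the same route as the paper: build each level by one invocation of Lemma~\ref{lem:level} with $\delta = \Delta_i$, wire up parent--child edges by the constant-time shared-ray test, bound the out-degree of non-root nodes via the inclusion $E^{4\lambda_0\sqrt{d}}(x) \subseteq M'(x)$ together with Lemma~\ref{lem:degree} (and the root's degree by the constant size of level~$0$), and bound the time, space, and node count by the geometric sum over $\Delta_i = \Delta_0/2^i$. Your explicit descent invariant (that the ray $Oq$ must exit $\partial P(\Delta_{i+1})$ through some covering ellipsoid, which is then necessarily a child) is a point the paper leaves implicit, but it is correct and is the right justification for why the descent never gets stuck.
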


Given a convex body $K$ and query point $q$, an \emph{absolute $\eps$-APM} query returns a positive result if $q$ lies within $K$, a negative result if $q$ is at distance at least $\eps$ from $K$, and otherwise it may return either result. After a small enhancement, this DAG can be used for answering absolute $\eps$-APM queries for a polyope $P$ in $\gamma$-canonical form. We assume that $P$ is represented as the intersection of a set $H$ of $n$ halfspaces. We invoke the above lemma for $\delta = \eps \gamma / (2c_1)$, where $c_1$ is the constant of Lemma~\ref{lem:shadow}(b). We then associate each leaf of the DAG with a halfspace as follows. Let $x$ denote the center of the leaf ellipsoid and let $p$ denote the intersection of the ray $Ox$ with $\partial P$. Let $h \in H$ denote any supporting halfspace of $P$ (containing $P$) at $p$. We store $h$ with this leaf. By exhaustive search, we can determine $h$ in $O(n)$ time, so the total time for this step is $O(n / \eps^{(d-1)/2})$. Asymptotically, this does not affect the time it takes to construct the data structure. Given a query point $q$, we answer queries by first determining a leaf whose ellipsoid intersects the ray $Oq$. By Lemma~\ref{lem:hierarchy}(c), this takes $O(\log \inv\eps)$ time. We return a positive answer if and only if $q$ is contained in the associated halfspace.
We establish the correctness of this method in the following lemma.

\begin{lemma} \label{lem:APM-fat-corr}
Given a query point $q$, the query procedure returns a valid answer to the absolute $\eps$-APM query.
\end{lemma}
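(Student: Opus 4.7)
The plan is to verify the two defining guarantees of an absolute $\eps$-APM query. The positive case is immediate: if $q\in P$, then because the halfspace $h$ stored at the leaf is a supporting halfspace of $P$ and therefore contains $P$, we have $q\in h$, and the procedure correctly returns positive.

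For the negative case, I would assume $\mathrm{dist}(q,P)\ge\eps$ and show $q\notin h$. Let $x$ be the center of the leaf's Macbeath ellipsoid, let $p=Ox\cap\partial P$, and let $v$ be the outward unit normal of the bounding hyperplane $\pi$ of $h$ at $p$. Because the query descends to a leaf whose ellipsoid $E^{4\lambda_0\sqrt{d}}(x)$ meets the ray $Oq$, and because this ellipsoid is contained in $M^{4\lambda_0 d}(x)=M'(x)$ (using $\lambda_0=1/(20d)$, so $4\lambda_0\sqrt{d}\cdot\sqrt{d}=1/5$), the ray $Oq$ itself passes through $M'(x)$. Consequently, if $q'$ denotes the exit point of the ray $Oq$ from $P$, both $p$ and $q'$ lie in $\Sh(M'(x))$. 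The crux is Lemma~\ref{lem:shadow}(b): since $v\in\normals(M'(x))$, we have $\width_v(\Sh(M'(x)))\le c_1\delta$, and since $\pi$ supports $P$ at $p$, the value $v\cdot p$ is maximal over this shadow, giving $v\cdot q'\ge v\cdot p-c_1\delta$.

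I then convert this slab bound into a Euclidean bound along the ray. The $\gamma$-canonical position of $P$ (so $\gamma B_0\subseteq P$) forces $v\cdot p\ge\gamma/2$, and combined with $\|q'\|\le 1/2$ the ray direction $u=q'/\|q'\|$ satisfies $v\cdot u=(v\cdot q')/\|q'\|\ge\gamma-2c_1\delta$. Let $q^*$ be the point where $Oq$ meets $\pi$. A direct parametrization along the ray gives $\|q^*q'\|=(v\cdot p-v\cdot q')/(v\cdot u)\le c_1\delta/(\gamma-2c_1\delta)$, which with $\delta=\eps\gamma/(2c_1)$ is at most $\eps$ (after, if needed, a small tightening of the numerical constant inside $\delta$ to accommodate values of $\eps$ close to $1$). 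Since $q'\in P$ implies $\|qq'\|\ge\mathrm{dist}(q,P)\ge\eps\ge\|q^*q'\|$, the point $q$ lies beyond $q^*$ along the ray, hence outside $h$, and the procedure returns the required negative answer.

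The main obstacle is bridging two a priori different notions of proximity. The shadow-width bound controls only the component of $q'-p$ along $v$, whereas what the query answer depends on is the Euclidean position of $q$ along the ray $Oq$. Without the $\gamma$-fatness of $P$ the ray could be nearly perpendicular to $v$, causing the along-ray extent inside $h$ to blow up; fatness forces $v\cdot u=\Omega(\gamma)$, so that a $c_1\delta$ deviation in the normal direction inflates to only $O(\delta/\gamma)$ in Euclidean distance along the ray, which the choice of $\delta$ then tunes down to the required $\eps$-bound.
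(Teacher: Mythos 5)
Your proof is correct and follows essentially the same route as the paper's: apply Lemma~\ref{lem:shadow}(b) to bound $\width_v(\Sh(M'(x)))$ by $c_1\delta$, then use the $\gamma$-canonical position of $P$ to convert that normal-direction width bound into an along-ray Euclidean bound of at most $\eps$. The only cosmetic differences are that you measure the ray distance from the exit point $q'\in\partial P$ rather than from a point $y$ of the leaf ellipsoid on the ray $Oq$, and you re-derive the slab-width to ray-distance conversion inline (obtaining a bound of the form $c_1\delta/(\gamma-2c_1\delta)$), whereas the paper simply invokes the already-stated Lemma~\ref{lem:raydist}(b) to get $2c_1\delta/\gamma$.
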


\begin{proof}
Consider the leaf whose ellipsoid intersects the ray $Oq$. Let $E^{4 \lambda_0 \sqrt{d}}(x)$ denote the associated ellipsoid and let $h$ be the halfspace stored with this leaf.  Recall that $h$ is a supporting halfspace at the point $p$ where the ray $Ox$ intersects $\partial P$. If $q \in P$ then clearly $q \in h$ and such a query point is correctly declared as lying inside $P$. To complete the proof, we need to show that if $q \notin P$ and the distance of $q$ from $\partial P$ is greater than $\eps$, then $q \notin h$. In this case, $q$ would be correctly declared as lying outside $P$.

Let $y$ denote any point in the intersection of the ray $Oq$ with the leaf ellipsoid. Let $y'$ denote the intersection of the ray $Oq$ with the hyperplane bounding $h$. To prove the claim, it suffices to show that $\|yy'\| \le \eps$. Recall that $E^{4 \lambda_0 \sqrt{d}}(x) \subseteq M^{4 \lambda_0 d}(x) = M'(x)$. Let $M = M'(x), S = \Sh(M)$, and $N = \normals(M)$. Clearly $y, p \in S$ and the normal vector $v$ to the hyperplane bounding $h$ belongs to $N$. By Lemma~\ref{lem:shadow}, $\width_v(S) \le c_1 \delta(x) = \eps \gamma / 2$. Note that the distance of $y$ from the hyperplane bounding $h$ is at most $\width_v(S)$. Applying Lemma~\ref{lem:raydist}(b), we obtain $\|yy'\| \le (2 / \gamma)(\eps \gamma / 2) = \eps$, as desired.
\end{proof}

We summarize the result below. 

\begin{lemma} \label{lem:APM-fat}
Let $\gamma < 1$ be a positive constant, and let $\eps > 0$ be a real parameter. Let $P$ be a polytope in $\gamma$-canonical form, represented as the intersection of $n$ halfspaces. In  $O(n/\eps^{d-1} + 1/\eps^{3(d-1)/2})$ time, we can construct a data structure that uses $O(1/\eps^{(d-1)/2})$ space and answers absolute $\eps$-APM queries in $O(\log\inv\eps)$ time.
\end{lemma}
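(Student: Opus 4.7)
The plan is to assemble the pieces already laid out in the paragraphs preceding the lemma statement, since the heavy lifting has been done in Lemma~\ref{lem:hierarchy} and Lemma~\ref{lem:APM-fat-corr}. First I would invoke Lemma~\ref{lem:hierarchy} with the calibrated choice $\delta = \eps \gamma / (2 c_1)$, where $c_1$ is the constant from Lemma~\ref{lem:shadow}(b). Substituting this (constant-factor) value of $\delta$ into the bounds of Lemma~\ref{lem:hierarchy} yields at once the claimed construction time $O(n/\eps^{d-1} + 1/\eps^{3(d-1)/2})$, total space $O(1/\eps^{(d-1)/2})$, and ray-location time $O(\log \inv \eps)$.

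Next I would augment each leaf of the DAG with a supporting halfspace exactly as described immediately before Lemma~\ref{lem:APM-fat-corr}: for a leaf ellipsoid centered at $x$, shoot the ray $Ox$ to find its intersection $p$ with $\partial P$ and, by exhaustive search through the $n$ defining halfspaces, select any $h \in H$ that supports $P$ at $p$. This takes $O(n)$ per leaf and there are $O(1/\eps^{(d-1)/2})$ leaves, so the total augmentation cost is $O(n/\eps^{(d-1)/2})$, which is absorbed into the $O(n/\eps^{d-1})$ term already present.

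To answer a query at $q$, I would use the DAG-descent of Lemma~\ref{lem:hierarchy}(c) to find in $O(\log \inv \eps)$ time a leaf whose ellipsoid is stabbed by the ray $Oq$, and then report ``inside'' if and only if $q$ lies in the halfspace stored with that leaf. Correctness of this query rule is precisely what Lemma~\ref{lem:APM-fat-corr} establishes for the parameter choice above.

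There is no genuine obstacle; the proof is essentially bookkeeping. The one point that must be handled with care is the calibration of $\delta$: it has to be a small enough constant multiple of $\eps$ so that the shadow-width bound $c_1 \delta$ of Lemma~\ref{lem:shadow}(b), after being inflated by the ray-distance factor $2/\gamma$ of Lemma~\ref{lem:raydist}(b), is at most $\eps$. The value $\delta = \eps \gamma / (2 c_1)$ is exactly calibrated for this, and since $\gamma$ and $c_1$ are constants, plugging it into Lemma~\ref{lem:hierarchy} preserves the stated asymptotic bounds.
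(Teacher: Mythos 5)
Your proposal is correct and mirrors the paper's own proof essentially line for line: invoke Lemma~\ref{lem:hierarchy} with $\delta = \eps\gamma/(2c_1)$, store a supporting halfspace at each leaf by exhaustive $O(n)$ search (absorbed into the $O(n/\eps^{d-1})$ term), descend the DAG to find a stabbed leaf ellipsoid, test the query against the stored halfspace, and appeal to Lemma~\ref{lem:APM-fat-corr} for correctness. Your remark on calibrating $\delta$ so that $c_1\delta \cdot (2/\gamma) \le \eps$ is exactly the calculation the paper carries out inside Lemma~\ref{lem:APM-fat-corr}.
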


\section{Kernel Construction} \label{s:kernel}

In this section we show how to build an $\eps$-kernel efficiently, proving Theorem~\ref{thm:kernel}. The input to an $\eps$-kernel construction consists of the approximation parameter $\eps$ and a set $S$ of $n$ points. Our algorithm is based on a bootstrapping strategy. We assume that we have access to an algorithm that can construct an $\eps$-kernel of $O(1/\eps^{(d-1)/2})$ size in time $O(n \log\inv\eps + 1/\eps^{(1/2 + \beta)(d-1)})$, where $\beta > 0$ is a parameter. Recall that the size of the kernel is asymptotically optimal in the worst case. We will present a method for improving the running time of this algorithm. Recall that Chan~\cite{Cha06} gave an algorithm for constructing kernels of  optimal size which runs in time $O(n \log \inv \eps + 1/\eps^{d-1})$. We will use this algorithm to initialize our bootstrapping scheme with $\beta = 1/2$.

Our method is based on executing the following steps. It uses a parameter $\delta = \eps^{1/3}$.
\begin{enumerate}
\item We begin by ``fattening'' the input point set $S$. Formally, we compute an affine transformation that maps $S$ to $S'$, such that $\conv(S')$ is in $\gamma$-canonical form for some constant $\gamma$. By standard results (see, e.g., the journal version of \cite{AFM11}), this affine transformation and the set $S'$ can be computed in $O(n)$ time.

\item Use Lemma~\ref{lem:conversion} to build a polytope $P$, represented as the intersection of $O(1/\delta^{(d-1)/2})$ halfspaces, such that $P$ is an inner absolute $\delta$-approximation of $\conv(S')$. This step takes $O(n + 1/\delta^{d-1}) = O(n + 1/\eps^{(d-1)/3})$ time.

\item Construct the DAG structure of Lemma~\ref{lem:hierarchy} for polytope $P$ using the parameter $\delta$. Replacing $n$ in the statement of the lemma by $O(1/\delta^{(d-1)/2})$, it follows that this step takes $O(1/\delta^{3(d-1)/2}) = O(1/\eps^{(d-1)/2})$ time.

\item For each point $p \in S'$, in $O(\log\inv\delta)$ time, we find a leaf of the DAG such that the associated ellipsoid $E^{4\lambda_0\sqrt{d}}(x)$ intersects the ray $Op$. Recall that $x \in \partial P(\delta)$. In $O(1)$ additional time, we can determine whether $p$ lies in the shadow of this ellipsoid (with respect to $\conv(S')$). If so, we associate $p$ with this ellipsoid, otherwise we discard it.  By Lemma~\ref{lem:hierarchy}(c), it takes $O(\log\inv\delta)$ time to process each point, thus the time taken for processing all the points of $S'$ is $O(n \log\inv\delta) = O(n \log\inv\eps)$.

\item For each leaf ellipsoid of the DAG, we build a $(c_3 \eps/\delta)$-kernel for the points of $S'$ that lie in its shadow, where $c_3$ is a suitably small constant that will be selected later. This kernel computation is done using the aforementioned algorithm that computes $\eps$-kernels of point sets of size $n$ in time $O(n \log\inv\eps + 1/\eps^{(1/2+\beta)(d-1)})$. The size of the $O(\eps/\delta)$-kernel computed for each shadow is $O((\delta/\eps)^{(d-1)/2})$ and the time required is $O(n_i \log\frac{\delta}{\eps} + (\delta/\eps)^{(1/2+\beta)(d-1)})$, where $n_i$ denotes the number of points of $S'$ in the shadow. Summed over all the shadows, it follows that the total time required is
\[
O\left( n \log \frac{\delta}{\eps}  + \left(\frac{1}{\delta}\right)^{\kern-2pt \frac{d-1}{2}} \kern-2pt \left( \frac{\delta}{\eps} \right)^{\kern-2pt \left(\inv 2 + \beta\right)(d-1)} \right) =
O\left( n \log \frac{1}{\eps}  + \left( \inv \eps \right)^{\kern-2pt \left(\inv 2 + \frac{2\beta}{3}\right)(d-1)}  \right).
\]
Here we have used the facts that each point of $S'$ is assigned to at most one shadow and the total number of shadows, which is bounded by the number of leaves in the DAG, is $O(1/\delta^{(d-1)/2})$.

\item Let $S'' \subseteq S'$ be the union of the kernels computed in the previous step.  Since the number of shadows is $O(1/\delta^{(d-1)/2})$ and the size of the kernel for each shadow is  $O((\delta/\eps)^{(d-1)/2})$, it follows that $|S''| = O(1/\eps^{(d-1)/2})$. We apply the inverse of the affine transformation computed in Step 1 to the points of $S''$, and output the resulting set of points as the desired $\eps$-kernel for $S$.
\end{enumerate}

We have shown that the size of the output kernel is $O(1/\eps^{(d-1)/2})$, as desired. The running time of Step 5 dominates the time complexity.
The next lemma establishes the correctness of this construction.

\begin{lemma} \label{lem:correctness}
The construction yields an $\eps$-kernel.
\end{lemma}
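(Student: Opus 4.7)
The plan is to establish that $S''$ is an $\eps$-kernel of $S'$, from which the lemma follows because an affine map preserves the ratio $\width_v(Q)/\width_v(S)$ for any subset and its ambient set. Since $\conv(S')$ is in $\gamma$-canonical form, $\width_v(S') \ge \gamma$ for every unit $v$, so it is enough to prove the absolute bound $\max_{p\in S'}\langle v,p\rangle - \max_{q\in S''}\langle v,q\rangle = O(\eps)$ with a small enough constant (the analogous inequality for the minimum follows by applying the argument to $-v$). Tracked carefully, this $O(\eps)$ will be at most $(\eps/2)\,\width_v(S')$.

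Fix a unit direction $v$ and let $p^* \in S'$ achieve the maximum of $\langle v,\cdot\rangle$; then $p^* \in \partial\conv(S')$. I would first argue that $p^*$ is not discarded in Step~4. The ray from the origin through $p^*$ crosses $\partial P(\delta)$ at a point $z$ lying strictly between $O$ and $p^*$, because $O \in P(\delta)$ and $p^* \notin P(\delta)$. By Lemma~\ref{lem:hierarchy}(b), $z$ lies in some leaf ellipsoid $E_L = E^{4\lambda_0\sqrt{d}}(x_L)$, so the segment $Op^*$ meets $E_L$, and consequently $p^*$ lies in the shadow of $E_L$ with respect to $\conv(S')$. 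Hence $p^*$ is assigned to some leaf whose ellipsoid is hit by $Op^*$ during the DAG descent and whose shadow contains $p^*$. Let $S_L \subseteq S'$ denote the points so assigned and $Q_L \subseteq S''$ its $(c_3\eps/\delta)$-kernel; the kernel property yields
\[
\max_{q\in Q_L}\langle v,q\rangle \;\ge\; \langle v,p^*\rangle - \frac{c_3\,\eps}{\delta}\,\width_v(S_L).
\]

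The crux, and the step I expect to be hardest, is bounding $\width_v(S_L) = O(\delta)$. Since $E_L \subseteq M'(x_L)$ with $x_L \in \partial P(\delta)$, Lemma~\ref{lem:shadow} applied with $K = P$ confines the shadow of $M'(x_L)$ with respect to $P$ to width $O(\delta)$ in every direction normal to $P$ at a boundary point of that shadow. The subtlety is that our algorithm uses the shadow with respect to $\conv(S')$, not $P$, and the direction $v$ is an outward normal of $\conv(S')$ rather than of $P$. I would exploit that $P$ is an inner absolute $\delta$-approximation of $\conv(S')$: Lemma~\ref{lem:raydist}(a) implies that along any ray from $O$ the crossings with $\partial P$ and $\partial \conv(S')$ lie within distance $O(\delta/\gamma)$ of each other, so the shadow with respect to $\conv(S')$ exceeds the one with respect to $P$ by only $O(\delta)$ along each ray, and the normal $v$ at $p^*$ differs from the corresponding normal of $P$ at the nearby crossing by a perturbation whose contribution to any width is also $O(\delta)$. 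Substituting $\width_v(S_L) = O(\delta)$ into the displayed inequality gives $\langle v,p^*\rangle - \max_{q\in Q_L}\langle v,q\rangle \le O(c_3)\,\eps$; choosing $c_3$ small enough in terms of $\gamma$ and the constants of Lemmas~\ref{lem:shadow} and~\ref{lem:raydist} makes this at most $(\eps/2)\,\width_v(S')$. Symmetric reasoning with $-v$ controls the minimum, so $\width_v(S'') \ge (1-\eps)\,\width_v(S')$ for every $v$, and applying the inverse affine map then yields an $\eps$-kernel of $S$, as claimed.
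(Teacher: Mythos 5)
Your overall plan matches the paper's proof: reduce to an absolute $O(\eps)$-approximation claim via the canonical-form lower bound $\width_v(S')\ge\gamma$, focus on the extreme point $p^*$ of $S'$ in direction $v$, show $p^*$ survives Step 4, and bound $\width_v$ of the shadow that contains $p^*$ by $O(\delta)$ so that the $(c_3\eps/\delta)$-kernel incurs error $O(\eps)$ in direction $v$. You also correctly identify the one genuinely delicate step, which is that Lemma~\ref{lem:shadow} concerns Macbeath regions and shadows of the \emph{same} body, whereas here $E_L$ is built from $P$ while the shadow, the extreme point, and the direction $v$ all live in $\conv(S')$.

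Your resolution of that step, however, has a real gap. You try to transfer Lemma~\ref{lem:shadow} from $K=P$ to $K=\conv(S')$ by perturbation, arguing that along each ray the two boundaries differ by $O(\delta)$ and that ``the normal $v$ at $p^*$ differs from the corresponding normal of $P$ at the nearby crossing by a perturbation whose contribution to any width is also $O(\delta)$.'' The ray claim is fine, but the normal-perturbation claim is false in general: a Hausdorff-$\delta$ inner approximation of a convex body gives no control whatsoever on how far apart pointwise normal directions can be (near a sharp vertex of $\conv(S')$, the normal cone has positive angular measure, and $P$'s normals at nearby boundary points need not be close to the particular $v$ you chose). So this step would not survive scrutiny as written. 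The paper avoids the issue entirely by never comparing normals across bodies: since $P\subseteq\conv(S')$, Macbeath regions are monotone in the body, so $E_L\subseteq M'_P(x)\subseteq M'_{\conv(S')}(x)$, and Lemma~\ref{lem:shadow}(b) is then applied \emph{directly} with $K=\conv(S')$, where $v$ is genuinely an outward normal of $\conv(S')$ at a point in $\Sh(M'_{\conv(S')}(x))$. The only transfer from $P$ needed is the scalar bound $\delta_{\conv(S')}(x)\le\delta_P(x)+\delta=2\delta$, which is just the triangle inequality. If you replace your perturbation paragraph with this monotonicity observation, the rest of your argument goes through.
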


\begin{proof}
Throughout this proof, for a given convex body $K$, we use $M_K(x)$, $E_K(x)$, and $\delta_K(x)$ to denote the quantities $M(x)$, $E(x)$, and $\delta(x)$ with respect to $K$. Let $P' = \conv(S')$. By standard results on fattening, it suffices to show that $\conv(S'')$ is an absolute $O(\eps)$-approximation of $P'$. Let $v$ be an arbitrary direction. Consider the extreme point $p$ of $S'$ in direction $v$. Clearly $p \in \partial P'$. Recall that $P$ is an inner $\delta$-approximation of $P'$, and the ellipsoids associated with the leaves of the DAG cover the boundary of $P(\delta)$. Thus, there must be an ellipsoid $E = E_P^{4\lambda_0\sqrt{d}}(x)$, $x \in \partial P(\delta)$, such that $p$ is assigned to the shadow of $E$ in Step 4. Note that this shadow and all shadows throughout this proof are assumed to be with respect to the polytope $P'$ (and not $P$). We claim that $\width_v(\Sh(E)) \le 2 c_1 \delta$, where $c_1$ is the constant of Lemma~\ref{lem:shadow}(b). Assuming this claim for now, let us complete the proof of the lemma. Recall that in Step 5, we built a $(c_3 \eps/\delta)$-kernel for all the points of $S'$ that are assigned to the shadow of $E$, and $S''$ includes all the points of this kernel. It follows that the distance between the supporting hyperplanes of $\conv(S')$ and $\conv(S'')$ in direction $v$ is at most $(c_3 \eps / \delta) \cdot \width_v(\Sh(E)) \le (c_3 \eps / \delta) \cdot (2 c_1 \delta) = 2 c_1 c_3 \eps$. By choosing $c_3$ sufficiently small, we can ensure that this quantity is smaller than any desired constant times $\eps$, which proves the lemma.

It remains to show that $\width_v(\Sh(E)) \le 2 c_1 \delta$. Recall that 
\[
E 
	~ = ~ E_P^{4\lambda_0 \sqrt{d}}(x) 
	~ \subseteq ~ M_P^{4\lambda_0 d}(x) 
	~ = ~ M'_P(x).
\]
Furthermore, since $P \subseteq P'$, a straightforward consequence of the definition of Macbeath regions is that $M'_P(x) \subseteq M'_{P'}(x)$. To simplify the notation, let $M$ denote $M'_{P'}(x)$. Putting it together, we obtain $E \subseteq M$. Thus $\Sh(E) \subseteq \Sh(M)$, which implies that $\width_v(\Sh(E)) \le \width_v(\Sh(M))$. By Lemma~\ref{lem:shadow}(b),
\[
\width_v(\Sh(M)) 
	~ \le ~ c_1 \delta_{P'}(x). 
\]
Using the triangle inequality and the fact that $P$ is an inner $\delta$-approximation of $P'$, we obtain $\delta_{P'}(x) \le \delta_P(x) + \delta = 2 \delta$. Thus $\width_v(\Sh(E)) \le \width_v(\Sh(M)) \le 2 c_1 \delta$, as desired.
\end{proof}

We are now ready to prove Theorem~\ref{thm:kernel}.

\begin{proof}
Our proof is based on a constant number of applications of the algorithm from this section.
It suffices to show that there is an algorithm that can construct an $\eps$-kernel of $O(1/\eps^{(d-1)/2})$ size in time $O(n \log\inv\eps + 1/\eps^{(1/2 + \beta')(d-1)})$, where $\beta' = \alpha / (d-1)$. 

We initialize the bootstrapping process by Chan's algorithm~\cite{Cha06}, which has $\beta = 1/2$.
Observe that the value of $\beta$ is initially $1/2$ and falls by a factor of $2/3$ with each application of the algorithm.
It follows that after $O(\log \inv \alpha)$ applications, we will obtain an algorithm with the desired running time. This completes the proof.
\end{proof}

\section{Approximate Polytope Membership} \label{s:apm}

In this section we show how to obtain a data structure for approximate polytope membership, proving Theorem~\ref{thm:apm}. Our best data structure for APM achieves query time $O(\log\inv\eps)$ with storage $O(1/\eps^{(d-1)/2})$ and preprocessing time $O(n \log\inv\eps + 1/\eps^{(d-1)/2 + \alpha})$.
As with kernels, our construction here is again based on a bootstrapping strategy. To initialize the process, we will use a data structure that achieves the aforementioned query time with the same storage but with preprocessing time $O(n + 1/\eps^{3(d-1)/2})$. The data structure is based on Lemma~\ref{lem:APM-fat}. Recall that the input is a polytope represented as the intersection of $n$ halfspaces. 

We begin by ``fattening'' the input polytope. Formally, we use an affine transformation to map the input polytope to a polytope $P'$ that is in $\gamma$-canonical form. This step takes $O(n)$ time \cite{AFM11}. By standard results, it suffices to build a data structure for answering absolute $O(\eps)$-APM queries with respect to $P'$ (see, e.g., Lemma~{7.1} of the journal version of \cite{AFM11}). 

Next, we apply Lemma~\ref{lem:conversion2} to construct an outer absolute $O(\eps)$-approximation $P$ of $P'$, where $P$ is represented as the intersection of $O(1/\eps^{(d-1)/2})$ halfspaces. This step takes $O(n + 1/\eps^{d-1})$ time. Finally, we use Lemma~\ref{lem:APM-fat} to construct a data structure for answering absolute $O(\eps)$-APM queries with respect to $P$. Replacing $n$ in the statement of the lemma by $O(1/\eps^{(d-1)/2})$, it follows that this step takes $O(1/\eps^{3(d-1)/2}))$ time. 

The total construction time is $O(n + 1/\eps^{3(d-1)/2})$. To answer a query, we map the query point using the same transformation used to fatten the polytope, and then use the data structure constructed above to determine whether the resulting point lies in polytope $P$. Subject to an appropriate choice of constant factors, the correctness of this method follows from the fact that $P$ is an outer absolute $O(\eps)$-approximation of $P'$. 

We summarize this result in the following lemma.

\begin{lemma} \label{lem:APM1}
Let $\eps > 0$ be a real parameter and let $P$ be a polytope, represented as the intersection of $n$ halfspaces. In $O(n + 1/\eps^{3(d-1)/2})$ time, we can construct a data structure that uses $O(1/\eps^{(d-1)/2})$ space and answers $\eps$-APM queries in $O(\log\inv\eps)$ time.
\end{lemma}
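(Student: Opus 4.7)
The plan is to follow the three-step reduction sketched in the paragraphs preceding the lemma, which assembles the result from Lemma~\ref{lem:conversion2} and Lemma~\ref{lem:APM-fat}. First I would use the standard fattening transformation (see \cite{AFM11}) to compute in $O(n)$ time an affine map taking the input polytope to a polytope $P'$ in $\gamma$-canonical form for some constant $\gamma$. Since affine transformations preserve containment and distort distances by only a constant factor, it suffices to build a data structure answering absolute $O(\eps)$-APM queries with respect to $P'$, and then transform each query point by the same affine map before querying it.

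Next I would apply Lemma~\ref{lem:conversion2} to $P'$ with approximation parameter $\Theta(\eps)$, obtaining in $O(n + 1/\eps^{d-1})$ time a polytope $P$ defined as the intersection of only $O(1/\eps^{(d-1)/2})$ halfspaces that is an outer absolute $O(\eps)$-approximation of $P'$. The point of this step is purely to reduce the halfspace count from $n$ down to $O(1/\eps^{(d-1)/2})$ while losing only a constant factor in the approximation quality.

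Then I would feed $P$ into Lemma~\ref{lem:APM-fat}, which builds, for a polytope in $\gamma$-canonical form given by $N$ halfspaces, an absolute $\eps$-APM structure with $O(\log\inv\eps)$ query time, $O(1/\eps^{(d-1)/2})$ space, and $O(N/\eps^{d-1} + 1/\eps^{3(d-1)/2})$ preprocessing. Substituting $N = O(1/\eps^{(d-1)/2})$ gives preprocessing time $O(1/\eps^{(d-1)/2 + (d-1)} + 1/\eps^{3(d-1)/2}) = O(1/\eps^{3(d-1)/2})$. Summing the three stages yields the claimed total $O(n + 1/\eps^{3(d-1)/2})$, while the space and query time are inherited directly from Lemma~\ref{lem:APM-fat}.

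For correctness, a positive/negative answer from the final structure with respect to $P$ translates, by composition of the outer $O(\eps)$-approximation $P \supseteq P'$ and the absolute $O(\eps)$-APM guarantee of Lemma~\ref{lem:APM-fat}, into a valid absolute $O(\eps)$-APM answer for $P'$, and then via the inverse affine map into a valid $\eps$-APM answer for the original polytope, provided the various constant factors are chosen small enough. There is no real obstacle here since each component is already established; the only thing to be careful about is the bookkeeping of constants in the chain of absolute approximations and the verification that ``outer'' approximation is the correct direction for APM (a point outside $P'$ at distance more than $\eps$ need not lie outside $P$ unless $P$ is an outer approximation with a sufficiently small constant multiple of $\eps$, which is exactly what Lemma~\ref{lem:conversion2} provides after a mild rescaling of $\eps$).
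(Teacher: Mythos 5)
Your proof follows exactly the same three-step plan as the paper's: fatten to $\gamma$-canonical form in $O(n)$ time, reduce the halfspace count via Lemma~\ref{lem:conversion2}, and then invoke Lemma~\ref{lem:APM-fat} with $n$ replaced by $O(1/\eps^{(d-1)/2})$, with the same constant-factor bookkeeping justifying correctness. It is correct and essentially identical to the paper's argument.
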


We can now present the details of our bootstrapping approach. We assume that we have access to a data structure that can answer $\eps$-APM queries in $O(\log\inv\eps)$ time with $O(1/\eps^{(d-1)/2})$ storage and $O(n \log\inv\eps + 1/\eps^{(1/2 + \beta)(d-1)})$, where $\beta > 0$ is a parameter. We present a method for constructing a new data structure which matches the given data structure in space and query time, but has a lower preprocessing time. Our method uses a parameter $\delta = \eps^{\beta/(1+\beta)}$.
\begin{enumerate}
\item As in the construction given above, we first fatten the input polytope. Formally, we use an affine transformation to map the input polytope to a polytope $P'$ that is in $\gamma$-canonical form. This step takes $O(n)$ time. By standard results, it suffices to build a data structure for answering absolute $O(\eps)$-APM queries with respect to $P'$. 

\item Use Lemma~\ref{lem:conversion2} to construct an outer absolute $O(\eps)$-approximation $P$ of $P'$, where $P$ is represented as the intersection of $O(1/\eps^{(d-1)/2})$ halfspaces. By the remark following Lemma~\ref{lem:conversion2}, this step takes $O(n \log\inv\eps + 1/\eps^{(d-1)/2 + \alpha})$ time.

\item Construct the DAG of Lemma~\ref{lem:hierarchy} for polytope $P$ using the parameter $\delta$. Replacing $n$ in the statement of the lemma by $O(1/\eps^{(d-1)/2})$, it follows that this step takes $O((1/\delta)^{d-1} \cdot (1/\eps)^{(d-1)/2})$ time.

\item For each leaf of the DAG, we construct an APM data structure as follows. Let $E = E^{4 \lambda_0 \sqrt{d}}(x)$ denote the ellipsoid associated with the leaf. Let $R$ denote the minimum enclosing hyperrectangle of the ellipsoid $E^{4/\gamma}(x)$. We will see later that $R$ contains the shadow of $E$ (with respect to $P$), and its width in any direction in $\normals(E)$ is at most $c_2 d \delta = O(\delta)$, where $c_2$ is the constant in Lemma~\ref{lem:shadow}(c). We use the aforementioned algorithm for constructing an APM data structure for this region with approximation parameter $c_3 \eps / \delta$, where $c_3$ is a sufficiently small constant that we will select later. Note that each such region can be expressed as the intersection of $n_i = O(1/\eps^{(d-1)/2})$ halfspaces, namely, all the halfspaces defining $P$ together with the $2d$ halfspaces defined by the facets of $R$. The construction time of the APM data structure for each leaf is
\[
O\left( n_i \log \frac{\delta}{\eps} + \left( \frac{\delta}{\eps} \right)^{\kern-2pt \left(\inv 2 + \beta\right)(d-1)} \right)
	~ = ~ O\left( \left(\frac{1}{\eps}\right)^{\frac{d-1}{2}} \log \frac{\delta}{\eps} + \left( \frac{\delta}{\eps} \right)^{\kern-2pt \left(\inv 2 + \beta\right)(d-1)} \right),
\]
and the space used is $O((\delta / \eps)^{(d-1)/2})$. Since there are $O(1/\delta^{(d-1)/2})$ leaves, it follows that the total space is $O(1/\eps^{(d-1)/2})$, and the total construction time is the product of $O(1/\delta^{(d-1)/2})$ and the above construction time for each leaf.
\end{enumerate}

Summing up the time over all the four steps, we get a total construction time on the order of
\[
n \log \frac{1}{\eps} + \left( \frac{1}{\eps} \right)^{\frac{d-1}{2} + \alpha} \kern-8pt +
	\left( \frac{1}{\delta} \right)^{d-1} \left( \frac{1}{\eps} \right)^{\frac{d-1}{2}} +
	\left(\frac{1}{\delta}\right)^{\frac{d-1}{2}} \cdot \left( \left(\frac{1}{\eps}\right)^{\frac{d-1}{2}}  \kern-4pt \log \frac{\delta}{\eps}  + \left( \frac{\delta}{\eps} \right)^{\kern-2pt \left(\inv 2 + \beta\right)(d-1)} \right).
\]
Recalling that $\delta = \eps^{\beta/(1+\beta)}$ and assuming that the constant $\alpha$ is much smaller than $\beta$, it follows that the construction time is
\[
O\left( n \log \frac{1}{\eps}  + \left( \inv \eps \right)^{\kern-2pt \left(\inv 2 + \frac{\beta}{1+\beta}\right)(d-1)} \right).
\]

We answer queries as follows. Recall the affine transformation used to fatten the input polytope. We apply this transformation on the input query point to obtain a point $q$. Recall that it suffices to answer absolute $O(\eps)$-APM queries for $q$ with respect to $P'$. As $P$ is an outer absolute $O(\eps)$-approximation of $P'$, it suffices to answer absolute $O(\eps)$-APM queries for $q$ with respect to $P$. To answer this query, we identify a leaf of the DAG such that the associated ellipsoid $E$ intersects the ray $Oq$. This takes time $O(\log\inv\delta)$. Let $y$ denote an intersection point of this ray with the ellipsoid $E$. If $q$ lies on the segment $Oy$, then $q$ is declared as lying inside $P$. Otherwise we return the answer we get for query $q$ using the APM data structure we built for this leaf. It takes time $O(\log\frac{\delta}{\eps})$ to answer this query. Including the time to locate the leaf, the total query time is $O(\log\inv\eps)$. 

In Lemma~\ref{APM-corr}, we show that queries are answered correctly.

\begin{lemma} \label{APM-corr}
The query procedure returns a valid answer to the $\eps$-APM query.
\end{lemma}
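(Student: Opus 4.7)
The plan is to establish the two defining requirements of an absolute $O(\eps)$-APM for $P$: the algorithm must return positive whenever $q \in P$, and it must return negative whenever $\mathrm{dist}(q, P) > c\eps$ for a suitable constant $c$. Both requirements will be verified against the two branches of the query code---the ``$q \in Oy$'' segment test and the fallback call to the leaf's $(c_3\eps/\delta)$-APM for $P \cap R$. The first requirement is essentially a shadow-containment argument; the real work is in the second.

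For the first requirement, suppose $q \in P$. If $q \in Oy$ the answer ``positive'' is obviously correct, since $O, y \in P$ force $Oy \subseteq P$ by convexity. Otherwise $q$ lies strictly past $y$ on ray $Oq$, so segment $Oq$ contains $y \in E$, placing $q \in \Sh(E)$. The containments $E = E^{4\lambda_0\sqrt{d}}(x) \subseteq M^{4\lambda_0 d}(x) = M'(x)$ let me apply Lemma~\ref{lem:shadow}(a) to get $\Sh(E) \subseteq M^{4/\gamma}(x) \subseteq E^{4/\gamma}(x) \subseteq R$. Hence $q \in P \cap R$, and the leaf's APM structure must return positive.

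For the second requirement, suppose $\mathrm{dist}(q, P) > c\eps$. Then $q \notin P$, hence $q \notin Oy$ (since $Oy \subseteq P$), so the algorithm invokes the leaf's APM call. To force a negative answer I need a single direction $v'$ along which $q$ provably lies outside the expanded slab of $P \cap R$. The natural choice is the outward unit normal $v'$ at $p'$, where $p'$ is the exit point of ray $Oq$ from $\partial P$. Since ray $Oq$ meets $E$ and $p' \in P$, we have $p' \in \Sh(E) \subseteq R$, and therefore $p' \in P \cap R$ with $v' \in \normals(E)$. Because $v'$ is outward normal to $P$ at $p'$, the support values $\max_{P} \langle \cdot, v' \rangle$ and $\max_{P \cap R} \langle \cdot, v' \rangle$ both equal $\langle p', v' \rangle$, so the ``excess'' of $q$ above the $P \cap R$-slab in direction $v'$ is exactly $\langle q - p', v' \rangle$.

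The main obstacle is showing this excess strictly exceeds the APM's threshold $(c_3\eps/\delta)\cdot\width_{v'}(P \cap R)$. Because $v' \in \normals(E)$, the width bound underlying the construction of $R$ (the same bound from Lemma~\ref{lem:shadow}(c) that Step~4 already cited) gives $\width_{v'}(P \cap R) \le \width_{v'}(R) = O(\delta)$, so the threshold is $O(c_3 \eps)$. For the matching lower bound I would invoke Lemma~\ref{lem:raydist}(c): its inequality $\|pp''\| \le w/\gamma$ is equivalent to saying that the unit ray direction $\hat{r} = p'/\|p'\|$ satisfies $\langle \hat{r}, v' \rangle \ge \gamma$. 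Writing $q = p' + \|qp'\|\,\hat{r}$ and using $\|qp'\| \ge \mathrm{dist}(q, P) > c\eps$ then yields $\langle q - p', v' \rangle \ge \gamma c \eps$. Choosing $c_3$ sufficiently small in terms of $c$, $\gamma$, and the hidden constant in the width bound makes $\gamma c \eps$ strictly exceed $O(c_3 \eps)$, so the leaf's APM correctly answers negative, completing the proof.
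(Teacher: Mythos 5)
Your proof is correct and essentially mirrors the paper's argument: both split into positive/negative cases, establish $\Sh(E) \subseteq R$ via Lemma~\ref{lem:shadow}(a), use the width bound of Lemma~\ref{lem:shadow}(c) (applied to $R$, hence to $P \cap R$) to control the leaf APM's absolute error, and invoke Lemma~\ref{lem:raydist}(c) to relate ray distance to slab thickness. The only stylistic difference is that you compare inner products directly (excess versus threshold), whereas the paper translates the supporting hyperplane outward by $\eps\gamma$ and bounds the resulting ray segment; these calculations are equivalent.
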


\begin{proof}
We borrow the terminology from the query procedure given above. As mentioned, it suffices to show that our algorithm correctly answers absolute $O(\eps)$-APM queries for $q$ with respect to the polytope $P$. Recall that we identify a leaf of the DAG whose associated ellipsoid $E = E^{4\lambda_0 \sqrt{d}}(x)$ intersects the ray $Oq$. Recall that $y$ is a point on the intersection of the ray $Oq$ with $E$. Clearly, if $q$ lies on segment $Oy$, then $q \in P$ and $q$ is correctly declared as lying inside $P$.

It remains to show that queries are answered correctly when $\|Oq\| > \|Oy\|$. In this case, we handle the query using the APM data structure we built for the leaf. Recall that this structure is built for the polytope formed by intersecting $P$ with the smallest enclosing hyperrectangle $R$ of the ellipsoid $E^{4/\gamma}(x)$. We claim that (i) $\Sh(E) \subseteq R$ and (ii) $\width_v(R) \le c_2 d \delta$ for all $v \in \normals(E)$, where $c_2$ is the constant in Lemma~\ref{lem:shadow}(c).

To see this claim, recall that $M^{\lambda}(x) \subseteq E^{\lambda}(x) \subseteq M^{\lambda \sqrt{d}}(x)$ for any $\lambda > 0$. Using this fact, it follows that $M^{4/\gamma}(x) \subseteq E^{4/\gamma}(x) \subseteq  M^{4 \sqrt{d}/\gamma}(x)$. By Lemma~\ref{lem:shadow}(a), $\Sh(E) \subseteq M^{4/\gamma}(x)$. Thus $\Sh(E) \subseteq E^{4/\gamma}(x) \subseteq R$, which proves (i). To prove (ii), note that $R \subseteq E^{4 \sqrt{d}/\gamma}(x)$, since $R$ is the smallest enclosing hyperrectangle of $E^{4/\gamma}(x)$. Also $E^{4 \sqrt{d}/\gamma}(x) \subseteq M^{4 d/\gamma}(x)$. Thus $R \subseteq M^{4 d/\gamma}(x)$. By Lemma~\ref{lem:shadow}(c), $\width_v(M^{4/\gamma}(x)) \le c_2 \delta$ for all $v \in \normals(M'(x))$. Since $R \subseteq M^{4d / \gamma}(x)$ and $E \subseteq M'(x)$, it follows that $\width_v(R) \le c_2 d \delta$ for all $v \in \normals(E)$.

We return to showing that queries are correctly answered when $\|Oq\| > \|Oy\|$. We consider two possibilities depending on whether $q$ is inside or outside $P$. If $q \in P$ then $q \in \Sh(E)$. By part (i) of the above claim, $\Sh(E) \subseteq R$. Thus $q \in P \cap R$. It follows that the APM structure built for the leaf will declare this point as lying inside $P \cap R$, and hence the overall algorithm will correctly declare that $q$ lies in $P$.

Finally, we consider the case when $q \notin P$. To complete the proof, we need to show that if the distance of $q$ from the boundary of $P$ is greater than $\eps$, then $q$ is declared as lying outside $P$. Let $p$ denote the point of intersection of the ray $Oq$ with $\partial P$, let $h$ denote a hyperplane supporting $P$ at $p$, and let $v$ denote the outward normal to $h$. Recall by part (i) of the claim that $\Sh(E) \subseteq R$. It follows that $h$ is a supporting hyperplane of $P \cap R$ at $p$. By part (ii) of the claim, $\width_v(R) \le c_2 d \delta$. It follows that $\width_v(P \cap R) \le c_2 d \delta$. Recall that the APM data structure for the leaf is built using approximation parameter $c_3 \eps / \delta$ for some constant $c_3$. By definition of APM query (in the standard, direction-sensitive sense), the absolute error allowed in direction $v$ is at most $(c_3 \eps / \delta) \cdot \width_v(P \cap R)  \le (c_3 \eps / \delta) (c_2 d \delta)$. By choosing $c_3$ sufficiently small we can ensure that this error is at most $\eps \gamma$. To make this more precise, let $h'$ denote the hyperplane parallel to $h$ (outside $P$), and at distance $\eps \gamma$ from it. Consider the halfspace bounded by $h'$ and containing $P$. By definition of APM query, if $q$ is not contained in this halfspace, then $q$ would be declared as lying outside $P \cap R$, and the overall algorithm would declare $q$ as lying outside $P$. Let $p'$ denote the point of intersection of the ray $Oq$ with $h'$. By Lemma~\ref{lem:raydist}(c), $\|pp'\| \le (\eps \gamma)  / \gamma = \eps$. Thus, if the distance of $q$ from $\partial P$ is greater than $\eps$, then $q$ cannot lie on segment $pp'$ and $q$ is correctly declared as lying outside $P$. This completes the proof of correctness.
\end{proof}

We are now ready to prove Theorem~\ref{thm:apm}

\begin{proof}
Our proof is based on a constant number of applications of the method presented in this section.
It suffices to show that there is a data structure with space and query time as in the theorem and preprocessing time $O(n \log\inv\eps + 1/\eps^{(1/2 + \beta')(d-1)})$, where $\beta' = \alpha / (d-1)$. 

We initialize the bootstrapping process by the data structure described in the beginning of this section, which has $\beta = 1$. Recall that applying the method once changes the value of $\beta$ to $\beta/(1+\beta)$. It is easy to show that after $i$ applications, the
value of $\beta$ will fall to $1/(i+1)$. Thus, after $O(1/\alpha)$ applications, we will obtain a
data structure with the desired preprocessing time.
\end{proof}

\section{Reductions} \label{s:reductions}

In this section, we show how the remaining problems reduce to polytope membership. We start with a useful variation of approximate nearest neighbor searching.

The input for an approximate nearest neighbor searching data structure is a set $S$ of data points and an approximation parameter $\eps$. Given a constant $\sigma>0$, \emph{$\sigma$-well-separated} approximate nearest neighbor searching is defined as follows. Let $Q_S$ and $Q_q$ be two hypercubes of side length $r$ and at distance at least $\sigma r$ from each other. In the $\sigma$-well-separated version we have the data points $S$ inside $Q_S$ and the query points inside $Q_q$. Data structures for the well-separated version are much more efficient than for the unrestricted version. The following reduction from well-separated approximate nearest neighbor searching to approximate polytope membership is presented in~\cite[Lemma~9.2 of the journal version]{AFM11}.

\begin{lemma} \label{lem:wsann-reduction}
Let $0 < \eps \leq 1/2$ be a real parameter, $\sigma > 0$ be a constant, and $S$ be a set of $n$ points in $\RE^d$. Given a data structure for approximate polytope membership in $d$-dimensional space with query time $t_{d}(\eps)$, storage $s_{d}(\eps)$, and preprocessing time $O(n \log \inv \eps + b_d(\eps))$ it is possible to preprocess $S$ into a $\sigma$-well-separated ANN data structure with
\[
\hbox{Query time:~} O\kern-2pt \left(t_{d+1}(\eps) \cdot \log \inv{\eps}\right)\hspace{0.8em}
\textrm{Space:~} O\big(s_{d+1}(\eps) \big)\hspace{0.8em}
\textrm{Preprocessing:~} O\kern-2pt \left(\kern-2pt n \log \inv \eps +  b_{d+1}(\eps)\right)\kern-2pt.
\]
\end{lemma}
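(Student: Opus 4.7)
The plan is to use the classical paraboloid lifting map to turn each nearest-neighbor query in $\RE^d$ into an ``approximate halfspace-meets-convex-hull'' question in $\RE^{d+1}$, then to dualize that question into a single $\eps$-APM query in $\RE^{d+1}$, and finally to cover the full range of candidate nearest-neighbor distances by a logarithmic-length search that exploits the $\sigma$-well-separated hypothesis.

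Concretely, I would lift each data point $p \in S$ to $\phi(p) = (p, \|p\|^2) \in \RE^{d+1}$. Using the identity $\|q-p\|^2 = \|q\|^2 + \langle (-2q,1), \phi(p)\rangle$, for any candidate radius $r$ the set of data points within distance $r$ of $q$ is exactly the set of $p \in S$ whose lift lies in the halfspace $H_{q,r} = \{x \in \RE^{d+1} : \langle (-2q,1), x\rangle \le r^2 - \|q\|^2\}$. Hence the question ``does $S$ contain a point within distance $r$ of $q$?'' is equivalent to ``does $H_{q,r}$ meet $\conv(\phi(S))$?'' Applying polar duality about a center chosen inside $\conv(\phi(S))$ — so that, after an affine fattening step, the dual polytope $P^{*} \subset \RE^{d+1}$ is in $\gamma$-canonical form — converts the latter into the statement that a single explicit point $q^{*}_{q,r}$, depending only on $q$ and $r$, lies in $P^{*}$. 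Since $P^{*}$ is the intersection of $n$ halfspaces, one per data point, the $\eps$-APM data structure for $P^{*}$ can be built within $O(n \log \inv\eps + b_{d+1}(\eps))$ preprocessing time and $O(s_{d+1}(\eps))$ space.

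To answer a query, I would exploit the well-separated assumption. For $q \in Q_q$ and $p \in Q_S$, the hypotheses on $Q_S$ and $Q_q$ force every distance $\|q-p\|$ to lie in an interval $[\sigma r,\, c \kern+1pt r]$ of constant multiplicative width, with $c = c(\sigma,d)$. I therefore perform $O(\log \inv\eps)$ APM queries against $P^{*}$, one per geometrically spaced candidate radius $r_i = (1+\eps)^{i}\kern+1pt \sigma r$ for $i = 0,1,\ldots,O(\log \inv\eps)$, and report the data point associated with the smallest $r_i$ whose query returns ``inside.'' Each APM query costs $t_{d+1}(\eps)$, giving the claimed $O(t_{d+1}(\eps) \log \inv\eps)$ query time.

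The hard part will be calibrating the approximation error across the chain of reductions. The $\eps$-APM guarantee on $P^{*}$ is direction-sensitive, measuring error relative to $\width_v(P^{*})$, so I must verify that such an error maps, via polar duality and the paraboloid lift, to a multiplicative $(1+O(\eps))$ error in $\|q-p\|$ in the primal. This forces careful choices of the polar center, of the affine normalization putting $P^{*}$ in canonical form, and of the radii $r_i$, and it requires showing that in the direction determined by $q$ the extremal width of $P^{*}$ is proportional to the nearest-neighbor distance itself. The $\sigma$-well-separatedness is precisely the hypothesis that keeps all of these quantities bounded by constants depending only on $\sigma$ and $d$; once that calibration is done, recovering the witness data point and absorbing constants into the APM approximation parameter is routine.
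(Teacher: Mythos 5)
The paper does not prove this lemma itself; it cites it from \cite[Lemma~9.2 of the journal version]{AFM11}. Your high-level plan --- lift $S$ to the paraboloid in $\RE^{d+1}$, dualize, and search over candidate radii, using well-separatedness to confine the search to a constant-factor range --- is the right idea and matches the cited reduction. However, there is a concrete error in your query step. You propose a linear scan over radii $r_i = (1+\eps)^i \sigma r$ for $i = 0, \ldots, O(\log\inv\eps)$. The largest such radius is $(1+\eps)^{O(\log\inv\eps)}\sigma r = (1 + O(\eps\log\inv\eps))\sigma r$, which nowhere near covers the constant-factor interval $[\sigma r,\, c r]$ that well-separatedness gives you; a linear scan at ratio $1+\eps$ over that interval would need $\Theta(1/\eps)$ queries, blowing the $O(t_{d+1}(\eps)\log\inv\eps)$ budget. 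The correct mechanism --- and the one this paper itself uses for the analogous step in the proof of Theorem~\ref{thm:widthqueries} --- is binary search: starting from the constant-factor interval, repeatedly take the geometric midpoint and issue one APM query; after $O(\log\inv\eps)$ iterations the interval has relative width $1+O(\eps)$. Replacing your scan with binary search is the essential missing piece.

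Two smaller caveats. First, the dualization is more delicate than ``a single point lies in $P^*$'': the primal question is whether $\min_{x\in\conv(\phi(S))}\langle a, x\rangle \le b_r$ (a support-function threshold test), and the sign of $b_r = r^2 - \|q\|^2$ can flip as $r$ ranges over $[\sigma r, cr]$, so the naive polar image $a/b_r$ is not well behaved; the choice of center and normalization must suppress this. Second, an APM query returns only ``inside/outside,'' so recovering the witness data point requires the APM structure to also report a witness halfspace of $P^*$, which maps back to a data point. You flag both the duality calibration and the error analysis as ``the hard part,'' which is fair, but the scan-versus-binary-search issue is a logical error in the part you did commit to, and must be fixed before the rest can go through.
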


Combining the previous reduction with Theorem~\ref{thm:apm} we have:

\begin{lemma} \label{lem:wsann}
Given a set $S$ of $n$ points in $\RE^d$, an approximation parameter $\eps > 0$, and a constant $\sigma>0$, there is a data structure that can answer $\sigma$-well-separated Euclidean $\eps$-approximate nearest neighbor queries with
\[
\hbox{Query time:~} O\kern-2pt \left(\log^2 \inv\eps\right)\quad
\textrm{Space:~} O\kern-2pt \left( \left(\inv \eps \right)^{\kern-2pt \half{d}}\right)\quad
\textrm{Preprocessing:~} O\kern-2pt \left( n \log \inv \eps + \left(\inv\eps\right)^{\kern-2pt \half{d}+\alpha} \right).
\]
\end{lemma}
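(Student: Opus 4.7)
The plan is to apply Lemma~\ref{lem:wsann-reduction} directly, using the approximate polytope membership data structure from Theorem~\ref{thm:apm} as the black-box APM input required by the reduction, and then verify that each of the three resulting bounds simplifies to the ones claimed here.

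First I would extract the three parameters that Lemma~\ref{lem:wsann-reduction} requires. From Theorem~\ref{thm:apm}, in ambient dimension $d$ we have $t_d(\eps) = O(\log \inv\eps)$, $s_d(\eps) = O((1/\eps)^{(d-1)/2})$, and $b_d(\eps) = O((1/\eps)^{(d-1)/2 + \alpha})$, where $\alpha > 0$ is an arbitrarily small constant. Since the reduction invokes the APM data structure in one higher dimension, the next step is to substitute $d+1$ for $d$ throughout, obtaining $t_{d+1}(\eps) = O(\log \inv\eps)$, $s_{d+1}(\eps) = O((1/\eps)^{d/2})$, and $b_{d+1}(\eps) = O((1/\eps)^{d/2 + \alpha})$.

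Plugging these into the three output expressions in Lemma~\ref{lem:wsann-reduction} then gives query time $O(t_{d+1}(\eps)\,\log \inv\eps) = O(\log^2 \inv\eps)$, space $O(s_{d+1}(\eps)) = O((1/\eps)^{d/2})$, and preprocessing time $O(n \log \inv\eps + b_{d+1}(\eps)) = O(n \log \inv\eps + (1/\eps)^{d/2 + \alpha})$, which match the claimed bounds exactly. The well-separation constant $\sigma$ enters only through the hidden constants in the big-$O$ notation, consistent with its role as a fixed parameter of the input configuration rather than of the complexity.

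There is no real obstacle: the lemma is a direct corollary of the two ingredients cited. The only minor bookkeeping items are, first, to confirm that the precondition $0 < \eps \le 1/2$ of Lemma~\ref{lem:wsann-reduction} is compatible with the paper's standing assumption $\eps < 1$, which one arranges by replacing $\eps$ with $\min(\eps, 1/2)$ at the outset, affecting only hidden constants; and second, to observe that the symbol $\alpha$ appearing in the final bound can be identified with the $\alpha$ from Theorem~\ref{thm:apm}, since both denote arbitrarily small positive constants that can be chosen in concert.
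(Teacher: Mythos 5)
Your proposal is correct and is exactly the paper's approach: the paper introduces Lemma~\ref{lem:wsann} with the single sentence ``Combining the previous reduction with Theorem~\ref{thm:apm} we have:'' and leaves the substitution $d \to d+1$ and the resulting arithmetic implicit. You have filled in that bookkeeping correctly, including the correct identification of $t_{d+1}$, $s_{d+1}$, and $b_{d+1}$ and the (unproblematic) handling of the $\eps \le 1/2$ precondition and the constant $\alpha$.
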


Next, we prove Theorem~\ref{thm:bcp} using a reduction to well-separated approximate nearest neighbor searching that is based on~\cite[Theorem~3.2]{ArC14}.

\begin{proof}
Let $b$ denote the exact BCP distance. We obtain a constant approximation $b \leq a < 2b$ of the BCP distance in $O(n)$ expected time by running the randomized algorithm from~\cite{KhM95}. Then, we build a grid with cells of diameter $a/4$ and partition the red points accordingly. Note that since $a/4 < b/2$, the BCP pair cannot be in the same grid cell, nor in two adjacent cells. The strategy of the algorithm is to partition the red points among the grid cells and to perform a constant number of well-separated approximate nearest neighbor queries for each blue point, returning the closest red-blue pair found. More precisely, for each blue point $q$, we perform an approximate nearest neighbor query among the grid cells $Q_S$ that intersect the set theoretic difference of two balls of radii $a$ and $a/2$ centered around $q$. These are the only grid cells that may contain the closest red point and, by a simple packing argument, the number of grid cells $Q_S$ is constant. Since the grid cell $Q_q$ that contains $q$ cannot be adjacent to $Q_S$, it follows that the separation $\sigma$ is at least $1$.

To answer the queries efficiently, we separate the grid cells onto two types. If the number of red points in the cell is greater than $1 / \eps^{d/4}$, we say the cell is \emph{heavy}, and otherwise we say the cell is \emph{light}. Clearly, the number of heavy cells is $O(n \cdot  \eps^{d/4})$. We build well-separated approximate nearest-neighbor data structures for the heavy cells.  Using Lemma~\ref{lem:wsann}, the total preprocessing time is $O(n / \eps^{d/4 + \alpha})$. For each light cell, we simply store the red points it contains and answer nearest neighbor queries by brute force in $O(1/\eps^{d/4})$ time. Therefore, the total time spent answering queries is $O(n / \eps^{d/4})$.
\end{proof}

An approximation to the Euclidean minimum spanning tree and minimum bottleneck tree can be computed by solving multiple BCP instances such that the sum of the number of points in all instances is $O(n \log n)$~\cite[Theorem 4.1]{ArC14}. Applying this reduction together with Theorem~\ref{thm:bcp}, we prove Theorem~\ref{thm:mst}.

The following reduction from ANN to APM is presented in~\cite[Lemma~9.3 of the journal version]{AFM11}. For the preprocessing time see \cite[Lemma~8.3]{AVD-JACM}.

\begin{lemma} \label{lem:ann-reduction}
Let $0 < \eps \leq 1/2$ be a real parameter and $S$ be a set of $n$ points in $\RE^d$. Given a data structure for approximate polytope membership in $d$-dimensional space with query time at most $t_{d}(\eps)$ and storage $s_{d}(\eps)$, and preprocessing time $O(n \log \inv \eps + b_d(\eps))$ it is possible to preprocess $S$ into an ANN data structure with
\[
\begin{aligned}
& \hbox{Query time:~} O\kern-2pt \left(\log n + t_{d+1}(\eps) \cdot \log \inv{\eps}\right)\quad
\textrm{Space:~} O\kern-2pt \left(n\,\log\inv{\eps} + n \, \frac{s_{d+1}(\eps)}{t_{d+1}(\eps)} \right) \\
& \textrm{Preprocessing:~} O\kern-2pt \left( n \log n \kern+1pt \log \inv{\eps} +  n \, \frac{b_{d+1}(\eps)}{t_{d+1}(\eps)}\right).
\end{aligned}
\]
\end{lemma}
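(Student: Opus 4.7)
The plan is to combine the well-separated reduction already stated (Lemma~\ref{lem:wsann-reduction}) with a hierarchical spatial decomposition of $S$, so that an arbitrary ANN query is answered by a small number of well-separated subqueries. The lifting map that takes each $p \in \RE^d$ to a tangent hyperplane of the unit paraboloid in $\RE^{d+1}$ already converts ``closest point in Euclidean distance'' into a lower-envelope query, which is precisely an APM-type query in $\RE^{d+1}$; this is why the query, storage, and preprocessing costs appearing in the lemma are expressed in terms of $t_{d+1}$, $s_{d+1}$, and $b_{d+1}$ rather than $t_d$, $s_d$, $b_d$.

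Concretely, I would first build a compressed quadtree on $S$ in $O(n \log n)$ time, giving $O(n)$ cells. A standard packing argument shows that for any query cell $c$ and any ``scale'' $r = 2^{-i}$ with $i = 0, \ldots, O(\log \inv{\eps})$, the approximate nearest neighbor of a query inside $c$ at distance $\Theta(r)$ must lie in one of $O(1)$ boxes that are $\sigma$-well-separated from $c$ for some absolute constant $\sigma$. This produces $O(n \log \inv{\eps})$ well-separated (query-cell, candidate-box) pairs, which accounts for the additive $O(n \log \inv{\eps})$ terms in storage and for the $O(n \log n \log \inv{\eps})$ term in preprocessing (via point location setup on the quadtree). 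At query time one descends the quadtree in $O(\log n)$ time and, because only $O(\log \inv{\eps})$ scales are relevant, issues $O(\log \inv{\eps})$ well-separated subqueries.

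For each well-separated pair I would invoke Lemma~\ref{lem:wsann-reduction}, which in turn uses the supplied APM data structure in $\RE^{d+1}$ on the $m$ candidate points. The key trick to extract the $s_{d+1}/t_{d+1}$ and $b_{d+1}/t_{d+1}$ factors (rather than $s_{d+1}$ and $b_{d+1}$) is a bucketing space-time tradeoff: instead of building one monolithic APM structure on the $m$ candidates, I partition them into $\Theta(m/t_{d+1}(\eps))$ groups of size $\Theta(t_{d+1}(\eps))$ and build one APM structure per group. A single well-separated subquery then touches every group (an additional $t_{d+1}(\eps)$ work, which matches the query-time expression) but the total storage becomes $O((m/t_{d+1}) \cdot s_{d+1})$ and the total preprocessing $O((m/t_{d+1}) \cdot (t_{d+1} \log \inv{\eps} + b_{d+1})) = O(m \log \inv{\eps} + (m/t_{d+1}) \cdot b_{d+1})$. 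Summing over all $O(n \log \inv{\eps})$ candidate pairs, whose total candidate-mass is $O(n \log \inv{\eps})$, yields exactly the claimed bounds.

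The main obstacle is aligning all of these parameters without incurring spurious logarithmic factors. In particular one must verify that (i) the packing argument really produces $O(1)$ candidate boxes per (cell, scale) pair, (ii) the group-size choice $t_{d+1}(\eps)$ is consistent with the precondition that the APM preprocessing cost be of the form $O(n \log \inv{\eps} + b_{d+1}(\eps))$---this is precisely the shape of hypothesis needed so that splitting into groups does not blow up the $n \log \inv{\eps}$ term beyond $n \log \inv{\eps}$ in total---and (iii) the $O(\log n)$ term in the query time is absorbed by a single point-location step rather than being multiplied by $\log \inv{\eps}$. Once these bookkeeping steps are carried out, the bounds in Lemma~\ref{lem:ann-reduction} follow by direct summation.
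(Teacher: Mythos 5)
The paper does not prove this lemma; it is imported from prior work (\cite[Lemma~9.3 of the journal version]{AFM11}, with the preprocessing bound from \cite[Lemma~8.3]{AVD-JACM}), so there is no internal proof to compare against. Your proposal is a from-scratch attempt, and the bucketing step contains a genuine error.

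You partition the $m$ candidates of a well-separated pair into $\Theta(m/t_{d+1}(\eps))$ groups of size $\Theta(t_{d+1}(\eps))$, build one APM instance per group, and assert that a subquery ``touches every group'' at an additional cost of only $O(t_{d+1}(\eps))$. That cannot hold: merely visiting $\Theta(m/t_{d+1}(\eps))$ groups already costs $\Omega(m/t_{d+1}(\eps))$, and if each group's APM query itself costs $t_{d+1}(\eps)$ the total is $\Theta(m)$, not $O(t_{d+1}(\eps))$. The bucketing also moves the storage in the wrong direction: $s_{d+1}(\eps)$ is a function of $\eps$ alone and does not shrink when restricted to a group of $\Theta(t_{d+1}(\eps))$ candidates, so $m/t_{d+1}$ copies cost $\Theta((m/t_{d+1})\,s_{d+1})$ rather than the single $s_{d+1}$ you would pay with one monolithic structure; and summing this over pairs whose total candidate mass is $O(n\log\inv\eps)$ carries an extra $\log\inv\eps$ factor beyond the claimed $n\,s_{d+1}/t_{d+1}$. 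The true source of the $s_{d+1}/t_{d+1}$ and $b_{d+1}/t_{d+1}$ factors in the cited proof is a thresholding, not a bucketing: cells of the AVD whose representative set has size at most $t_{d+1}(\eps)$ are answered by brute force (query time and space $O(t_{d+1}(\eps))$, no APM instance at all), and only the remaining heavy cells get one APM structure each. The AVD separation and packing lemmas then bound the number of heavy cells by $O(n/t_{d+1}(\eps))$ --- a sharper structural fact than the $O(n\log\inv\eps)$ total-mass bound that a quadtree packing argument supplies --- and it is exactly this count that makes the space and preprocessing bounds in the lemma come out without extra logarithmic factors.
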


Applying this reduction with the data structure from Theorem~\ref{thm:apm} and setting $t_{d+1}(\eps) = 1/(m \cdot \eps^{d/2})$ for $\log \inv{\eps} \leq m \leq 1/(\eps^{d/2}\log\inv\eps)$, we obtain Theorem~\ref{thm:ann}. 

Next, we show how to obtain a data structure for approximate directional width queries (Theorem~\ref{thm:widthqueries}) using the data structure for approximate polytope membership from Theorem~\ref{thm:apm}. The proof uses standard duality and binary search techniques.

\begin{proof}
Given a polytope $P$ (defined as the intersection of $n$ halfspaces) that contains the origin $O$, we define a ray-shooting query (from the origin) as follows. Let $v$ be a query direction and let $r$ denote the ray emanating from $O$ in direction $v$. The result of the query $q(P,v)$ is the length of $r \cap P$. In the $\eps$-approximate version, any answer between $q(P,v)$ and $(1+\eps) q(P,v)$ is acceptable.

If we place the origin $O$ in the center of the John ellipsoid of $P$, we have $q(P,-v) = \Theta(q(P,v))$ for all $v$. Thus, a constant approximation of $q(P,v)$ can be obtained by replacing $P$ by its circumscribing John ellipsoid. We can then refine the approximation using binary search and approximate polytope membership queries. (To see this, consider the point $p \in \partial P$ that is hit by the ray, and let $h$ be any supporting hyperplane at $p$. Consider the slab containing $P$ that is bounded by this hyperplane and the parallel hyperplane on the opposite side of $P$. By properties of the John ellipsoid, the origin lies within a central region of the slab. It follows from basic geometry that if we expand the slab by $\eps$ times its width, the ratio between ray distances to the expanded slab boundary and the original slab boundary is $1 + O(\eps)$. An $\eps$-APM query with respect to $P$ along this ray will achieve an approximation error that is no greater.) By a suitable adjustment to the constant factor, we can obtain an $\eps$-approximation to $q(P,v)$ after $O(\log \inv \eps)$ membership queries.

The polar body $P^*$ (defined as the convex hull of $n$ points) of $P$ has the property that $\width_v(P^*) = 1/q(P,v) + 1/q(P,-v)$. Therefore, we can $\eps$-approximate the width of a set of points $P^*$ using $O(\log \inv \eps)$ approximate polytope membership queries on $P$ and Theorem~\ref{thm:widthqueries} follows.
\end{proof}

Agarwal, Matou{\v{s}}ek, and Suri~\cite{AMS92} showed that the diameter of a point set $S$ can be $\eps$-approximated by computing the maximum width of $S$ among $O(1/\eps^{(d-1)/2})$ directions. Therefore, Theorem~\ref{thm:diameter} follows immediately from Theorem~\ref{thm:widthqueries}.

\pdfbookmark[1]{References}{s:ref}
\bibliographystyle{abbrv}
\bibliography{shortcuts,coreset}

\end{document}